\documentclass{article}

\usepackage{microtype}
\usepackage{graphicx}
\usepackage{subcaption}
\usepackage{booktabs} 
\usepackage{comment}
\usepackage{hyperref}
\usepackage{xcolor}

 \usepackage[preprint]{icml2026}

\usepackage{amsmath}
\usepackage{amssymb}
\usepackage{mathtools}
\usepackage{amsthm}

\usepackage[capitalize,noabbrev]{cleveref}

\theoremstyle{plain}
\newtheorem{theorem}{Theorem}[section]
\newtheorem{proposition}[theorem]{Proposition}
\newtheorem{lemma}[theorem]{Lemma}

\theoremstyle{definition}
\newtheorem{definition}[theorem]{Definition}
\newtheorem{assumption}[theorem]{Assumption}
\theoremstyle{remark}
\newtheorem{remark}[theorem]{Remark}
\newcommand{\E}{\mathbb{E}}
\newcommand{\PP}{\mathbb{P}}
\newcommand{\Var}{\textrm{Var}}

\usepackage[textsize=tiny]{todonotes}

\icmltitlerunning{Beyond First-order Asymptotics in Sequential Mean Testing}

\begin{document}

\twocolumn[
  \icmltitle{Beyond First-order Asymptotics in Sequential Mean Testing}



   \begin{icmlauthorlist}
    \icmlauthor{Vikas Deep}{yyy}
    \icmlauthor{Shubhada Agrawal}{comp}
  \end{icmlauthorlist}
  \icmlaffiliation{yyy}{Department of Information Systems and Analytics at the National University of Singapore}
  \icmlaffiliation{comp}{Department of Electrical Communication Engineering (ECE) at the Indian Institute of Science, Bengaluru}

  \icmlcorrespondingauthor{Vikas Deep}{vikas\_deep@nus.edu.sg}
  \icmlcorrespondingauthor{Shubhada Agrawal}{shubhada@iisc.ac.in}

  \icmlkeywords{Machine Learning, ICML}

  \vskip 0.3in
]



\printAffiliationsAndNotice{}  

\begin{abstract}
We revisit the problem of sequentially testing the mean of bounded distributions in a level-$\alpha$ power-one framework. We study a $\mathrm{KL_{inf}}$-based sequential test that is known to attain the information-theoretic lower bound on the expected stopping time with exact constants as $\alpha \to 0$. Going beyond first-order asymptotics, we establish a central limit theorem (CLT) for the stopping time of this test. Our analysis proceeds in two steps. First, we prove a  novel CLT for the $\mathrm{KL_{inf}}$ statistic itself, characterizing its fluctuations around its deterministic limit. We then leverage this result to show that the stopping time, centered appropriately and scaled by $\sqrt{\log(1/\alpha)}$, converges in distribution to a Gaussian limit with an explicit variance. This yields a second-order characterization of an asymptotically optimal sequential test for bounded distributions. Finally, we present numerical experiments that corroborate our theoretical findings.
\end{abstract}


\section{Introduction}
\label{sec_introduction}

Sequential mean testing is a fundamental primitive underlying online experimentation, adaptive monitoring, and sequential learning. Informally, with streaming data, we want to decide as quickly as possible whether the mean is at a target level \(m_o\), while controlling false alarms at level \(\alpha\in(0,1)\). 

This problem arises in applications such as A/B testing and adaptive monitoring, where samples are costly, decisions are made online, and the data distribution is rarely parametric. Motivated by this, we consider the setting of the sequential mean testing problem where the underlying data distribution is nonparametric, with support  bounded in $[0,1]$. Specifically, we observe an i.i.d.\ sequence $X_1, X_2, \dots$ with a common probability distribution
$P$ supported on $[0,1]$ and mean $m(P) := \mathbb E_P[X]$.
Given $m_o \in (0,1)$, we study the classical sequential mean testing problem
\[
H_0 : m(P) = m_o
\qquad\text{versus}\qquad
H_1 : m(P) \neq m_o
\]
in an $\alpha$-correct power-one framework.

Given $\alpha > 0$, an \emph{$\alpha$-correct (or level $\alpha$) power-one sequential test} (initially studied by \citet{darling1967iterated} for parametric models) controls type~I error
at level $\alpha$, while attaining power one under the alternative.
Such tests are specified by a stopping time $\tau_\alpha$, where stopping corresponds to rejecting $H_0$.
Formally, if $\mathcal{P}$ and $\mathcal{Q}$ denote the sets of distributions in $H_0$ and $H_1$, then $\tau_\alpha$
satisfies
\[
\sup_{p \in \mathcal{P}}~ p(\tau_\alpha < \infty) \le \alpha,
\quad \text{and} \quad
\inf_{q \in \mathcal{Q}}~ q(\tau_\alpha < \infty) = 1.
\]

Recently, \citet{agrawal2025stopping} developed a general theory of $\alpha$-correct
power-one sequential tests for composite hypotheses, establishing tight information-theoretic
lower bounds and matching upper bounds on the \emph{expected stopping time}  $\mathbb{E}_q[\tau_\alpha]$ in two different asymptotic regimes. In the small-error regime
$\alpha\downarrow 0$, for the bounded nonparametric mean testing problem, their lower bounds and optimal
tests are in terms of a certain $\mathrm{KL}$-projection function, termed $\mathrm{KL}_{\inf}$ (defined in Section \ref{sec:bounded_model}). Moreover, these tests satisfy the 
first-order optimality:
 $$ \lim\limits_{\alpha\rightarrow 0} \frac{\mathbb{E}_q[\tau_\alpha]}{\log(1/\alpha)} = \frac{1}{\mathrm{KL}_{\inf}(q,m_o)}, $$
for every fixed alternative $q$ with $m(q) \neq m_o$.

However, the expected stopping time alone does not tell the full story: sequential procedures can exhibit substantial run-to-run variability, and in practice one often cares about predictability guarantees, such as the probability that the test terminates by a given deadline. For example, in sequential monitoring of safety metrics (e.g., extreme losses or excess risk in financial systems), a procedure with a small $\mathbb{E}[\tau_\alpha]$ may still be unsatisfactory if it has a non-negligible probability of stopping very late. Such delayed detection can incur high costs in safety-critical applications. These considerations motivate a second-order, distributional analysis of the stopping time of sequential tests.

In this work, \emph{we establish a Central Limit Theorem (CLT) for the stopping time} of an optimal sequential test for the nonparametric mean-testing problem. The CLT for $\tau_\alpha$ quantifies typical fluctuations around its deterministic growth rate, yielding explicit variance and a Gaussian approximation that refines first-order asymptotic optimality into a more complete characterization of stopping-time uncertainty.

CLTs for stopping times are well understood in the classical setting of first-passage times for random walks \citep{gut2009stopped,asmussen2003applied}. We refer the reader to Section \ref{sec:related_work} for a detailed literature review. These tools underpin stopping-time CLTs for parametric likelihood-ratio procedures, which are known to yield asymptotically optimal $\alpha$-correct, power-one tests for sequential mean testing in parametric models \citep{robbins1974expected}. This approach works for the parametric models because the log-likelihood ratio statistic in these settings can be expressed as a sum of i.i.d.\ log-likelihood increments, thereby inducing a fixed random-walk structure. However, this fails in the nonparametric settings.

The nonparametric $\mathrm{KL}_{\inf}$ statistic is defined through an optimization over distributions (equivalently, via a dual optimization problem), and the resulting process cannot be reduced to a fixed random walk in a straightforward manner. As a result, classical first-passage CLTs for random walks do not directly apply, and existing techniques do not appear to readily yield distributional limit theorems for stopping times for the optimal $\mathrm{KL}_{\inf}$-based power-one tests. Further, to the best of our knowledge, no stopping-time CLT is known for asymptotically optimal nonparametric power-one sequential tests. 

The main contributions of this work are as follows.
\begin{enumerate}
    \item We establish a CLT for an appropriate centering and scaling of the empirical $\mathrm{KL}_{\inf}$-statistic, characterizing its fluctuations around its deterministic limit under bounded nonparametric models (Theorem~\ref{lem:kl_inf_clt}).
    
    \item Building on the above, we derive the first stopping-time CLT for an asymptotically optimal nonparametric power-one sequential test based on $\mathrm{KL}_{\inf}$ statistic, obtaining an explicit Gaussian limit for the properly normalized stopping time (Theorem~\ref{thm:clt_bd}).
    
    \item As an application, we construct an asymptotically valid confidence interval for the leading-order normalized stopping-time center when $\alpha$ is small, using only a \emph{single simulation run} (Proposition~\ref{lem:ci-tau}). 
    
    \item We validate the theory through simulations (Section~\ref{sec:numerics}) as well as experiments on real-world DSSAT crop-yield dataset (\url{https://dssat.net}). 
\end{enumerate}

Our key idea is to leverage the fact that the dual representation of $\mathrm{KL}_{\inf}$ is a one-dimensional convex optimization problem, which rewrites the statistic as the maximum of a concave objective indexed by a single scalar parameter. We show that the empirical maximizer computed from the data converges to the maximizer of the dual problem for the true (unknown) distribution at \(O_p(1/{\sqrt{n}})\) rate. Crucially, once the maximizer is stable, the optimization itself does not contribute to the leading-order fluctuations, leading to the Gaussian limit in Theorem~\ref{lem:kl_inf_clt}. 



To pass from CLT for $\mathrm{KL}_{\inf}$-statistic to the stopping-time CLT, we use the fact that the stopping time grows on the order of \(\log(1/\alpha)\). Around that scale, the accumulated statistic behaves like a deterministic linear drift plus the fluctuation term identified above. A standard boundary-crossing inversion argument then converts fluctuations of the accumulated statistic into fluctuations of the time at which the boundary is crossed, giving a Gaussian limit for the properly rescaled stopping time with an explicit variance constant inherited from the first step (Theorem \ref{thm:clt_bd}). 

A key technical step in our approach is justifying the CLT for the empirical-$\mathrm{KL}_{\inf}$ statistic at the stopping time \(\tau_\alpha\). This corresponds to verifying an Anscombe-type condition (see \citet{anscombe1952large} and \citet[Chapter 1]{gut2009stopped}), ensuring that the fixed-\(n\) CLT transfers to a CLT at the stopping time $\tau_\alpha$. Standard proofs for this typically rely on i.i.d.\ increments and Kolmogorov's maximal inequality \citep[Chapter 1]{gut2009stopped}. However, since the empirical-\(\mathrm{KL}_{\inf}\) statistic does not directly yield an i.i.d.\ increments formulation, a Taylor expansion decomposes the statistic into a sum of a leading partial-sum term with i.i.d.\ increments, and a smaller remainder term (see Lemma \ref{lemm:anscombe}).




\noindent{\bf Paper organization.}
Section~\ref{sec:related_work} reviews related work on sequential testing and stopping-time limit theorems.
Section~\ref{sec:setup} introduces the bounded nonparametric mean testing model, defines $\mathrm{KL}_{\inf}$ and its dual representation, and describes the $\mathrm{KL}_{\inf}$-based stopping rule.
Section~\ref{sec:results} presents our main results, including a CLT for the empirical $\mathrm{KL}_{\inf}$ statistic and the resulting stopping time $\tau_\alpha$, with proofs deferred to the appendices.
Section \ref{sec:application} contains the results related to application of stopping time CLT to construct confidence interval for stopping time. Section~\ref{sec:numerics} reports numerical experiments that support the theoretical findings, and we conclude with a discussion of future directions.

\section{Related Work}
\label{sec:related_work}

Our results connect two classical strands of work: 
(i) \emph{sequential testing} for mean and, (ii) distributional limit theorems for \emph{stopping times} (in particular, first-passage times). We briefly review the most relevant literature in each area below. 

\noindent{\bf Sequential testing. } Sequential testing originated in Wald’s framework with the sequential probability ratio test (SPRT), which is optimal for simple-vs-simple parametric hypotheses: among all tests satisfying given error constraints, it minimizes the expected number of samples \citep{wald1992sequential,wald1948optimum}. Further, power-one and $\alpha$-correct sequential tests were first studied by \citet{darling1967iterated}. 

In classical parametric settings, the optimality of sequential tests is characterized by a certain Kullback-Leibler (KL) divergence; see early optimality results and lower bounds in sequential design and testing \citep{siegmund2013sequential,chernoff1992sequential}. In nonparametric settings (eg, mean testing with bounded support), the relevant information quantity becomes a KL-projection onto the null set, often denoted as a $\mathrm{KL}_{\inf}$. This KL-projection function has previously appeared in the lower bounds in the multi-armed bandit literature, dating back to works of \citet{lai1985asymptotically,burnetas1996optimal}, and bandit algorithms designed using an empirical version of it have been shown to be optimal in a wide variety of nonparametric bandit settings \citep{honda2010asymptotically,honda2015non,agrawal2020optimal,pmlr-v134-agrawal21a,agrawal2021optimal, jourdan2022top}.  Furthermore, this KL-projection function has previously appeared in the construction of asymptotically optimal confidence intervals for the mean of a distribution and the average treatment effect in A/B testing \citep{deep2024asymptotically, deep2025asymptotic}.

Recently, \citet{agrawal2025stopping} develop a general theory of power-one, level-$\alpha$ sequential tests for composite hypotheses. However, they focus on the expected stopping time. In contrast, the focus of this work is to understand other properties of these optimized tests.  

\noindent{\bf Stopping-time limit theorems and first-passage CLTs. } Distributional limit theorems for stopping times are classical in applied probability and
sequential analysis, with a large literature devoted to first-passage times of drifted random
walks and their refinements via renewal theory. These results yield both laws of large numbers
and central limit theorems for stopping times \citep{gut2009stopped,asmussen2003applied}. In parametric settings,
log-likelihood ratio statistics (and their mixture or generalized variants) evolve additively, allowing first-passage CLTs to be applied directly to obtain
distributional approximations for stopping times. Another independent work, \citet{mukhopadhyay2020asymptotic}, provides sufficient conditions for establishing a CLT for a stopping time defined through a test statistic, assuming that the test statistic itself already satisfies a CLT. Establishing such a CLT for our test statistic is itself one of the main technical contributions of this work. Furthermore, our stopping rule does not directly fit into their framework, preventing a black-box application of their results.

A related line of work studies second-order asymptotics for sequential hypothesis testing from an error-exponent perspective. In particular, \citet{li2020second} characterize the second-order terms in the optimal type-I and type-II error exponents in classical simple-versus-simple sequential hypothesis testing, under either an expected sample-size constraint or a probabilistic constraint on the stopping time. More recently, \citet{pan2022asymptotics} extend this perspective to sequential composite hypothesis testing under probabilistic stopping-time constraints, with the composite class consisting of distributions over a finite alphabet. These works are complementary to ours: their focus is on second-order refinements of error exponents as the sample-size constraint $n\to\infty$, whereas our focus is on the distributional fluctuations of the stopping time itself, in the small-error regime $\alpha\downarrow 0$, for a nonparametric composite-versus-composite mean-testing problem over bounded distributions.




\section{Preliminaries: Setup and Background}
\label{sec:setup}

In this section, we formally introduce the setting and develop the necessary background. Let $\mathcal{B}$ denote the set of all probability
measures supported on $[0,1]$. For any $ P \in\mathcal{B}$, write
$m(P):=\mathbb{E}_P[X]$ for its mean. We observe an i.i.d.\ sequence
$X_1,X_2,\ldots\in[0,1]$ with common law $P\in\mathcal{B}$, and define the natural filtration
$\mathcal{F}_n:=\sigma(X_1,\ldots,X_n)$.
We fix  $m_o\in(0,1)$ and an error tolerance level $\alpha\in(0,1)$. We use the notation $\mathbb{P}_P(\cdot)$  to denote the probability of the input event under the measure on an infinite length sequence $X_1, X_2, X_3, \ldots$, generated i.i.d. from $P$. Similarly, we now use $\mathbb{E}_P[\cdot]$ to denote the expectation under this joint measure.

Next, we fix $m_o \in (0,1)$, an error tolerance level $\alpha \in (0,1)$, and consider the following mean testing problem in this work,
\begin{equation}
\label{eq:setup-hypotheses}
\begin{aligned}
H_0 &:~ P \in \mathcal{P}^{\rm bd}
:= \{p\in\mathcal{B} : m(p) = m_o\}, \\
H_1 &:~ P \in \mathcal{Q}^{\rm bd}
:= \{p\in\mathcal{B} : m(p) \neq m_o\}.
\end{aligned}
\end{equation}
For any $q,p \in \mathcal{B}$, $\mathrm{KL}(q,p)$ denotes the Kullback-Leibler divergence between $q$ and $p$ (with the usual convention
$\mathrm{KL}(q,p)=+\infty$ if $q\not\ll p$). 

\subsection{$\alpha$-correct Power-one Sequential Tests}

A sequential test is specified by a stopping time $\tau_\alpha$ (with respect to
$(\mathcal{F}_n)_{n\ge 1}$) taking values in $\mathbb{N}\cup\{\infty\}$, with the convention that
the procedure rejects $H_0$ at time $\tau_\alpha$ if $\tau_\alpha<\infty$; otherwise, the test never rejects. 

\begin{definition}[$\alpha$-correct power-one sequential test]
\label{def:delta-correct}
For $\alpha \in (0,1)$, a stopping time $\tau_\alpha$ is called an
\emph{$\alpha$-correct power-one sequential test} for \eqref{eq:setup-hypotheses} if
\begin{align*}
\sup_{p \in \mathcal{P}^{\rm bd}} p(\tau_\alpha < \infty) &\le \alpha, \\
\inf_{q \in \mathcal{Q}^{\rm bd}} q(\tau_\alpha < \infty) &= 1.
\end{align*}
\end{definition}

To formally introduce the known asymptotically optimal sequential test, we first introduce the \(\mathrm{KL}_{\inf}\) function, which will be used in the construction of the test.

\subsection{KL Projection Function:  $\mathrm{KL}_{\inf}$}\label{sec:bounded_model}
The KL projection function $\mathrm{KL}_{\inf}: \mathcal B \times [0,1] \to \Re_+$, where $\Re_+$ denotes the collection of all non-negative real numbers, is defined as follows.  For $P \in \mathcal{B}$ and $c\in (0,1)$, 
\[
\mathrm{KL}_{\inf}(P,c) \triangleq
\begin{cases}
 \mathrm{KL}_{\inf}^{+}(P,c), & c \ge m(P),\\
 \mathrm{KL}_{\inf}^{-}(P,c), & c < m(P),
\end{cases}
\]
with
\begin{align*}
 \mathrm{KL}_{\inf}^{+}(P,c)
&:=\inf_{Q\in\mathcal{B}:~m(Q)\ge c}\ \mathrm{KL}(P,Q),\\
 \mathrm{KL}_{\inf}^{-}(P,c)
&:=\inf_{Q\in\mathcal{B}:~m(Q)\le c}\ \mathrm{KL}(P,Q).
\end{align*}

Note that $\mathrm{KL}_{\inf}(P, \cdot)$ is an increasing function on $[m(P), 1]$ and decreasing on $[0, m(P)]$. 

\paragraph{Dual representation.}
While the (primal) definition of $\mathrm{KL}_{\inf}$ involves an optimization over the space of probability measures, which can be computationally inconvenient, $\mathrm{KL}_{\inf}$ is, by now, a well-understood object. In particular, it has a dual formulation that  provides a tractable and explicit characterization that is especially convenient for algorithmic and analytical purposes. In our proofs, we will move between the primal and dual, exploiting properties of both the problems. We therefore recall below the dual representations:
\begin{align}  \label{eq:kl_inf}
\mathrm{KL}_{\inf}^{+}(P,c)
&=
\sup_{\lambda\in \left[0, \frac{1}{1-c} \right]}\!
\ \mathbb{E}_P\left[\log\bigl(1-\lambda(X-c)\bigr)\right], \\[0.5ex] \nonumber
\mathrm{KL}_{\inf}^{-}(P,c)
&=
\sup_{\lambda\in\left[-\frac{1}{c}, 0\right]}
\ \mathbb{E}_P\left[\log\bigl(1-\lambda(X-c)\bigr)\right].
\end{align}

\paragraph{Dual maximizers.} 
There exists a unique dual maximizer in~\eqref{eq:kl_inf} \citep{honda2010asymptotically} for any non-degenerate distribution $P$. We denote by $\lambda^\star(P)$ the value of $\lambda$ that achieves the supremum in the definition of $\mathrm{KL}_{\inf}(P,c)$. Specifically, for \(c \geq m(P)\), 
\[
\lambda^\star(P)\in\arg\!\max_{\lambda\in\left[0,\frac{1}{1-c}\right]}\ \mathbb{E}_q\!\left[\log\bigl(1-\lambda(X-c)\bigr)\right].
\]

 One can symmetrically define the maximizer for the other case. Whenever it is clear from the context, we will drop the dependence of this maximizer on $P$, and instead refer to it as $\lambda^\star$. 

\subsection{$\mathrm{KL}_{\inf}$- based Optimal Sequential Test}
\label{sec:moderate_dev_bd}
Now, we describe the optimal sequential test using the empirical-$\mathrm{KL}_{\inf}$ statistic, which is well known in the literature. Later, we will prove the CLT for the stopping time of this test. 

For $x\in [0,1]$ let $\delta_x$ denote a unit point mass at $x$. For $n\ge 1$, let
\[
\hat q_n := \frac{1}{n}\sum_{i=1}^n \delta_{X_i}
\]
denote the empirical distribution. Define
\begin{align}
\label{eq:stopping_rule_bd}
\tau_{\alpha}
&=
\inf\Bigl\{n\in\mathbb{N}:\ 
n\,\mathrm{KL}_{\inf}(\hat q_n,m_o)\ge \beta(n,\alpha)\Bigr\}, 
\end{align}
where $\beta(n,\alpha) = 1+ \log( \frac{2(1+n)}{\alpha})$. Using the martingale construction proposed in \citet[Lemma F.1]{agrawal2021optimal}, it is easy to see that $\tau_\alpha$ is an $\alpha$-correct stopping rule. Furthermore, for any $q \in \mathcal{Q}^{\rm bd}$,
\begin{align*}
\lim_{\alpha\to 0}\frac{\mathbb{E}_q[\tau_{\alpha}]}{\log(1/\alpha)}
= \frac{1}{\mathrm{KL}_{\inf}(q,m_o)} .
\end{align*}
The above equality (lower bound + upper bound) was implicitly proven in the multi-armed bandit literature \citep{agrawal2020optimal,jourdan2022top}, and later, made explicit in \citet{agrawal2025stopping}. 

It is worth noting that the asymptotically optimal stopping rule $\tau_\alpha$ defined above tracks the empirical analogue
\(n\,\mathrm{KL}_{\inf}(\hat q_n,m_o)\), which behaves like an accumulated evidences against \(H_0\), and compares it to a boundary $\beta(n,\alpha)$
to ensure \(\alpha\)-level type-I error control.

Our goal in this paper is to study finer distributional properties of the associated stopping
time $\tau_\alpha$. 

\section{Main Results} \label{sec:results}
In this section, we will establish a CLT for $\tau_\alpha$ as $\alpha \downarrow 0$. Towards this, we will first establish a CLT for the empirical \(\mathrm{KL}_{\inf}(\hat q_n,m_o)\) statistic, which is the key technical contribution of the paper and is of independent interest. 

For $I_{m_o} := [-1/m_o, 1/(1-m_o)]$, $\lambda \in I_{m_o}$, and  $x\in [0,1]$,  define
\[
\ell(\lambda,x)
:=
\log(1-\lambda(x-m_o)).
\]
Then, $(1-\lambda(x-m_o)) \ge 0$. Further, it is $0$ only if either $(\lambda, x) = (1/(1-m_o), 1)$ or $(\lambda, x) = (-1/m_o, 0)$. This observation will be useful in the proofs of our results. 

\citet{honda2010asymptotically} show that $\mathrm{KL}_{\inf}(\hat{q}_n, m_o) \to \mathrm{KL}_{\inf}(q,m_o)$ as $n\to \infty$ almost surely (as $\hat{q}_n \Rightarrow q$ almost surely, and $\mathrm{KL}_{\inf}(\cdot, m_o)$ is continuous in the weak topology). Thus, we will center the statistic around $\mathrm{KL}_{\inf}(q,m_o)$ in our CLT.

We now state a minor technical condition. In Appendix~\ref{sec_ass}, we show that this assumption is mild and holds for several commonly studied bounded-support distributions, including Bernoulli distributions, the uniform distribution, and most Beta distributions.

\begin{assumption}
\label{ass_1}
For $q\in\mathcal{B}$ and $m_o\in(0,1)$ with $m_o\neq m(q)$, assume the following: if $m_o > m(q)$ and $\mathbb{E}_q\!\left[(1-m_o)/(1-X)\right]=1$, then $\mathbb{E}_q\!\left[1/(1-X)^2\right]<\infty$; and if $m_o < m(q)$ and $\mathbb{E}_q\!\left[m_o/X\right]=1$, then $\mathbb{E}_q\!\left[1/X^2\right]<\infty$.
\end{assumption}
\begin{theorem} \label{lem:kl_inf_clt}

Fix $q\in \mathcal{Q}^{\rm bd}$ and $m_o\in(0,1)$. Let
$\hat q_n=\tfrac1n\sum_{i=1}^n \delta_{X_i}$ be the empirical distribution of
i.i.d.\ $X_i\sim q$. Then, under Assumption \ref{ass_1}, as $n \to \infty$
\begin{align*}
\sqrt{n}&\Big(
\mathrm{KL}_{\inf}(\hat q_n,m_o)
-\mathrm{KL}_{\inf}(q,m_o)
\Big)\\
&\qquad \xRightarrow{d}\
\mathcal{N}\!\big(0,\sigma^2(q,m_o)\big),
\end{align*}
where
\[
\sigma^2(q,m_o)
=\operatorname{Var}_q\!\big(\ell(\lambda^\star,X)\big)
<\infty .
\]
\end{theorem}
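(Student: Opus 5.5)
We view $\mathrm{KL}_{\inf}$ as the value of an M-estimation problem and run a standard sandwich argument. Take the case $m_o>m(q)$; the other case is symmetric. By continuity of the mean, $m(\hat q_n)<m_o$ for all large $n$ almost surely. Hence both $\mathrm{KL}_{\inf}(\hat q_n,m_o)$ and $\mathrm{KL}_{\inf}(q,m_o)$ are given by the $\mathrm{KL}_{\inf}^+$ dual in \eqref{eq:kl_inf} over the same compact interval $\Lambda=[0,1/(1-m_o)]$. Write
\[
\Phi(\lambda)=\mathbb{E}_q[\ell(\lambda,X)],\qquad \hat\Phi_n(\lambda)=\tfrac1n\textstyle\sum_{i=1}^n\ell(\lambda,X_i),
\]
and let $\hat\lambda_n$ be the maximizer of $\hat\Phi_n$ on $\Lambda$.

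\textbf{Step 1 (sandwich).} Optimality of $\lambda^\star$ for $\Phi$ and of $\hat\lambda_n$ for $\hat\Phi_n$ gives
\[
\hat\Phi_n(\lambda^\star)-\Phi(\lambda^\star)\ \le\ \mathrm{KL}_{\inf}(\hat q_n,m_o)-\mathrm{KL}_{\inf}(q,m_o)\ \le\ \hat\Phi_n(\hat\lambda_n)-\Phi(\hat\lambda_n).
\]
The left side times $\sqrt n$ converges to $\mathcal N(0,\operatorname{Var}_q(\ell(\lambda^\star,X)))$ by the classical CLT. This requires $\mathbb{E}_q[\ell(\lambda^\star,X)^2]<\infty$. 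That holds automatically when $\lambda^\star<1/(1-m_o)$, because $\ell$ is then bounded on $[0,1]$. In the boundary case $\lambda^\star=1/(1-m_o)$ we have $\ell(\lambda^\star,x)=\log\frac{1-x}{1-m_o}$, and a log singularity at $x=1$ has finite second moment as long as $\mathbb{E}_q[\log(1-X)]>-\infty$, which holds whenever $\mathrm{KL}_{\inf}<\infty$. It then suffices to show
\[
\sqrt n\big[(\hat\Phi_n-\Phi)(\hat\lambda_n)-(\hat\Phi_n-\Phi)(\lambda^\star)\big]\to0 \quad\text{in probability}.
\]

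\textbf{Step 2 (rate for $\hat\lambda_n$).} We show $\hat\lambda_n-\lambda^\star=O_p(n^{-1/2})$. The derivative is $\partial_\lambda\ell(\lambda,x)=-(x-m_o)/(1-\lambda(x-m_o))$, and $\Phi$ is strictly concave with $\Phi''(\lambda)=-\mathbb{E}[(X-m_o)^2/(1-\lambda(X-m_o))^2]<0$ (since $q$ is non-degenerate, as $m(q)\ne m_o$ and $\mathrm{KL}_{\inf}<\infty$). Consistency of $\hat\lambda_n$ follows from uniform convergence of concave functions on compacts, using the uniqueness of $\lambda^\star$ cited from \citet{honda2010asymptotically}. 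There are two subcases.

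\emph{Interior $\lambda^\star$.} Expand the first-order condition $\hat\Phi_n'(\hat\lambda_n)=0$ around $\lambda^\star$. The derivative and second derivative are bounded near $\lambda^\star$, which gives $\sqrt n(\hat\lambda_n-\lambda^\star)=-\sqrt n\,\hat\Phi_n'(\lambda^\star)/\Phi''(\lambda^\star)+o_p(1)=O_p(1)$.

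\emph{Boundary $\lambda^\star=1/(1-m_o)$.} This corresponds to $\mathbb{E}_q[(1-m_o)/(1-X)]\le1$. Here Assumption~\ref{ass_1} enters. In the critical subcase $\mathbb{E}_q[(1-m_o)/(1-X)]=1$, we have $\Phi'(\lambda^\star)=0$ at the endpoint, and the score $\partial_\lambda\ell(\lambda^\star,X)=\frac{1-m_o}{1-X}-1$ needs a finite second moment to have an $O_p(n^{-1/2})$ fluctuation. That is exactly $\mathbb{E}_q[1/(1-X)^2]<\infty$. The curvature may blow up near the endpoint, but it only helps: it makes $-\hat\Phi_n''$ large, which controls the deviation. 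If instead $\mathbb{E}<1$, then $\Phi'(\lambda^\star)>0$ strictly, and $\hat\lambda_n=\lambda^\star$ with probability tending to one.

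\textbf{Step 3 (stochastic equicontinuity).} On the event that $\hat\lambda_n$ lies within $n^{-1/2}M$ of $\lambda^\star$, the mean value theorem gives
\[
\big|(\hat\Phi_n-\Phi)(\hat\lambda_n)-(\hat\Phi_n-\Phi)(\lambda^\star)\big|\le |\hat\lambda_n-\lambda^\star|\sup_{|\lambda-\lambda^\star|\le Mn^{-1/2}}|\hat\Phi_n'(\lambda)-\Phi'(\lambda)|.
\]
The supremum is $o_p(1)$ by a uniform LLN, so the product is $o_p(n^{-1/2})$. The uniform LLN uses an integrable envelope for $\partial_\lambda\ell$ near $\lambda^\star$: this is immediate in the interior case, and in the boundary case it follows from $\mathbb{E}[1/(1-X)]<\infty$ together with monotonicity of $\lambda\mapsto 1/(1-\lambda(x-m_o))$. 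Combining Steps 1–3 with Slutsky's lemma gives the theorem. Finiteness of $\sigma^2$ is what was checked in Step 1.

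\textbf{Main obstacle.} I expect the main obstacle to be the boundary case $\lambda^\star=1/(1-m_o)$ with $q$ charging mass near $1$ (and symmetrically $\lambda^\star=-1/m_o$ with mass near $0$). In this case $\ell$ and its derivatives are unbounded, and both the $\sqrt n$-rate of Step 2 and the envelope in Step 3 must be argued with care, using one-sided monotonicity and Assumption~\ref{ass_1} rather than smoothness.
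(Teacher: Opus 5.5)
Your proposal is correct. It uses the same ingredients as the paper: the classical CLT for $\frac1n\sum_i\ell(\lambda^\star,X_i)$ (the paper's $T_{2,n}$), a $\sqrt n$-rate for the empirical dual maximizer obtained by linearizing the first-order condition, and a mean-value bound on the remainder. The rate argument is one-sided in the critical case $\mathbb{E}_q[(1-m_o)/(1-X)]=1$, which is where Assumption~\ref{ass_1} enters.

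The difference is mostly in packaging. The paper writes the remainder as $T_{1,n}=-\sqrt n(\lambda_n^\star-\lambda^\star)A_n$ and must show $A_n=\Psi(c_n,\hat q_n)\to\Psi(\lambda^\star,q)=0$. That step uses the first-order condition at $\lambda^\star$, so the non-critical boundary case has to be handled separately. Your sandwich makes the bracket width $(\hat\Phi_n-\Phi)(\hat\lambda_n)-(\hat\Phi_n-\Phi)(\lambda^\star)$, which you bound by $|\hat\lambda_n-\lambda^\star|\sup|\hat\Phi_n'-\Phi'|$. This is a centered quantity, so all cases go through uniformly.

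Likewise, in the critical case you only need a lower bound on the curvature, not the convergence $\Psi'(c_n,\hat q_n)\to\Psi'(\lambda^\star,q)$ that the paper proves by dominated convergence under $\mathbb{E}_q[(1-X)^{-2}]<\infty$. So in your route Assumption~\ref{ass_1} is used only for the CLT of the score at $1/(1-m_o)$. You should write that lower bound out, since ``it only helps'' is not yet an argument; for example, $-\hat\Phi_n''(\lambda)\ge(1-m_o)^2\frac1n\sum_i(X_i-m_o)^2$ for all $\lambda\in[0,1/(1-m_o)]$.

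One justification in Step~1 is wrong as written. You infer $\mathbb{E}_q[\log^2(1-X)]<\infty$ from $\mathbb{E}_q[\log(1-X)]>-\infty$, but that implication is false in general: if $-\log(1-X)$ has tail $(1+y)^{-3/2}$, it has finite mean and infinite variance. This matters because $\sigma^2(q,m_o)<\infty$ is part of the statement you are proving. The correct reason is the characterization you give yourself in Step~2. The boundary case means $\mathbb{E}_q[(1-m_o)/(1-X)]\le 1$, so $e^{-\log(1-X)}=1/(1-X)$ is integrable, and hence $\log(1-X)$ has all moments. This is exactly the $\theta=-1$ bound in the paper's Lemma~\ref{lem:finiteMGF}.

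A smaller point: strict concavity of your $\Phi$ needs only $q(X\neq m_o)>0$, which follows from $m(q)\neq m_o$. It does not need $q$ to be non-degenerate, and Dirac masses at $m\neq m_o$ are in fact allowed. The paper disposes of those separately; your sandwich handles them trivially, since then $\hat\Phi_n=\Phi$.
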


As mentioned earlier, the main difficulty in analyzing the CLT for $\mathrm{KL}_{\inf}$ statistic is that it is defined via an optimization problem, which necessitates a case-by-case treatment depending on whether the dual maximizer lies in the interior or on the boundary. We present a proof sketch of this theorem in Section~\ref{sec:proof_sketch}, and defer the complete proof to Appendix~\ref{app:proof_lem:kl_inf_clt}.

Next, to get the centering in the CLT for $\tau_\alpha$, we need to find its almost sure limit. To this end, we first prove a result which states that ${\tau_\alpha}/{\log(1/\alpha)}$ converges to  ${1}/{\mathrm{KL}_{\inf}(q,m_o)}$ almost surely when alternate hypothesis is true. Next result formalizes this.

\begin{lemma} \label{lemma_almost_sure}
Fix  $q\in\mathcal{Q}^{\rm bd}$.
\begin{equation}
\mathbb{P}_q \left(\lim_{\alpha\to 0}\frac{\tau_{\alpha}}{\log(1/\alpha)}
=
\frac{1}{\mathrm{KL}_{\inf}(q,m_o)} \right) = 1.
\end{equation}
\end{lemma}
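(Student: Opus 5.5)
We argue pathwise on the probability-one event $\Omega_0$ on which $Z_n := \mathrm{KL}_{\inf}(\hat q_n,m_o)\to \mathrm{KL}_{\inf}(q,m_o)=:K$. This event has full measure by the almost sure continuity result of \citet{honda2010asymptotically} recalled before Assumption~\ref{ass_1}. Since $q\in\mathcal{Q}^{\rm bd}$ and $m(q)\neq m_o$, we have $K>0$ (any $Q$ feasible for the projection differs from $q$, and the infimum is attained by weak compactness and lower semicontinuity of $\mathrm{KL}$). Fix $\omega\in\Omega_0$ and $\epsilon\in(0,K)$. Then there is $N_\epsilon(\omega)$ such that $|Z_n-K|\le\epsilon$ for all $n\ge N_\epsilon$. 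The idea is to sandwich $\tau_\alpha$ between deterministic first-passage times of the linear drifts $n(K\pm\epsilon)$ across the slowly growing boundary $\beta(n,\alpha)=1+\log(2(1+n)/\alpha)$.

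\textbf{Step 1: $\tau_\alpha<\infty$ and $\tau_\alpha\to\infty$.} First, $\tau_\alpha\ge 1$ always, and for each fixed $n$ we have $\beta(n,\alpha)\to\infty$ as $\alpha\to0$. Because $Z_k$ is bounded for fixed $k$ (say by $\mathrm{KL}_{\inf}$ of the finite empirical measure, which is finite when the projection is feasible), $\tau_\alpha>n$ for all sufficiently small $\alpha$. Hence $\tau_\alpha\to\infty$ pathwise as $\alpha\to0$. Second, $n(K-\epsilon)-\log(1+n)\to\infty$, so $\tau_\alpha<\infty$ on $\Omega_0$.

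\textbf{Step 2: upper bound.} Let $n_+(\alpha)$ be the smallest $n\ge N_\epsilon$ with $n(K-\epsilon)\ge 1+\log(2(1+n)/\alpha)$. At this $n$ we have $nZ_n\ge n(K-\epsilon)\ge \beta(n,\alpha)$, so $\tau_\alpha\le n_+(\alpha)$. Since $\log(1+n)=o(n)$, an elementary computation gives $n_+(\alpha)\le (1+o(1))\log(1/\alpha)/(K-\epsilon)$. Therefore $\limsup_{\alpha\to 0}\tau_\alpha/\log(1/\alpha)\le 1/(K-\epsilon)$.

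\textbf{Step 3: lower bound.} By Step 1, for small $\alpha$ we have $\tau_\alpha\ge N_\epsilon$, and hence $\tau_\alpha(K+\epsilon)\ge \tau_\alpha Z_{\tau_\alpha}\ge \beta(\tau_\alpha,\alpha)\ge \log(1/\alpha)$. This gives $\liminf_{\alpha\to 0}\tau_\alpha/\log(1/\alpha)\ge 1/(K+\epsilon)$.

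Letting $\epsilon\downarrow0$ along a countable sequence, and intersecting the resulting countably many full-measure events, yields the claim. The only delicate point is Step 1: the statement that $\tau_\alpha\to\infty$ requires $Z_k<\infty$ for each fixed $k$. If the empirical measure makes $Z_k=+\infty$ (for example, all samples equal to $1$ when $m_o<1$ in the $\mathrm{KL}_{\inf}^-$ case), Step 1 fails as stated. In the dual form, however, $\log(1-\lambda(x-m_o))\le \log(1+\max(1/m_o,1/(1-m_o)))$, so $Z_k\le \log(1/\min(m_o,1-m_o))<\infty$ always. This bound is what we would invoke to close Step 1 cleanly.
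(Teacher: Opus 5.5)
Your proof is correct. It shares the paper's skeleton: a pathwise argument on the full-measure event where $\mathrm{KL}_{\inf}(\hat q_n,m_o)\to\mathrm{KL}_{\inf}(q,m_o)$; divergence $\tau_\alpha\to\infty$ from the uniform bound $\max\{\log(1/m_o),\log(1/(1-m_o))\}$, which is the same bound the paper takes from Honda--Takemura; and the same lower bound, read off from the stopping condition at $\tau_\alpha$ together with $\beta(n,\alpha)\ge\log(1/\alpha)$.

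The upper bound is where you differ. The paper inverts the non-stopping inequality at time $\tau_\alpha-1$, namely $(\tau_\alpha-1)\,\mathrm{KL}_{\inf}(\hat q_{\tau_\alpha-1},m_o)<1+\log(2\tau_\alpha/\alpha)$. Because the unknown $\log\tau_\alpha$ appears on the right, it must first prove an a priori claim $\tau_\alpha=O(\log(1/\alpha))$, via $\log x\le\delta x$. Only then can it conclude that $\beta(\tau_\alpha-1,\alpha)/\log(1/\alpha)\to1$ in its two-sided sandwich. You instead compare $\tau_\alpha$ with the deterministic first-passage time $n_+(\alpha)$ of the drift $n(K-\epsilon)$ across $\beta(n,\alpha)$. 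There the $\log(1+n)$ term is evaluated at a deterministic $n$ of order $\log(1/\alpha)$, so you need neither the a priori bound nor the $\tau_\alpha-1$ sandwich.

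Your route has two further advantages:
\begin{itemize}
\item The same comparison gives $\tau_\alpha<\infty$ directly, whereas the paper appeals to the power-one property.
\item You take the almost-sure convergence of $\mathrm{KL}_{\inf}(\hat q_n,m_o)$ from the weak-continuity result. The paper's proof cites Lemma~\ref{lem:mom_conv_LnL}, which is stated under Assumption~\ref{ass_1} even though the present lemma does not assume it.
\end{itemize}
What the paper's version buys is that its exact sandwich between $\tau_\alpha$ and $\tau_\alpha-1$ is the template reused in the proof of Theorem~\ref{thm:clt_bd}. There, replacing the statistic by $K\pm\epsilon$ would lose the $1/\sqrt{n}$ fluctuations that drive the CLT. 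For the first-order statement alone, your comparison is the more economical argument.

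Two small inaccuracies in your closing remark do not affect validity.
\begin{itemize}
\item The worry that $Z_k$ could be $+\infty$ is unfounded, since the primal and dual values agree. For instance, $\mathrm{KL}^{-}_{\inf}(\delta_1,m_o)=\log(1/m_o)$, attained at $m_o\delta_1+(1-m_o)\delta_0$.
\item The displayed bound $\log(1+\max\{1/m_o,1/(1-m_o)\})$ does not by itself yield $\log(1/\min\{m_o,1-m_o\})$. The sharper constant comes from $1-\lambda(x-m_o)\le 1+\lambda m_o\le 1/(1-m_o)$ for $\lambda\in[0,1/(1-m_o)]$, and symmetrically on $[-1/m_o,0]$. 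Any finite bound suffices for your Step 1, and pathwise even finiteness of each $Z_k$ would do.
\end{itemize}
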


We next obtain the CLT for $\tau_\alpha$ using the above two results.

\begin{theorem}
\label{thm:clt_bd}
Fix $q\in \mathcal{Q}^{\rm bd}$ and $m_o\in(0,1)$. Then, under Assumption \ref{ass_1}, the following holds, as $\alpha\downarrow 0$: 
\begin{align*}
\sqrt{\log(1/\alpha)}
&\left(
\frac{\tau_{\alpha}}{\log(1/\alpha)}-\frac{1}{\mathrm{KL}_{\inf}(q,m_o)}
\right)\\
&\qquad\xRightarrow{d}\mathcal{N}\!\big(0,\ \sigma^2_{\rm bd}(q,m_o)\big),
\end{align*}
where,
\[
\sigma^2_{\rm bd}(q,m_o)
=
\frac{\operatorname{Var}_q(\ell(\lambda^\star,X))}{(\mathrm{KL}_{\inf}(q,m_o))^3}.
\]
\end{theorem}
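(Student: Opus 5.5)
We follow the classical boundary-crossing inversion argument for first-passage times. We combine Lemma~\ref{lemma_almost_sure}, which gives $\tau_\alpha\to\infty$ and $\tau_\alpha/\log(1/\alpha)\to 1/K$ almost surely with $K:=\mathrm{KL}_{\inf}(q,m_o)$, with an Anscombe-type version of Theorem~\ref{lem:kl_inf_clt} evaluated at the random times $\tau_\alpha$ and $\tau_\alpha-1$. Write $Z_n:=n\,\mathrm{KL}_{\inf}(\hat q_n,m_o)$ and $L:=\log(1/\alpha)$. By the definition \eqref{eq:stopping_rule_bd}, on $\{\tau_\alpha\ge 2\}$ (which has probability tending to one) we have
\[
Z_{\tau_\alpha-1}<\beta(\tau_\alpha-1,\alpha)\le \beta(\tau_\alpha,\alpha)\le Z_{\tau_\alpha}.
\]
Moreover $\beta(\tau_\alpha,\alpha)=L+1+\log(2(1+\tau_\alpha))=L+O_p(\log L)$, because $\tau_\alpha=O_p(L)$.

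The first and main step is the Anscombe property: $\sqrt{\tau_\alpha}\bigl(\mathrm{KL}_{\inf}(\hat q_{\tau_\alpha},m_o)-K\bigr)\Rightarrow\mathcal N(0,\sigma^2)$ with $\sigma^2=\operatorname{Var}_q(\ell(\lambda^\star,X))$, and the same holds at $\tau_\alpha-1$. I would use the decomposition from the proof of Theorem~\ref{lem:kl_inf_clt}:
\[
\mathrm{KL}_{\inf}(\hat q_n,m_o)-K=\frac1n\sum_{i\le n}\bigl(\ell(\lambda^\star,X_i)-K\bigr)+R_n,
\qquad \sqrt n\,R_n\to 0 .
\]
The upper bound $\mathrm{KL}_{\inf}(\hat q_n,m_o)\ge \frac1n\sum\ell(\lambda^\star,X_i)$ holds trivially by dual feasibility of $\lambda^\star$. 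The lower direction requires $\sqrt n R_n$ to be small uniformly over a window. Set $n_\alpha:=\lfloor L/K\rfloor$. The leading term is an i.i.d.\ partial sum, so the classical Anscombe theorem (Kolmogorov's maximal inequality, \citet{gut2009stopped}) applies to it using $\tau_\alpha/n_\alpha\to1$ a.s. For the remainder, I would show $\max_{|n-n_\alpha|\le \epsilon n_\alpha}\sqrt{n}\,|R_n|\to0$ in probability. The remainder is controlled by $\sqrt n(\hat\lambda_n-\lambda^\star)^2$ times a curvature term in the interior case, or by a one-sided linear term in the boundary case. So it suffices to get a maximal bound $\max_{n\in \text{window}}|\hat\lambda_n-\lambda^\star|=O_p(n_\alpha^{-1/2})$. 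This follows from a maximal inequality on the score $\frac1n\sum\partial_\lambda\ell(\lambda^\star,X_i)$, whose second moment is finite under Assumption~\ref{ass_1}, together with strong concavity near $\lambda^\star$. I expect this uniform control of $R_n$, especially when $\lambda^\star$ sits on the boundary of $I_{m_o}$, to be the main obstacle.

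Given the Anscombe property, write $Z_{\tau_\alpha}=\tau_\alpha K+\sqrt{\tau_\alpha}\,W_\alpha$ with $W_\alpha\Rightarrow\mathcal N(0,\sigma^2)$. The one-step overshoot satisfies $Z_{\tau_\alpha}-Z_{\tau_\alpha-1}=O_p(1)$, because $\mathrm{KL}_{\inf}$ is uniformly continuous and bounded near $q$. Equivalently, $Z_{\tau_\alpha-1}=(\tau_\alpha-1)K+\sqrt{\tau_\alpha}W_\alpha+o_p(\sqrt L)$. Sandwiching therefore gives
\[
\tau_\alpha K+\sqrt{\tau_\alpha}\,W_\alpha=L+o_p(\sqrt L),
\]
and rearranging yields
\[
\frac{\tau_\alpha-L/K}{\sqrt L}=-\frac{\sqrt{\tau_\alpha/L}}{K}\,W_\alpha+o_p(1).
\]

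Finally, $\sqrt{\tau_\alpha/L}\to K^{-1/2}$ almost surely by Lemma~\ref{lemma_almost_sure}. Slutsky's lemma and symmetry of the Gaussian then give convergence to $\mathcal N(0,\sigma^2/K^3)$, which is $\sigma^2_{\rm bd}(q,m_o)$. The left-hand side equals $\sqrt L\,(\tau_\alpha/L-1/K)$, which proves the theorem.
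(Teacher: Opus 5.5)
Your proof is correct and uses the same two ingredients as the paper. The first is the sandwich coming from the stopping rule. The second is an Anscombe step in which $\mathrm{KL}_{\inf}(\hat q_n,m_o)-\mathrm{KL}_{\inf}(q,m_o)$ is split into an i.i.d.\ partial sum plus a Taylor remainder. That remainder is controlled uniformly over a window by Kolmogorov's maximal inequality applied to the score $g(\lambda^\star,\cdot)$, which is where Assumption~\ref{ass_1} enters, as in Lemma~\ref{lemm:anscombe}.

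Where you differ is the inversion step.
\begin{itemize}
\item The paper applies Anscombe's theorem at both $\tau_\alpha$ and $\tau_\alpha-1$ and passes to $1/\mathrm{KL}_{\inf}$ by the delta method. It then squeezes $\sqrt{L}(\tau_\alpha/L-1/K)$ between two quantities with the same Gaussian limit. That only needs marginal convergence, by comparing distribution functions.
\item You linearize $Z_{\tau_\alpha}=\tau_\alpha K+\sqrt{\tau_\alpha}W_\alpha$ and solve for $\tau_\alpha$. This needs the Anscombe step only at $\tau_\alpha$ and no delta method, and it mirrors the classical first-passage proof for random walks. It also gives the explicit representation $\sqrt{L}(\tau_\alpha/L-1/K)=-K^{-3/2}W_\alpha+o_p(1)$.
\item The price of your route is an overshoot bound.
\end{itemize}
Your quadratic remainder bound $\sqrt n(\lambda^\star_n-\lambda^\star)^2\,\Psi'(\xi_n,\hat q_n)$ is also slightly sharper than the paper's $\sqrt n(\lambda^\star_n-\lambda^\star)A_n$ with $A_n\to0$.

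The one step to repair is how you justify the overshoot. Uniform continuity and boundedness of $\mathrm{KL}_{\inf}$ do not give $Z_n-Z_{n-1}=O_p(1)$. Write
\[
Z_n-Z_{n-1}=\mathrm{KL}_{\inf}(\hat q_n,m_o)+(n-1)\bigl(\mathrm{KL}_{\inf}(\hat q_n,m_o)-\mathrm{KL}_{\inf}(\hat q_{n-1},m_o)\bigr).
\]
Bounding this needs an $O(1/n)$ rate for the increment, not merely that it vanishes. Use the dual representation instead. Take $m_o>m(q)$; the other case is symmetric with $\log(1/m_o)$. Eventually $Z_n=\max_{\lambda\in[0,\bar\lambda]}\sum_{i\le n}\ell(\lambda,X_i)$ for both $n-1$ and $n$, so $\lambda^\star_n$ is feasible at step $n-1$. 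Hence
\[
Z_n-Z_{n-1}\le \ell(\lambda^\star_n,X_n)\le \log\bigl(1/(1-m_o)\bigr)
\]
deterministically. Only this upper bound is needed, because $0\le Z_{\tau_\alpha}-\beta(\tau_\alpha,\alpha)<Z_{\tau_\alpha}-Z_{\tau_\alpha-1}$.

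Alternatively, the window-uniform control in your Anscombe step already gives $\sqrt{\tau_\alpha}\bigl(\mathrm{KL}_{\inf}(\hat q_{\tau_\alpha},m_o)-K\bigr)-\sqrt{\tau_\alpha-1}\bigl(\mathrm{KL}_{\inf}(\hat q_{\tau_\alpha-1},m_o)-K\bigr)\to0$ in probability. That yields $Z_{\tau_\alpha}-Z_{\tau_\alpha-1}=o_p(\sqrt L)$, which is all your inversion needs.
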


A complete and rigorous proof of this theorem is deferred to the Appendix \ref{appen_extra}. We give the intuition and key steps of the proof now. 

From Theorem~\ref{lem:kl_inf_clt}, we have CLT for $\sqrt{n}(
\mathrm{KL}_{\inf}(\hat q_n,m_o)
-\mathrm{KL}_{\inf}(q,m_o))$. To prove CLT for $\tau_\alpha$, we will next develop a CLT for the same statistic at the stopping time $\tau_\alpha$. This corresponds to proving that the following converges to a Gaussian limit as $\alpha\downarrow 0$:  $$\sqrt{\tau_\alpha}\Big(
\mathrm{KL}_{\inf}(\hat q_{\tau_\alpha},m_o)
-\mathrm{KL}_{\inf}(q,m_o) \Big).$$ 
As is standard in literature, the proof relies on Anscombe's theorem \citep[Chapter 1]{gut2009stopped}. We verify Anscombe's condition in Lemma~\ref{lemm:anscombe}. This corresponds to showing that the process is
uniformly continuous in probability. However, the usual argument, which combines i.i.d.\ increments
with Kolmogorov's maximal inequality, does not apply here, since our statistic is defined through an
optimization: 
$$\mathrm{KL}_{\inf}(\hat q_n,m_o)
= \sup_{\lambda\in\bigl[0,\frac{1}{1-m_o}\bigr]} \mathbb{E}_{\hat q_n}\left[\log\!\bigl(1-\lambda(X-m_o)\bigr)\right],$$
when $m_o>m(\hat q_n)$. However, observe that,
\begin{align*}
\sqrt{n}\big(
\mathrm{KL}_{\inf}(\hat q_n,m_o)
-\mathrm{KL}_{\inf}(q,m_o)
\big) =  (T_{1,n}+T_{2,n}).
\end{align*}
where $T_{1,n}$ and $T_{2,n}$ are defined later in \eqref{eq_t1} and \eqref{eq_t2}, respectively.
It is worth noting that \(T_{2,n}\) can be handled using Kolmogorov's maximal inequality to show
uniform continuity in probability, since it is a partial-sum term with i.i.d.\ increments. For
\(T_{1,n}\), we perform a Taylor expansion and exploit the first-order optimality conditions of the
dual optimization problem to obtain the desired result.

Once we have a CLT for
\(\sqrt{\tau_\alpha}\big(\mathrm{KL}_{\inf}(\hat q_{\tau_\alpha},m_o)-\mathrm{KL}_{\inf}(q,m_o)\big)\),
rest of the steps are standard in literature: we invert the relation via a delta-method argument to obtain a CLT for the properly rescaled stopping time. 

We conclude this section with a proof sketch for Theorem~\ref{lem:kl_inf_clt}. But before that, we introduce some notation. Define
\[
\Phi(q,m_o)
:=\mathbb{E}_q\!\left[\frac{1-m_o}{1-X}\right],
\]
where by convention, we let $\Phi(q,m_o)=+\infty$ if $q(\{1\})>0$. Next, for $\lambda \in I_{m_o}$, $x\in [0,1]$, and
$$g(\lambda,x):=\dfrac{x-m_o}{1-\lambda(x-m_o)},$$
define
$$\Psi(\lambda,q):=\mathbb{E}_q\!\bigl[g(\lambda,X)\bigr],$$
which denotes the negative derivative of $\mathbb{E}_q[\ell(\lambda, X)]$ with respect to $\lambda$. Then $(\lambda^\star, q)$ satisfy exactly one of the following cases.
\begin{itemize}
\item \textbf{Case 1:} If $\ \Phi(q,m_o)<1$, then the supremum in the dual problem is attained at the boundary, that is, 
$\ \lambda^\star=\bar\lambda:={1}/({1-m_o})$.
\item \textbf{Case 2:} If $\ \Phi(q,m_o)>1$, then the maximizer $\ \lambda^\star<\bar\lambda:= {1}/({1-m_o})$ is the unique maximizer which lies in the interior of the dual feasible region, that is, it is the unique $\lambda$ that satisfies $\ \Psi(\lambda,q)=0$.
\item \textbf{Case 3:} If $\ \Phi(q,m_o)=1$, then $\ \lambda^\star=\bar\lambda:= {1}/({1-m_o})$, and $\ \Psi(\lambda^\star,q)=0$.
\end{itemize}
In most of our proofs, we handle these cases separately.

\subsection{Proof Sketch of Theorem~\ref{lem:kl_inf_clt}.}\label{sec:proof_sketch}

Let $\lambda^\star_n:=\lambda^\star(\hat q_n)$ be the dual maximizer. Results for the two cases, $m_o>m(q)$ and $m_o<m(q)$, follow analogously. We only provide proofs for the former.

\noindent\textbf{Case 1.}
In this case, since $\lambda^\star=\bar\lambda$,
\[
\mathrm{KL}_{\inf}(q,m_o)
=\mathbb{E}_q[\ell(\bar{\lambda},X)] .
\]
Therefore,
\begin{align*}
\sqrt{n}\big(
\mathrm{KL}_{\inf}(\hat q_n,m_o)
-\mathrm{KL}_{\inf}(q,m_o)
\big) = 
\end{align*}
$$
\sqrt{n}\Bigg(
\frac1n\sum_{i=1}^n
\ell(\lambda^\star_n,X_i)
-\mathbb{E}_q\ell(\lambda^\star,X)
\Bigg).$$

By the Strong Law of Large Numbers (SLLN),
\[
\Phi(\hat{q}_n, m_o) =  \frac1n\sum_{i=1}^n \frac{1-m_o}{1-X_i}
\ \xrightarrow[]{\rm a.s.}\
\Phi(q,m_o),
\]
which is strictly less than $1$. Thus, eventually, $(\hat{q}_n, \lambda^\star_n)$  satisfy Case 1, and hence, $\lambda_n^\star=\bar\lambda$ almost surely, eventually.

Finally, letting $Y_i=\ell(\lambda^\star,X_i)$ and $\bar Y_n := \frac1n \sum_{i=1}^n Y_i$, the classical CLT yields
\[
\sqrt{n}\big(\bar Y_n-\mathbb{E}_q Y_1\big)
\ \xRightarrow{d}\
\mathcal{N}\big(0,\sigma^2(q,m_o)\big)
\]
in this case. 

\medskip
\noindent\textbf{Cases 2 and 3.}
Write
\begin{align*}
\sqrt{n}\Big(
\mathrm{KL}_{\inf}(\hat q_n,m_o)
-\mathrm{KL}_{\inf}(q,m_o)
\Big)
&= T_{1,n}+T_{2,n}.
\end{align*}

Where, \begin{equation} \label{eq_t1}
    T_{1,n} =\sqrt{n}\frac1n
\sum_{i=1}^n
\big(
\ell(\lambda_n^\star,X_i)
-\ell(\lambda^\star,X_i)
\big),
\end{equation}
\begin{equation} \label{eq_t2}
T_{2,n} =\sqrt{n}\!\left(
\frac1n\sum_{i=1}^n
\ell(\lambda^\star,X_i)
-\mathbb{E}_q[\ell(\lambda^\star,X)]
\right).\end{equation}
Recall that we have $\Phi(q,m_o)\ge1$ for cases 2 and 3. This implies that \(q(\{1\})=0\), so \(\ell(\lambda^\star,X_i)\) is almost surely well-defined. At a high level, we will show that $T_{2,n}$ satisfies CLT and that $T_{1,n}$ is negligible (in an appropriate sense). 

\medskip
\noindent\textbf{Step 1: CLT for $T_{2,n}$.}
Let $Y_i=\ell(\lambda^\star,X_i)$. Clearly, $(Y_i)_{i\ge 1}$ are i.i.d.\ with finite variance. Define $\bar Y_n := \frac1n \sum_{i=1}^n Y_i$. Then, 
\[
T_{2,n}
=\sqrt{n}(\bar Y_n-\mathbb{E}_q[Y_1])
\ \xRightarrow{d}\
\mathcal{N}\big(0,\sigma^2(q,m_o)\big).
\]

\medskip
\noindent\textbf{Step 2: Negligibility of $T_{1,n}$.}
Define
\[
Q_n(\lambda)
:=\sum_{i=1}^n
\big(
\ell(\lambda,X_i)
-\ell(\lambda^\star,X_i)
\big).
\]

A first-order Taylor expansion gives
\[
Q_n(\lambda_n^\star)
= - (\lambda_n^\star-\lambda^\star)
\sum_{i=1}^n g(c_n,X_i),
\]
where $c_n$ lies between $\lambda_n^\star$ and $\lambda^\star$. Hence,
\[
T_{1,n}
=-\sqrt{n}(\lambda_n^\star-\lambda^\star)\,A_n,\]
with
\[
A_n:=\frac1n\sum_{i=1}^n g(c_n,X_i).
\]
From Lemmas~\ref{sup_lemma_case_2_combined}, and \ref{sup_lemma_A_n},
\[
\sqrt{n}(\lambda_n^\star-\lambda^\star)=O_p(1),
\qquad
A_n\xrightarrow[]{\rm a.s.}0 ,
\]
so $T_{1,n}\xrightarrow{p}0$.

Lemma~\ref{sup_lemma_case_2_combined} states that the dual maximizer $\lambda_n^*$ of $\mathrm{KL}_{\inf}(\hat{q}_n, m_o)$ converges almost surely to $\lambda^\star$. Further, the rate of convergence is also characterized via a CLT for $\lambda^\star_n$.



\medskip
\noindent\textbf{Step 3: Conclusion.}
In both the cases 2 and 3, $T_{1,n}\xrightarrow{p}\ 0$, while $T_{2,n}\xRightarrow{d} \mathcal{N}(0,\sigma^2(q,m_o))$. By Slutsky's theorem,
\[
\sqrt{n}\Big(
\mathrm{KL}_{\inf}(\hat q_n,m_o)
-\mathrm{KL}_{\inf}(q,m_o)
\Big)
\ \xRightarrow{d}\
\mathcal{N}\big(0,\sigma^2(q,m_o)\big).
\]

This completes the proof of the theorem.
\hfill$\Box$

 \subsection{Application: Asymptotically Valid CI for the Leading-order Normalized Stopping-time Center.}\label{sec:application}

 We now leverage the stopping-time CLT to construct an asymptotically valid confidence interval for the stopping time. Since the limiting variance depends on the unknown law \(q\), we estimate it consistently via a
plug-in variance estimator evaluated at the stopping time, and plug it into the Gaussian limit. Notably, unlike standard simulation-based confidence intervals, which typically require many independent replications, Proposition~\ref{lem:ci-tau} below provides an asymptotically valid confidence interval from the same run that produces the stopping time.

\begin{proposition}\label{lem:ci-tau}

Fix a non-degenerate $q \in \mathcal{Q}^{\mathrm{bd}}$ and an $m_o \in (0,1)$. Define the plug-in variance estimator
\[
\hat\sigma_n^2
:=\frac1n\sum_{i=1}^n\Big(\ell(\lambda^\star_n,X_i)- \mathrm{KL}_{\inf}(\hat q_{n},m_o) \Big)^2,
\]

and set
$\hat v_\alpha:={\hat\sigma_{\tau_\alpha}^2}/{ (\mathrm{KL}_{\inf}(\hat q_{\tau_\alpha},m_o)) ^{3}}$.  Then, under Assumption \ref{ass_1}, the following holds:

$$\hat v_\alpha \xrightarrow{\rm a.s.}\sigma^2_{\rm bd}(q,m_o).$$  Further, for any $\gamma\in(0,1)$,
with $z_{1-\gamma/2}$ the $(1-\gamma/2)$-quantile of $\mathcal N(0,1)$,
the interval
\[
\mathcal I_\alpha(\gamma)
:=
\left[
\frac{\tau_\alpha}{\log(1/\alpha)}
\ \pm\
z_{1-\gamma/2}\sqrt{\frac{\hat v_\alpha}{\log(1/\alpha)}}
\right]
\]
is an asymptotically valid $(1-\gamma)$ confidence interval for ${1}/{\mathrm{KL}_{\inf}(q,m_o)}$, i.e.
\[
\lim_{\alpha\downarrow0}\ 
\mathbb P\!\left(\frac{1}{\mathrm{KL}_{\inf}(q,m_o)}\in \mathcal I_\alpha(\gamma)\right)=1-\gamma.
\]
\end{proposition}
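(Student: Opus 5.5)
The plan has two parts. First we prove $\hat\sigma_n^2\to\sigma^2(q,m_o)$ almost surely as $n\to\infty$. Then we transfer this to the random index $\tau_\alpha$ using Lemma~\ref{lemma_almost_sure}. Since $q\in\mathcal{Q}^{\rm bd}$, that lemma gives $\tau_\alpha<\infty$ eventually and $\tau_\alpha\to\infty$ almost surely as $\alpha\downarrow0$. Also $\mathrm{KL}_{\inf}(\hat q_n,m_o)\to\mathrm{KL}_{\inf}(q,m_o)>0$ almost surely, by \citet{honda2010asymptotically}. Hence on a probability-one event both sequences converge along every deterministic $n\to\infty$, and therefore also along $\tau_\alpha(\omega)$. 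This gives $\hat v_\alpha\to\sigma^2(q,m_o)/\mathrm{KL}_{\inf}(q,m_o)^3=\sigma^2_{\rm bd}(q,m_o)$ almost surely. The coverage statement then follows from Theorem~\ref{thm:clt_bd} and Slutsky's theorem. Because $\sigma^2_{\rm bd}>0$ for non-degenerate $q$, the studentized quantity $\sqrt{\log(1/\alpha)}\bigl(\tau_\alpha/\log(1/\alpha)-1/\mathrm{KL}_{\inf}(q,m_o)\bigr)/\sqrt{\hat v_\alpha}$ converges in distribution to $\mathcal N(0,1)$. The event $\{1/\mathrm{KL}_{\inf}\in\mathcal I_\alpha(\gamma)\}$ is exactly the event that the absolute value of this quantity is at most $z_{1-\gamma/2}$, whose limiting probability is $1-\gamma$.

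For the consistency of $\hat\sigma_n^2$, expand
\[
\hat\sigma_n^2=\frac1n\sum_{i=1}^n\ell(\lambda_n^\star,X_i)^2-\mathrm{KL}_{\inf}(\hat q_n,m_o)^2 .
\]
The second term converges almost surely to $\mathrm{KL}_{\inf}(q,m_o)^2=(\mathbb E_q\ell(\lambda^\star,X))^2$. For the first term, compare it with $\frac1n\sum_i\ell(\lambda^\star,X_i)^2$, which converges almost surely by the SLLN to $\mathbb E_q\ell(\lambda^\star,X)^2$. This limit is finite by Theorem~\ref{lem:kl_inf_clt}, using Assumption~\ref{ass_1} in Case 3. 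It remains to show that the difference of the two averages vanishes. In Case 1, $\lambda_n^\star=\bar\lambda=\lambda^\star$ eventually almost surely, so the difference is exactly zero eventually. In Cases 2 and 3, $q(\{1\})=0$, and Lemma~\ref{sup_lemma_case_2_combined} gives $\lambda_n^\star\to\lambda^\star$ almost surely. We write
\[
\ell(\lambda_n^\star,x)^2-\ell(\lambda^\star,x)^2=\bigl(\ell(\lambda_n^\star,x)-\ell(\lambda^\star,x)\bigr)\bigl(\ell(\lambda_n^\star,x)+\ell(\lambda^\star,x)\bigr)
\]
and control the difference by the mean value theorem: it equals $-(\lambda_n^\star-\lambda^\star)\,g(c_n,x)$.

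The main obstacle is Case 3, where $\lambda^\star=\bar\lambda$ lies on the boundary. There $g(\lambda,x)$ and $\ell(\lambda,x)$ blow up near $x=1$, like $1/(1-x)$ and $\log(1-x)$ respectively. We plan to use the fact that for $\lambda\le\bar\lambda$ and $x>m_o$ one has $1-\lambda(x-m_o)\ge 1-\bar\lambda(x-m_o)=(1-x)/(1-m_o)$. This yields the uniform envelopes $|g(\lambda,x)|\le C/(1-x)$ and $|\ell(\lambda,x)|\le C(1+|\log(1-x)|)$ over $\lambda\in[0,\bar\lambda]$. Then
\[
\Bigl|\frac1n\sum_{i=1}^n\bigl(\ell(\lambda_n^\star,X_i)^2-\ell(\lambda^\star,X_i)^2\bigr)\Bigr|\le|\lambda_n^\star-\lambda^\star|\cdot\frac1n\sum_{i=1}^n\frac{C'\bigl(1+|\log(1-X_i)|\bigr)}{1-X_i}.
\]
By Assumption~\ref{ass_1}, $\mathbb E_q[(1-X)^{-2}]<\infty$, so the envelope is integrable. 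The SLLN therefore bounds the average, and $\lambda_n^\star\to\lambda^\star$ makes the product vanish. If Assumption~\ref{ass_1} is not in force, for instance in Case 2 with $q(\{1\})=0$ but an unbounded $1/(1-X)$, we would fall back on a uniform SLLN over a compact neighbourhood of $\lambda^\star$ inside $[0,\bar\lambda)$, where $\ell^2$ is continuous and bounded. The symmetric case $m_o<m(q)$ is handled analogously.
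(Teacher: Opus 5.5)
Your proof is correct, and its skeleton matches the paper's. You write $\hat\sigma_n^2$ as the empirical second moment of $\ell(\lambda_n^\star,X_i)$ minus $\mathrm{KL}_{\inf}(\hat q_n,m_o)^2$, show each piece converges almost surely, and pass to the random index pathwise using $\tau_\alpha\to\infty$.

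You genuinely differ from the paper in the boundary case $\lambda^\star=\bar\lambda$ (Case 3). There you use one global mean-value bound with the envelope $(1+|\log(1-x)|)/(1-x)\le 2(1-x)^{-2}$, whose integrability comes from Assumption~\ref{ass_1}. The paper's Lemma~\ref{lem:mom_conv_LnL} instead truncates at $1-\kappa$. On $\{X\le 1-\kappa\}$ it uses the uniform bounds and Lipschitz constants of Lemma~\ref{lem:uniform_bounds_l_g_gprime} over $\lambda\in[0,\bar\lambda]$. On $\{X>1-\kappa\}$ it uses the monotone domination $|\ell(\lambda,x)|\le|\ell(\bar\lambda,x)|$ together with the finite moments from Lemma~\ref{lem:finiteMGF}, and then lets $\kappa\downarrow 0$.

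Your argument is shorter. The paper's shows that the almost-sure consistency of $\hat v_\alpha$ does not actually need Assumption~\ref{ass_1}; the assumption enters only through Theorem~\ref{thm:clt_bd} for the coverage claim.

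In the other direction, you make the coverage step explicit, which the paper leaves implicit. You studentize, apply Slutsky, and use $\sigma^2_{\rm bd}(q,m_o)>0$. That positivity holds because $\lambda^\star\neq 0$ makes $x\mapsto\ell(\lambda^\star,x)$ strictly monotone and $q$ is non-degenerate.

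One small correction: Case 2 does not force $q(\{1\})=0$. By convention $\Phi(q,m_o)=+\infty$ when $q$ has an atom at $1$, so Bernoulli laws fall in Case 2. This is harmless, because your compact-neighbourhood argument covers $x=1$; it is really just the uniform Lipschitz bound on $I_\epsilon\times[0,1]$. However, that argument should be your primary Case 2 argument rather than a fallback, since Assumption~\ref{ass_1} constrains only Case 3.
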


Informally, the proposition states that $\mathcal{I}_\alpha(\gamma)$ is a confidence interval for the constant $1/\mathrm{KL}_{\inf}(q,m_o)$, which is the limit of both $\tau_\alpha/\log(1/\alpha)$ as $\alpha \to 0$ a.s.\ and $\mathbb{E}_q[\tau_\alpha]/\log(1/\alpha)$ as $\alpha \to 0$. Thus, the interval estimates the leading-order normalized stopping time, even though \(q\) is unknown. Crucially, \(\mathcal{I}_\alpha(\gamma)\) is constructed from a single sample path: the variance estimator utilized in the interval is computed along the same run that produces \(\tau_\alpha\), so no independent replicates are needed.

\section{Numerical experiments}
\label{sec:numerics}
We complement our theoretical results with three numerical studies: two synthetic simulations and one real-data experiment, examining (i) the CLT for the plug-in $\mathrm{KL}_{\inf}$ statistic, (ii) the CLT for $\tau_\alpha$, i.e., the stopping time of the $\mathrm{KL}_{\inf}$-based sequential test, and (iii) the applicability of the stopping-time CLT on a real-world dataset. All experiments consider bounded observations in $[0,1]$.
\paragraph{Experiment 1 (CLT for \(\mathrm{KL}_{\inf}(\hat q_n,m_o)\)).}
We set \(m_o=0.7\) and consider two data-generating distributions on \([0,1]\): \(q\sim\mathrm{Beta}(3,2)\) and \(q\sim\mathrm{Bernoulli}(0.6)\).

For each distribution, we generate
\(5000\) independent i.i.d.\ samples of size \(n\), compute \(\mathrm{KL}_{\inf}(\hat q_n,m_o)\),
and form the standardized statistic
\[
\sqrt{n}\Big(\mathrm{KL}_{\inf}(\hat q_n,m_o)-\mathrm{KL}_{\inf}(q,m_o)\Big).
\]
Figures~\ref{fig_beta_clt} and~\ref{fig:ber_clt} plot, for each \(n\), the empirical distribution
(histogram with fixed number of bins (70)) of this statistic together with the reference Gaussian density
\(\mathcal{N}(0,\sigma^2(q,m_o))\), illustrating convergence toward the predicted limit as \(n\)
increases. 

It is worth noting that, in all simulations, \(\mathrm{KL}_{\inf}(\hat q_n,m_o)\) is computed via its dual representation as a
one-dimensional convex optimization problem, which can be solved efficiently using standard
numerical routines.

\begin{figure}[htbp]
  \centering
  \includegraphics[width=\columnwidth]{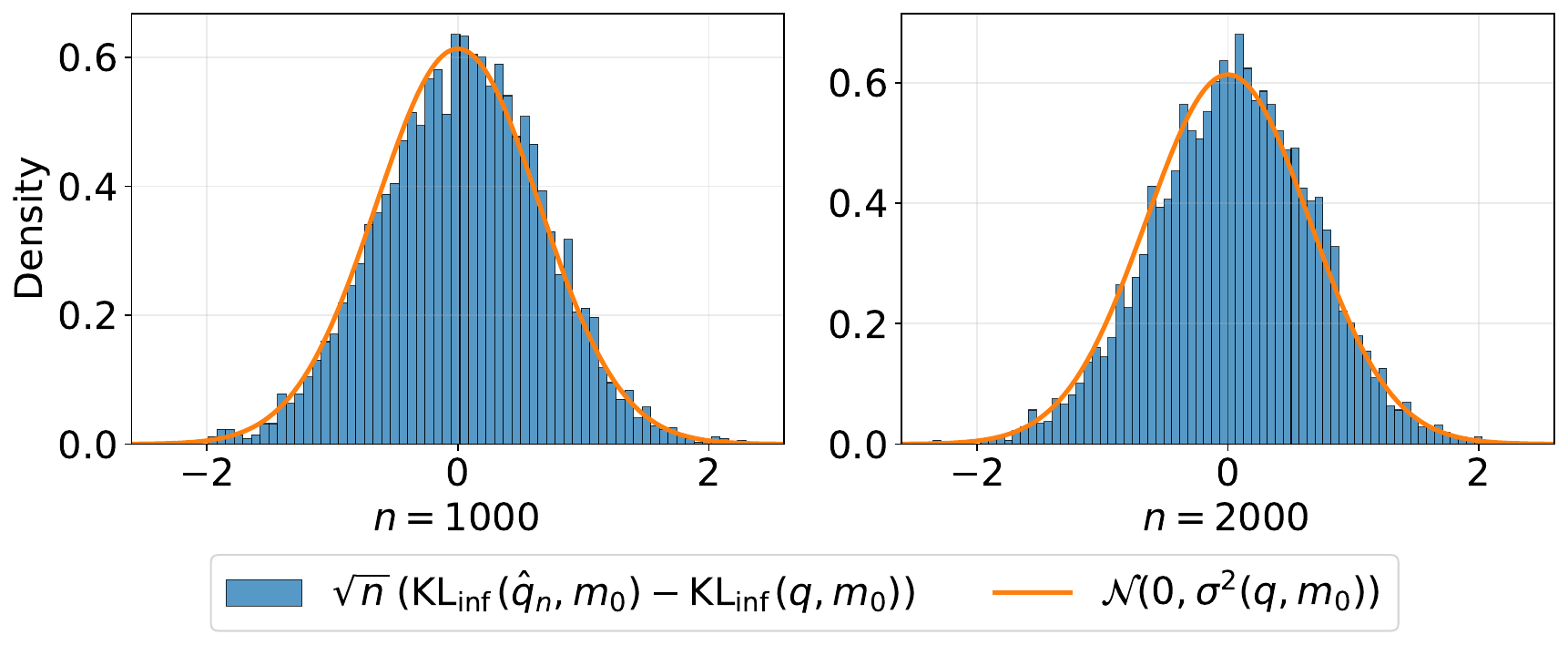}
  \caption{Histogram of the statistic $\sqrt{n}(\mathrm{KL}_{\inf}(\hat q_n,m_o)-\mathrm{KL}_{\inf}(q,m_o))$  when $q\sim\mathrm{Beta}(3,2)$. The orange curve is the density of $\mathcal{N}(0,\sigma^2(q,m_o))$.}
  \label{fig_beta_clt}
\end{figure}

\begin{figure}[htbp]
  \centering
  \includegraphics[width=\columnwidth]{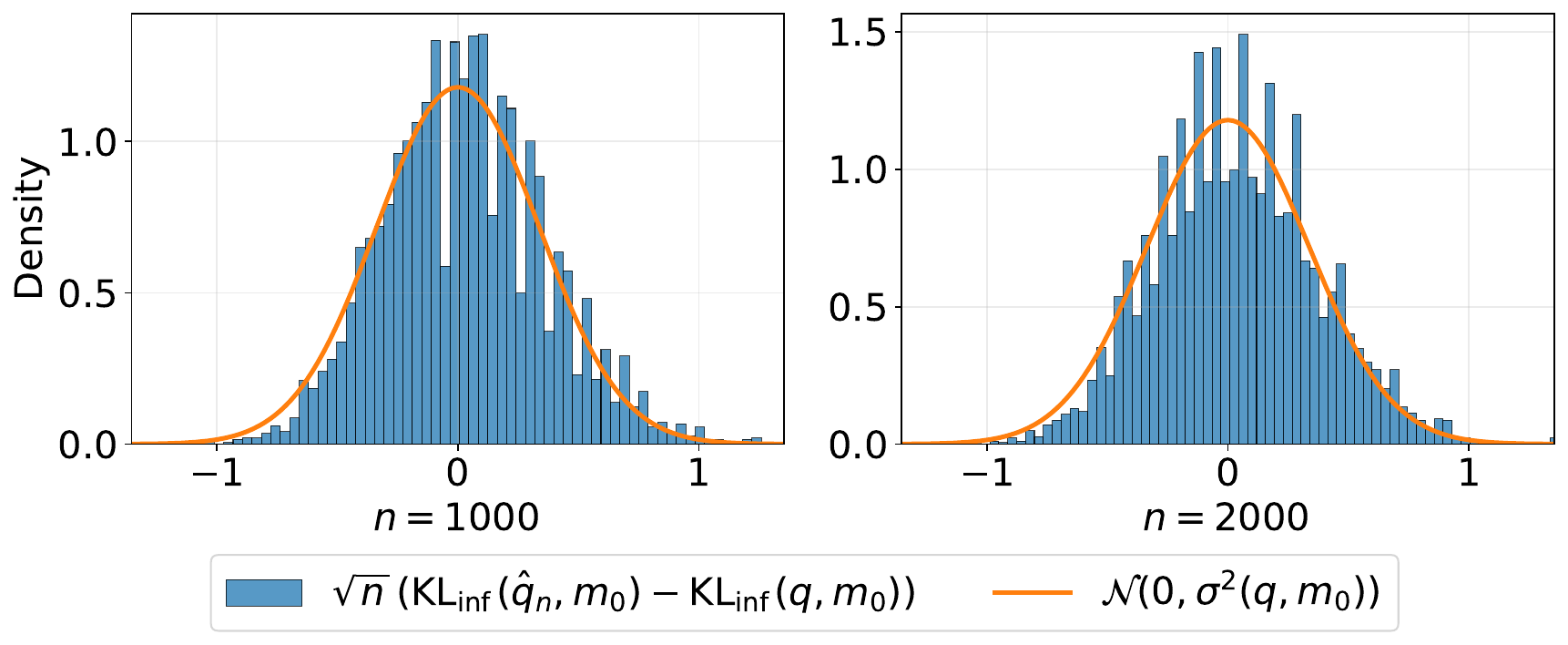}
 \caption{Histogram of the statistic $\sqrt{n}(\mathrm{KL}_{\inf}(\hat q_n,m_o)-\mathrm{KL}_{\inf}(q,m_o))$  when $q\sim\mathrm{Bernoulli}(0.6)$. The orange curve is the density of $\mathcal{N}(0,\sigma^2(q,m_o))$.}
  \label{fig:ber_clt}
\end{figure}

\paragraph{Experiment 2 (CLT for the stopping time).}
We now study the asymptotic normality of the stopping rule $
\tau_\alpha$ defined in \eqref{eq:stopping_rule_bd}. We set \(m_o=0.2\) and consider the data-generating distribution \(q\sim\mathrm{Bernoulli}(0.6)\). We consider two values of $\alpha$: ${10^{-4}, 10^{-8}}$.

For each confidence level \(\alpha\), we simulate \(5000\) independent sample paths, compute \(\tau_\alpha\) along each path, and form the centered-and-scaled statistic suggested by Theorem~\ref{thm:clt_bd}:
\[
\sqrt{\log(1/\alpha)} \left(\frac{\tau_{\alpha}}{\log(1/\alpha)}-\frac{1}{\mathrm{KL}_{\inf}(q,m_o)}
\right).
\]
The resulting empirical distribution (histogram with fixed number of bins (35)) is then compared with the corresponding limiting Gaussian distribution, \(\mathcal{N}\!\big(0,\ \sigma^2_{\rm bd}(q,m_o)\big)\).

We consider two choices of the function \(\beta(n,\alpha)\).
The first is the theoretically-supported threshold (it is mentioned in Section \ref{sec:moderate_dev_bd}) given by
\begin{equation}
\beta(n,\alpha)=1+\log\Bigl(\frac{2(1+n)}{\alpha}\Bigr).
\label{eq:beta_log}
\end{equation}

The second is a commonly used practical alternative that is not theoretically supported (see Appendix \ref{appen_practical_beta} for further discussion),
\begin{equation}
\beta(n,\alpha)=\log(1/\alpha),
\label{eq:beta_const}
\end{equation}
which ignores the additional \(\log(1+n)\) term and yields a flatter threshold.
The results are shown in Figures \ref{fig_tau_beta} and \ref{fig_tau_relaxed_beta}.

The simulations reveal a clear qualitative difference across regimes.
For \(\alpha=10^{-4}\) with the theoretically-supported $\beta(n,\alpha)$ given in \eqref{eq:beta_log}, the histogram of the
standardized statistic is noticeably right-skewed, indicating non-negligible finite-sample effects:
many paths stop relatively early due to moderate upward fluctuations, while a smaller but non-trivial
fraction require substantially longer times to cross the slowly increasing boundary, producing a long
right tail.
In contrast, for \(\alpha=10^{-8}\) under the practical choice~\eqref{eq:beta_const}, the empirical
distribution is substantially closer to a Gaussian shape.
Here, typical stopping times are much larger, placing the procedure deeper into the asymptotic regime,
and the constant boundary reduces the additional distortion induced by the \(\log(1+n)\) growth.
Overall, these experiments illustrate that Gaussian approximations for \(\tau_\alpha\) can be sensitive
to the choice of \(\beta(n,\alpha)\) and to the magnitude of \(\alpha\), with the fit improving
markedly when \(\alpha\) is small enough that \(\tau_\alpha\) is typically large.

For the last case shown in Figure~\ref{fig_tau_relaxed_beta}, where $\beta(n,\alpha)=\log(1/\alpha)$, the empirical $95\%$ confidence intervals for the simulated stopping-time distribution are $[6.0,\,50.0]$ and $[22.0,\,84.0]$ for $\alpha=10^{-4}$ and $\alpha=10^{-8}$, respectively.
By comparison, the intervals obtained from a single simulation path via Proposition~\ref{lem:ci-tau}, after multiplying by $\log(1/\alpha)$ to put them on the stopping-time scale, are $[2.0,\,41.0]$ and $[11.9,\,49.0]$ for $\alpha=10^{-4}$ and $\alpha=10^{-8}$, respectively.

\begin{figure}[htbp]
  \centering
  \includegraphics[width=\columnwidth]{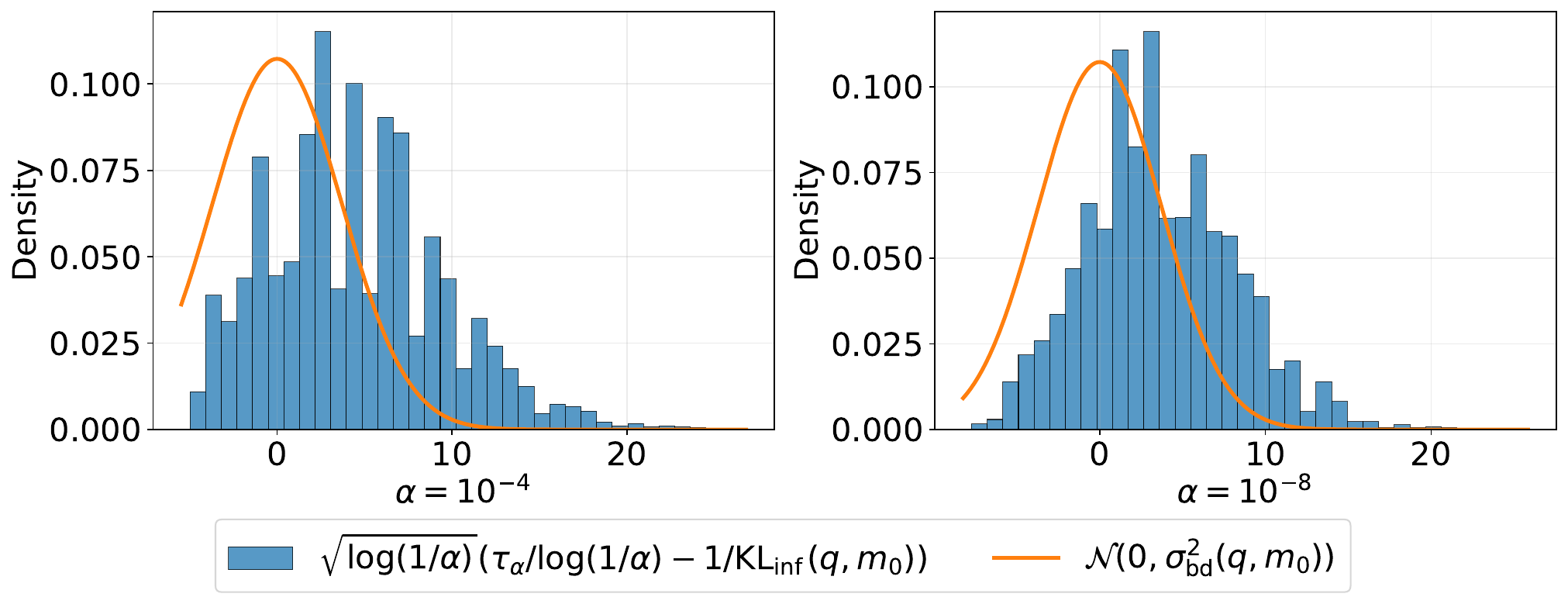}
  \caption{Histogram of the statistic $\sqrt{\log(1/\alpha)} ({\tau_{\alpha}}/{\log(1/\alpha)}-{1}/{\mathrm{KL}_{\inf}(q,m_o)})$ with $\alpha= 10^{-4}$ on left and $\alpha= 10^{-8}$ on right. The orange curve is the density of $\mathcal{N}\!\big(0,\ \sigma^2_{\rm bd}(q,m_o)\big)$. The choice of $\beta(n,\alpha)$ is given in \eqref{eq:beta_log}.
}
  \label{fig_tau_beta}
\end{figure}

\begin{figure}[htbp]
  \centering
  \includegraphics[width=\columnwidth]{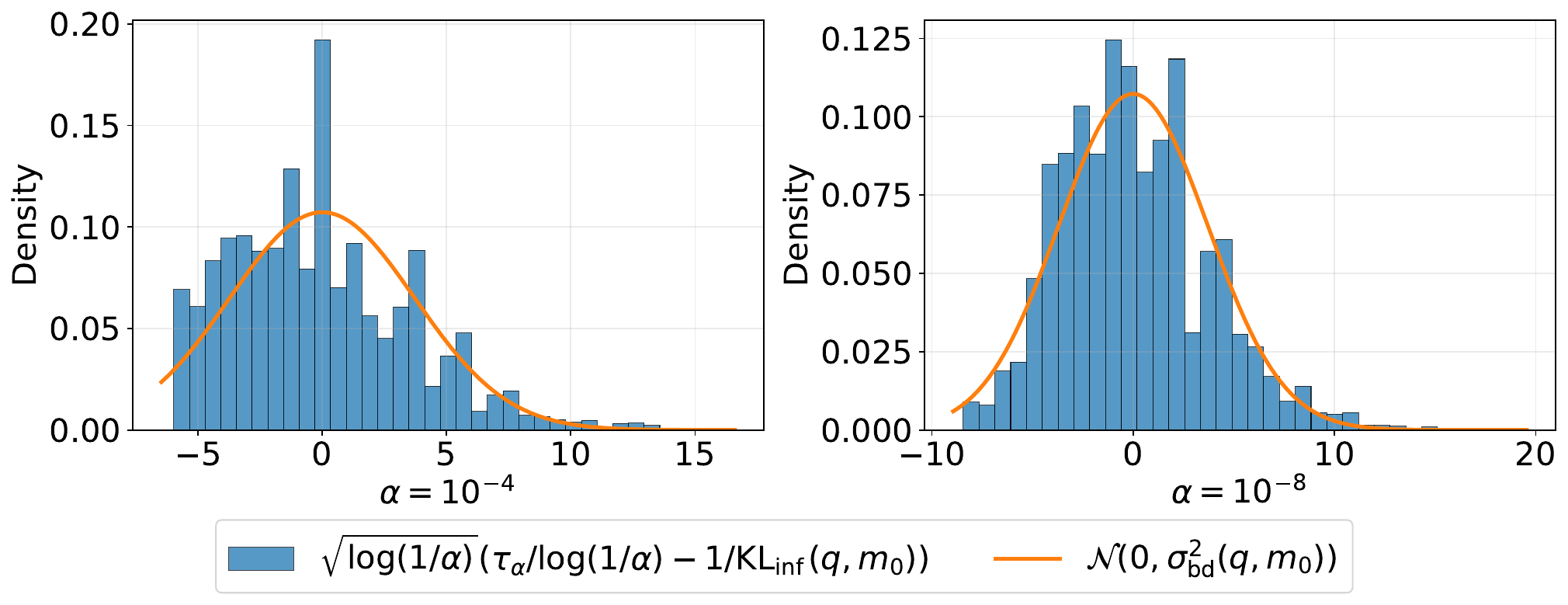}
 \caption{Histogram of the statistic $\sqrt{\log(1/\alpha)} ({\tau_{\alpha}}/{\log(1/\alpha)}-{1}/{\mathrm{KL}_{\inf}(q,m_o)})$ with $\alpha= 10^{-4}$ on left and $\alpha= 10^{-8}$ on right. The orange curve is the density of $\mathcal{N}\!\big(0,\ \sigma^2_{\rm bd}(q,m_o)\big)$. The choice of $\beta(n,\alpha)$ is given in \eqref{eq:beta_const}.
}
  \label{fig_tau_relaxed_beta}
\end{figure}

\paragraph{ Experiment 3 (Numerical experiments on real world data).}

To complement our synthetic experiments, we evaluate our sequential test on a real-world DSSAT crop-yield dataset (\url{https://dssat.net}), a physics-based crop-growth model widely used in agronomy research. In DSSAT, the yields are bounded after normalization making them a natural testbed.

We set the null hypothesis value to $m_o = 0.5$ and run our sequential test at significance level $\alpha = 10^{-4}$. To approximate the stopping-time distribution, we generate $3{,}000$ independent bootstrap paths by resampling with replacement from the data pool.

Since the underlying outcome distribution $q$ is unknown and only an empirical sample from $q$ is available, the quantities ${1}/{\mathrm{KL}_{\inf}(q, m_o)}$ and $\sigma^2_{\mathrm{bd}}$ appearing in the asymptotic CLT cannot be evaluated analytically. We therefore estimate these quantities using the empirical distribution $\hat q$ induced by the observed data. In particular, $\sigma^2_{\mathrm{bd}}$ is estimated by plugging in $\hat q$, instead of $q$, in the expression for $\sigma^2_{\mathrm{bd}}$ defined in Theorem~\ref{thm:clt_bd}; the resulting estimate is denoted by $\hat{\sigma}^2_{\mathrm{bd}}$.

We find that the empirical distribution of ${\tau_\alpha}/{\log(1/\alpha)}$ closely matches the $\mathcal{N} ({1}/{\mathrm{KL}_{\inf}(\hat{q}, m_o)}, {\hat{\sigma}^2_{\mathrm{bd}}}/{\log(1/\alpha)})$ overlay, providing additional empirical support for our asymptotic CLT result in a realistic, non-synthetic setting, as shown in Figure~\ref{fig_real_data_set}.
\begin{figure}[htbp]
  \centering
  \includegraphics[width=0.6\columnwidth]{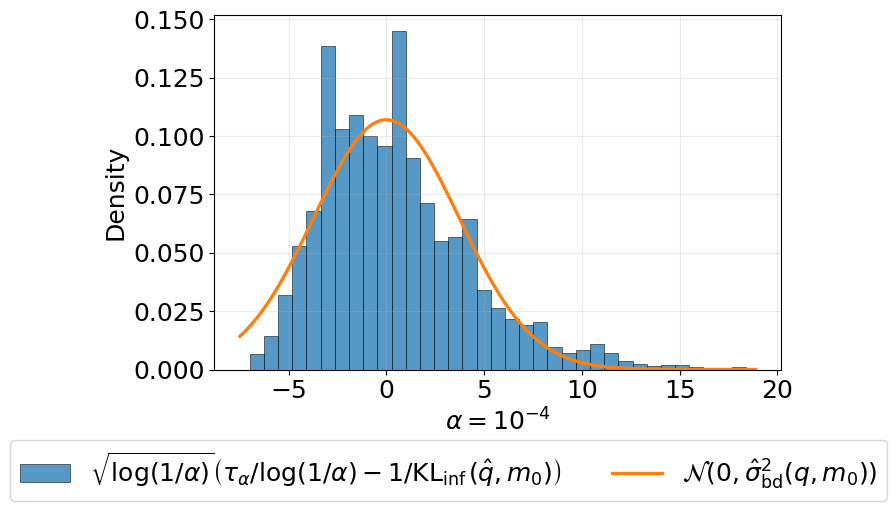}
  \caption{Histogram of the statistic $\sqrt{\log(1/\alpha)} (\tau_{\alpha}/\log(1/\alpha)-1/\mathrm{KL}_{\inf}(\hat{q},m_o))$. The orange curve is the density of $\mathcal{N}\!\big(0,\ \hat{\sigma}^2_{\rm bd}(q,m_o)\big)$. The choice of $\beta(n,\alpha)$ is given in \eqref{eq:beta_const}.}
  \label{fig_real_data_set}
\end{figure}


\section{Discussion and future work}
\label{sec:discussion}

This paper develops the first stopping-time CLT for the sequential mean testing problem
under a nonparametric model with bounded support. The key technical ingredient is
Theorem~\ref{lem:kl_inf_clt}, which establishes a CLT for
$\mathrm{KL}_{\inf}(\hat q_n,m_o)$. We then leverage this result to derive a CLT for the
associated stopping time in the sequential test. A limitation of our current approach is that the proof of Theorem~\ref{lem:kl_inf_clt} relies on the
bounded-support structure underlying the definition and analysis of $\mathrm{KL}_{\inf}(\hat q_n,m_o)$.
Extending these distributional results to broader nonparametric families, in particular,
unbounded or heavy-tailed distributions, appears to require new ideas, and we leave this direction
for future work. 


The CLT for $\mathrm{KL}_{\inf}(\hat q_n,m_o)$ yields an asymptotically normal approximation that can be inverted to construct an asymptotically valid confidence interval for quantities defined
through $\mathrm{KL}_{\inf}$. This may also be useful in sequential decision-making problems such as multi-armed bandits, where many algorithms employ $\mathrm{KL}_{\inf}$-based confidence bounds \citep{pmlr-v134-agrawal21a}. Our results suggest a potential route to designing bandit procedures with asymptotically calibrated confidence guarantees. 

Beyond confidence calibration, it is also natural to ask whether finer distributional characterizations of $\mathrm{KL}_{\inf}$-based confidence processes can shed light on the tail behavior of $\mathrm{KL}_{\inf}$-based learning algorithms. In particular, it will be interesting to understand whether such characterizations can provide insight into the heavy regret tails recently observed for certain asymptotically optimal bandit algorithms \citep{panda2026regret,fan2025fragility}. While Theorem~\ref{lem:kl_inf_clt} characterizes typical Gaussian fluctuations of the empirical $\mathrm{KL}_{\inf}$ statistic, the regret-tail phenomenon is driven by atypical trajectories and therefore appears to require moderate- or large-deviation analyses. Developing such results for the $\mathrm{KL}_{\inf}$ process, and understanding their implications for regret-tail behavior in multi-armed bandits, is an interesting direction for future work.

Finally, we focus on a composite vs. composite hypothesis testing problem. In this setting, beyond the $\alpha \to 0$ regime studied in this paper, the other natural asymptotic regime is $\mathrm{KL}_{\inf}(q, m_o) \to 0$, for a fixed $\alpha$. This can be achieved by considering a sequence of alternative data-generating distributions, whose mean approaches $m_o$. In this regime, the expected stopping time has been recently shown to grow as $\Omega\left(\mathrm{KL}_{\inf}^{-1} \log \log \mathrm{KL}_{\inf}^{-1}\right)$, a rate reminiscent of the law of the iterated logarithm (see~\citet{agrawal2025stopping} for more details). The reference also shows some tests achieving this rate. However, even the first-order asymptotics in this regime are not completely understood in the nonparametric setting. We therefore leave this as an interesting direction for future work.

\section*{Impact Statement}
This paper presents work whose goal is to advance the field of Machine
Learning. There are many potential societal consequences of our work, none
which we feel must be specifically highlighted here.

\section*{Acknowledgements}
The authors thank the anonymous reviewers for their comments and suggestions, which helped improve the presentation of this paper. SA acknowledges support from the Pratiksha Trust, Bangalore, through the Young Investigator Award, and from ANRF through grant ANRF/ECRG/2025/000560/ENS.


\bibliography{example_paper}
\bibliographystyle{icml2026}

\newpage
\appendix
\onecolumn

\section{Proofs for Main Results} \label{appen_1}
In this appendix, we will provide detailed proofs  of the results from Section~\ref{sec:results}, and also present and prove supporting lemmas. Towards this, first recall that 
\[
\Phi(q,m_o):=\mathbb{E}_q\!\left[\frac{1-m_o}{1-X}\right],\]
where by convention, we let $\Phi(q,m_o)=+\infty$ if $q(\{1\})>0$. Further, 
\[ \Psi(\lambda, q) = \mathbb{E}_{q}\left[g(\lambda, X)\right] := \mathbb{E}_{q}\left[ \frac{X-m_o}{1-\lambda(X-m_o)} \right],\]
which denotes the negative derivative of $\mathbb{E}_q[\ell(\lambda, X)]$ with respect to $\lambda$. Then, one of the following three cases occurs for $m_o > m(q)$.
\begin{itemize}
\item \textbf{Case 1:} If $\ \Phi(q,m_o)<1$, then the supremum in the dual problem is attained at the boundary, that is, 
$\ \lambda^\star=\bar\lambda =\tfrac{1}{1-m_o}$.
\item \textbf{Case 2:} If $\ \Phi(q,m_o)>1$, then the maximizer $\ \lambda^\star<\bar\lambda:= \tfrac{1}{1-m_o}$ is the unique maximizer which lies in the interior of the dual feasible region, that is, it is the unique $\lambda$ that satisfies $\ \Psi(\lambda,q)=0$.
\item \textbf{Case 3:} If $\ \Phi(q,m_o)=1$, then $\ \lambda^\star=\bar\lambda = \tfrac{1}{1-m_o}$, and $\ \Psi(\lambda^\star,q)=0$.
\end{itemize}

\begin{remark}
 Under Case 2, the choice of $q$ and $m_o$ can lead to $\lambda^* = 0$. In this scenario, $\mathrm{KL}_{\inf}(q,m_o) = 0$, which holds if and only if $m(q) = m_o$ (see, for example,  \citet[Appendix F]{jourdan2022top}). However, all of the results in this paper assume that $q \in \mathcal{Q}^{\rm bd}$, which implies that $m(q) \neq m_o$. Hence, without loss of generality, we assume that under this case, when $\lambda^* < \frac{1}{1-m_o}$, it also holds that $\lambda^* > 0$.
\end{remark}


Our proofs often treat these cases separately. Finally, before presenting our proofs, we introduce certain notation which will be used in this Appendix.

\paragraph{Notation}
\begin{itemize}

\item Let $g'(c,x)= \frac{\partial g(\lambda, x)}{\partial \lambda}|_{\lambda = c}$ and  ${\Psi}'(c, q):=\left.\frac{\partial}{\partial \lambda}\Psi(\lambda,q)\right|_{\lambda=c}$.
\item Convergence notation: $\xrightarrow[]{\rm a.s.}$ almost surely, $\xrightarrow{p}$ in probability, $\xRightarrow{d}$ in distribution.
\item Stochastic order notation: $X_n = O_p(1)$ means $\{X_n\}$ is bounded in probability, i.e., for every $\epsilon > 0$ there exists $M < \infty$ such that $\sup_n \Pr(|X_n| > M) < \epsilon$. More generally, $X_n = O_p(a_n)$ means $X_n/a_n = O_p(1)$. We write $X_n = o_p(1)$ if $X_n \xrightarrow{p} 0$, and $X_n = o_p(a_n)$ if $X_n/a_n \xrightarrow{p} 0$.
\end{itemize}

\subsection{Proof of Theorem \ref{lem:kl_inf_clt}.}\label{app:proof_lem:kl_inf_clt}
As mentioned earlier, we prove the theorem for the case when $m_o > m(q)$. The other case (when $m_o < m(q)$) follows analogously. It is worth noting that since $\lim_{n \to \infty} m(\hat{q}_n) = m(q)$, eventually, for large $n$, on each sample path, we have $\mathrm{KL}_{\inf}(\hat q_n, m_o) = \mathrm{KL}^{+}_{\inf}(\hat q_n, m_o)$.

\begin{remark}\label{rem:dirac_degenerate_case}
Before proceeding, let us dispose of the degenerate case in which $q$ is a Dirac mass. Suppose that
$q=\delta_m$ for some $m\neq m_o$.
Since $X_i=m$ almost surely for every $i$, we have
$\hat q_n=\delta_m=q$ almost surely for every $n$. Consequently,
\[
\mathrm{KL}_{\inf}(\hat q_n,m_o)
=
\mathrm{KL}_{\inf}(q,m_o)
\qquad\text{almost surely for every } n.
\]
Therefore,
\[
\sqrt n\Big(
\mathrm{KL}_{\inf}(\hat q_n,m_o)
-
\mathrm{KL}_{\inf}(q,m_o)
\Big)
=0
\qquad\text{almost surely for every } n.
\]
Moreover, since $\ell(\lambda^\star,X)=\ell(\lambda^\star,m)$
$q$-almost surely, we have
\[
\sigma^2(q,m_o)
=
\operatorname{Var}_q(\ell(\lambda^\star,X))
=0.
\]
Hence the desired convergence holds in the degenerate form
\[
\sqrt n\Big(
\mathrm{KL}_{\inf}(\hat q_n,m_o)
-
\mathrm{KL}_{\inf}(q,m_o)
\Big)
\Rightarrow
\mathcal N(0,0).
\]
Thus, in the proof, we may assume without loss of
generality that $q$ is not a Dirac mass.
\end{remark}
\begin{proof}
We begin by recalling from Lemma~\ref{lem:finiteMGF} that $\sigma^2(q,m_o):= \operatorname{Var}_q(\ell(\lambda^\star,X))$ is finite. This will be used in applying classical CLT at appropriate steps, later in the proof. We now prove the result in the three cases (introduced at the beginning of Appendix~\ref{appen_1}) separately. 

\medskip
\noindent \textbf{Case 1.} In this case, since $\lambda^\star=\bar\lambda$, $\mathrm{KL}_{\inf}(q,m_o) =\mathbb{E}_{q}[\ell(\bar\lambda,X)]$, and hence, 
$$\sqrt{n}\Big(\mathrm{KL}_{\inf}(\hat{q}_n,m_o)-\mathrm{KL}_{\inf}(q,m_o)\Big)\\
=\sqrt{n}\left[\frac1n\sum_{i=1}^n \big(\ell(\lambda^\star_n,X_i)- \mathbb{E}_q[\ell(\bar\lambda,X)]\right].$$

Further, SLLN gives
\[
\frac1n\sum_{i=1}^n \frac{1-m_o}{1-X_i}\ \xrightarrow[]{\rm a.s.}\ \Phi(q,m_o)<1.
\]
Hence, on each sample path $\omega$, there exists $N(\omega)$ such that for all $n\ge N(\omega)$,
\[
\frac1n\sum_{i=1}^n \frac{1-m_o}{1-X_i} =  \Phi(\hat{q}_n,m_o) <1,
\]
which implies $\lambda^\star_n=\bar\lambda$ for all $n \ge N(\omega)$ on each sample path $\omega$. Therefore, on each path $\omega$, for such large $n$,
$$\sqrt{n}\Big(\mathrm{KL}_{\inf}(\hat{q}_n,m_o)-\mathrm{KL}_{\inf}(q,m_o)\Big)\\
=\sqrt{n}\left[\frac1n\sum_{i=1}^n \big(\ell(\bar\lambda,X_i)- \mathbb{E}_q[\ell(\bar\lambda,X)]\right].$$

Let $Y_i=\ell(\bar\lambda,X_i)$ and $\bar{Y}_n = \frac{1}{n}\sum_{i =1}^{n}Y_i$. Then $(Y_i)$ are i.i.d.\ with mean
$\mathbb{E}_q[Y_1]=\mathbb{E}_q[\ell(\bar\lambda,X)] = \mathrm{KL}_{\inf}(q,m_o)$ and variance $\sigma^2(q,m_o)=\operatorname{Var}_q(\ell(\bar\lambda, X))$, which is finite. Hence by the classical CLT for i.i.d.\ random variables,
\[
\sqrt{n}\Big(\mathrm{KL}_{\inf}(\hat{q}_n,m_o)-\mathrm{KL}_{\inf}(q,m_o)\Big)
=\sqrt{n}\big(\bar{Y}_n-\mathbb{E}_q[Y_1]\big)
\ \xRightarrow{d}\ \mathcal{N}\big(0,\sigma^2(q,m_o)\big).
\]
This completes the proof for this case.

\medskip
\noindent \textbf{Cases 2 and 3.} Again, recall that $\mathrm{KL}_{\inf}(\cdot,m_o)
=\mathbb{E}_{\cdot}[\ell(\lambda^\star(\cdot),X)]$. Thus,
\begin{align*}
&\sqrt{n}\Big(\mathrm{KL}_{\inf}(\hat{q}_n,m_o)-\mathrm{KL}_{\inf}(q,m_o)\Big)\\
&=\sqrt{n}\left[\frac1n\sum_{i=1}^n \big(\ell(\lambda^\star_n,X_i)-\ell(\lambda^\star,X_i)\big)\right]
+\sqrt{n}\left(\frac1n\sum_{i=1}^n \ell(\lambda^\star,X_i)-\mathbb{E}_q[\ell(\lambda^\star,X)]\right)\\
&=:T_{1,n}+T_{2,n}.
\end{align*}

\begin{remark}\label{rem:case3rest}
In Case 3 (when $\lambda^\star = \bar\lambda$), $\ell(\lambda^\star, X_i) := \log(1-\lambda^\star(X_i - m_o))$ is undefined for $X_i = 1$. However, we have $\Phi(q, m_o) = 1$, which implies that $q(\{1\}) = 0$. Hence, on a set of measure one, $X_i < 1$ in this case. Therefore, without loss of generality, for Case 3, we will assume that $X_i < 1$ for all $i\in \mathbb{N}$. 
\end{remark}

We will now handle the two terms $T_{2,n}$ and $T_{1,n}$ separately.

\noindent \textbf{Step 1: Analyzing $T_{2,n}$.} Let $Y_i=\ell(\lambda^\star,X_i)$ and $\bar{Y}_n = \frac{1}{n}\sum_{i =1}^{n}Y_i$. Then $(Y_i)$ are i.i.d.\ with mean
$\mathbb{E}_q[\ell(\lambda^\star,X)]$ and variance $\sigma^2(q,m_o) < \infty$, then the classical CLT holds and gives
\[
T_{2,n}
=\sqrt{n}\big(\bar{Y}_n-\mathbb{E}_q[\ell(\lambda^\star, X)]\big)
\ \xRightarrow{d}\ \mathcal{N}\big(0,\sigma^2(q,m_o)\big).
\]

\noindent \textbf{Step 2: Analyzing $T_{1,n}$.} For fixed $X_1, \dots, X_n$, consider the following function:
$$Q_n : \lambda \to \sum_{i=1}^n \big(\ell(\lambda, X_i) - \ell(\lambda^\star, X_i)\big).$$
The function $Q_n$ is continuously differentiable in  a neighborhood containing $\lambda^\star$. This follows from the definition of $\ell(\cdot, X)$  in Case 3, since $X_i \in [0,1)$ almost surely (hence $\ell(X_i, \lambda)$ is well-defined as $1-\lambda(X_i-m_o) >0$ and infinitely differentiable for all $\lambda \in [0,\tfrac{1}{1-m_o}]$). In Case 2, we have from Lemma \ref{sup_lemma_case_2_combined} that on each sample path (for all realizations of $X_i \in [0,1]$), the dual optimizer $\lambda^\star_n$ lies in the interior for sufficiently large $n$. In particular, $ \lambda^\star_n \in [\lambda^\star - \eta, \lambda^\star + \eta]$ for sufficiently large $n$, where $\eta > 0$ is chosen such that $\lambda^\star + \eta < \frac{1}{1-m_o}$ and $\lambda^\star - \eta > 0$. Hence, $\ell(\lambda, X_i)$ is well-defined as  $1-\lambda(X_i-m_o) >0$ and infintely differentiable for all $i$ and for all $\lambda$ in this neighborhood containing $\lambda^\star$.

Hence, using Taylor expansion of $Q_n(\cdot)$ around $\lambda^\star$ for large $n$ yields
$$Q_n(\lambda^\star_n ) = \underbrace{Q_n(\lambda^\star)}_{=0} +  (\lambda^\star_n-\lambda^\star) \frac{\partial Q_n(\lambda) }{\partial \lambda}\bigg|_{\lambda= c(X_1,X_2,\ldots X_n, \lambda^\star_n)}, $$
for some $c(X_1,\dots, X_n, \lambda^\star_n)$. For ease of notation, we let $c_n := c(X_1,X_2,\ldots X_n, \lambda^\star_n)$. 

Now, observe that $T_{1,n} = {Q_n(\lambda^\star_n ) }/{\sqrt{n}}$.  Hence, 
\[
T_{1,n} =  -\sqrt{n} (\lambda^\star_n-\lambda^\star) \cdot A_n,
\qquad
A_n:=\frac1n\sum_{i=1}^n g(c_n,X_i),
\]
for some (possibly random) $c_n$ between $\lambda^\star_n$ and $\lambda^\star$. Now, from Lemma \ref{sup_lemma_case_2_combined} we have the following 
\[
\lambda^\star_n \xrightarrow[]{\rm a.s.} \lambda^\star
\qquad\text{and}\qquad
\sqrt{n}\,(\lambda^\star-\lambda^\star_n)=O_p(1),
\] for cases 2 and 3, respectively. Further, Lemma \ref{sup_lemma_A_n} gives that $A_n \to 0$ almost surely. Hence, it follows that  $T_{1,n}\xrightarrow{p}\  0$ under Case 2 and Case 3.

\noindent{\bf Step 3: Conclusion.}
In both the cases, Case 2 and Case 3, $T_{1,n}\xrightarrow{p}\  0$, while $T_{2,n}\xRightarrow{d} \mathcal{N}(0,\sigma^2(q,m_o))$. Thus, by Slutsky's theorem,
\[
\sqrt{n}\Big(\mathrm{KL}_{\inf}(\hat{q}_n,m_o)-\mathrm{KL}_{\inf}(q,m_o)\Big)
\ \xRightarrow{d}\ \mathcal{N}\big(0,\sigma^2(q,m_o)\big).
\] 
This completes the proof.
\end{proof}

\subsection{Proof of Lemma \ref{lemma_almost_sure}}
We prove the lemma for the case \(m_o > m(q)\); the proof for the case \(m_o < m(q)\) follows analogously.

First we show that
\begin{equation}\label{sup_eq_vik_1}
\lim_{\alpha\to 0}\tau_\alpha = \infty
\qquad \text{almost surely.}
\end{equation}
Recall the stopping time
\[
\tau_\alpha
=\inf\Bigl\{n\ge 1:\ n\,\mathrm{KL}_{\inf}(\hat q_n,m_o)\ \ge\ \beta(n,\alpha)\Bigr\}.
\]

Using \citet[Lemma~14]{honda2010asymptotically}, we have the uniform bound
\begin{equation}\label{eq:klinf_bd}
\mathrm{KL}_{\inf}(\hat q_n,m_o)\ \le\  \max \left \{ \log\left(\frac{1}{m_o} \right),\, \log \left( \frac{1}{1-m_o} \right) \right\},
\qquad \text{for all } n\ge 1 .
\end{equation}
Consequently,
\begin{equation}\label{eq:nklinf_bd}
n\,\mathrm{KL}_{\inf}(\hat q_n,m_o)
\ \le n  \max \left \{ \log\left(\frac{1}{m_o} \right) ,\, \log \left( \frac{1}{1-m_o} \right) \right\},
\qquad \text{for all } n\ge 1 .
\end{equation}

Fix an arbitrary integer \(N\ge 1\). For all \(1\le n\le N\),
\[
n\,\mathrm{KL}_{\inf}(\hat q_n,m_o)
\ \le\ N \max \left \{ \log\left(\frac{1}{m_o} \right) ,\, \log \left( \frac{1}{1-m_o} \right) \right\}.
\]
On the other hand, since 
\(\beta(n,\alpha)=O(\log(n/\alpha))\), we have
\[
 \lim_{\alpha \downarrow 0}\beta(n,\alpha)\  \to  \infty
\qquad \text{for each fixed } n .
\]
Hence, there exists \(\alpha_N\in(0,1)\) such that
\begin{equation}\label{eq:beta_dom}
\min_{1\le n\le N}\beta(n,\alpha)\ >\  N \max \left \{ \log\left(\frac{1}{m_o} \right) ,\, \log \left( \frac{1}{1-m_o} \right) \right\},
\qquad \text{for all } \alpha\in(0,\alpha_N).
\end{equation}
Combining \eqref{eq:nklinf_bd} and \eqref{eq:beta_dom}, we obtain that for all
\(\alpha\in(0,\alpha_N)\) and all \(1\le n\le N\),
\[
n\,\mathrm{KL}_{\inf}(\hat q_n,m_o)\ <\ \beta(n,\alpha).
\]
Therefore, using the definition of $\tau_\alpha$, it follows that
\[
\tau_\alpha > N,
\qquad \forall \alpha\in(0,\alpha_N).
\]
Since \(N\ge 1\) was arbitrary, it follows that
\[
\lim_{\alpha\to 0}\tau_\alpha=\infty \quad \text{almost surely}.
\]

Now we prove the statement of the lemma. 
Since $q\in\mathcal{Q}^{\rm bd}$, and the fact that this test has power one, following holds for any fixed $\alpha \in (0,1)$:
$$\mathbb{P}_{q}(\tau_\alpha <\infty) = 1.$$
Using the above fact and the definition of $\tau_\alpha$, we get,

\begin{equation} \label{sup_eq_vik_2}
    \frac{\beta(\tau_\alpha-1,\alpha)}{\log(1/\alpha)\,\mathrm{KL}_{\inf}(\hat q_{\tau_\alpha-1},m_o)}  + \frac{1}{\log(1/\alpha)}\geq \frac{\tau_\alpha}{\log(1/\alpha)}
\ \ge\
\frac{\beta(\tau_\alpha,\alpha)}{\log(1/\alpha)\,\mathrm{KL}_{\inf}(\hat q_{\tau_\alpha},m_o)}.
\end{equation}
First, we show that following claim holds:

\noindent\emph{Claim : $\tau_\alpha=O(\log(1/\alpha))$, almost surely.}

\noindent {Proof of the claim.}
Fix $\epsilon\in\bigl(0,\mathrm{KL}_{\inf}(q,m_o)\bigr)$ and define the event
\[
\Omega_\epsilon
:=\Bigl\{\exists\,N_\epsilon(\omega)\ \text{s.t.}\ 
\mathrm{KL}_{\inf}(\hat q_n,m_o)\ \ge\ \mathrm{KL}_{\inf}(q,m_o)-\epsilon,\ \forall n\ge N_\epsilon(\omega)\Bigr\}.
\]
Further, fix $\delta\in\bigl(0,\mathrm{KL}_{\inf}(q,m_o)-\epsilon\bigr)$.
Since $\log x=o(x)$ as $x\to\infty$, there exists $M_\delta<\infty$ such that
\begin{equation} \label{extra}
    \log x \le \delta x,\qquad \forall\, x\ge M_\delta.
\end{equation}

Using Lemma \ref{lem:mom_conv_LnL}, we get $\mathrm{KL}_{\inf}(\hat q_n,m_o)\xrightarrow[]{a.s.} \mathrm{KL}_{\inf}(q,m_o)$. Hence, it follows that, we have $\mathbb P_q(\Omega_\epsilon)=1$.

Fix $\omega\in\Omega_\epsilon$. Since $\tau_\alpha(\omega)\to\infty$ as $\alpha\downarrow 0$
(see \ \eqref{sup_eq_vik_1}), there exists $\alpha_0(\omega)\in(0,1)$ such that
$\tau_\alpha(\omega)-1\ge \max{ \{N_\epsilon(\omega), M_\delta -1 \}}$ for all $\alpha\in(0,\alpha_0(\omega))$.
Using the definition of $\tau_\alpha$, we get,
\[
\big(\tau_\alpha(\omega)-1\big)\,\mathrm{KL}_{\inf}\big(\hat q_{\tau_\alpha(\omega)-1},m_o\big)
\ <\ \beta(\tau_\alpha(\omega)-1,\alpha)
\ =\ 1+\log\Bigl(\frac{2\tau_\alpha(\omega)}{\alpha}\Bigr).
\]
On $\Omega_\epsilon$ and for $\alpha<\alpha_0(\omega)$ we also have
$\mathrm{KL}_{\inf}(\hat q_{\tau_\alpha(\omega)-1},m_o)\ge \mathrm{KL}_{\inf}(q,m_o)-\epsilon$,
hence
\begin{equation}\label{eq:tau_upper_intermediate_correct}
\big(\tau_\alpha(\omega)-1\big)\big(\mathrm{KL}_{\inf}(q,m_o)-\epsilon\big)
\ <\ 1+\log\Bigl(\frac{2}{\alpha}\Bigr)+\log\big(\tau_\alpha(\omega)\big).
\end{equation}

Using \eqref{extra} and definition of $\alpha_0(\omega)$, we get,
\[
\big(\tau_\alpha(\omega)-1\big)\big(\mathrm{KL}_{\inf}(q,m_o)-\epsilon\big)
\ <\ 1+\delta+\log\Bigl(\frac{2}{\alpha}\Bigr)+\delta\big(\tau_\alpha(\omega)-1\big).
\]
Rearranging gives
\[
\big(\mathrm{KL}_{\inf}(q,m_o)-\epsilon-\delta\big)\big(\tau_\alpha(\omega)-1\big)
\ <\ 1+\delta+\log\Bigl(\frac{2}{\alpha}\Bigr).
\]
Choose $\delta:=\tfrac{1}{2}\big(\mathrm{KL}_{\inf}(q,m_o)-\epsilon\big)$, so that
$\mathrm{KL}_{\inf}(q,m_o)-\epsilon-\delta=\tfrac{1}{2}(\mathrm{KL}_{\inf}(q,m_o)-\epsilon)$.
Then, for all $\alpha \in (0, \alpha_0(\omega))$ (depending on $\omega$),

\[
\tau_\alpha(\omega)
\ \le\ 1+\frac{2}{\mathrm{KL}_{\inf}(q,m_o)-\epsilon}\Bigl(1+\delta+\log \Bigl(\frac{2}{\alpha}\Bigr)\Bigr).
\]
Since $1+\delta$ is a finite constant and $\log(2/\alpha)=\log(1/\alpha)+\log 2$,
this proves $\tau_\alpha=O(\log(1/\alpha))$ almost surely.
Using the claim, \eqref{sup_eq_vik_1}, \eqref{sup_eq_vik_2} and the fact that
$\mathrm{KL}_{\inf}(\hat q_{n},m_o) \to \mathrm{KL}_{\inf}(q,m_o)$ almost surely (see Lemma \ref{lem:mom_conv_LnL}),
we get the desired result. This completes the proof.

\hfill$\Box$

\subsection{Proof of Theorem \ref{thm:clt_bd}} \label{appen_extra}

Fix $q\in\mathcal{Q}^{\rm bd}$. Recall that
\[
\tau_\alpha
=\inf\left\{
n:\ \mathrm{KL}_{\inf}(\hat q_n,m_o)\ge \frac{\beta(n,\alpha)}{n}
\right\}.
\]

By definition of $\tau_\alpha$,
\begin{equation}\label{eq:tau_basic_ineq_final}
\mathrm{KL}_{\inf}(\hat q_{\tau_\alpha},m_o)\ \ge\ \frac{\beta(\tau_\alpha,\alpha)}{\tau_\alpha},
\qquad
\mathrm{KL}_{\inf}(\hat q_{\tau_\alpha-1},m_o)\ <\ \frac{\beta(\tau_\alpha-1,\alpha)}{\tau_\alpha-1}.
\end{equation}
Equivalently,
\begin{equation}\label{eq:tau_sandwich_final}
\frac{\beta(\tau_\alpha,\alpha)}
{\mathrm{KL}_{\inf}(\hat q_{\tau_\alpha},m_o)}
\ \le\ \tau_\alpha
\ <\
1+\frac{\beta(\tau_\alpha-1,\alpha)}
{\mathrm{KL}_{\inf}(\hat q_{\tau_\alpha-1},m_o)}.
\end{equation}

By Theorem~\ref{lem:kl_inf_clt}, we have
\begin{equation}\label{eq:klinf_clt_det_final}
\sqrt{n}\Big(
\mathrm{KL}_{\inf}(\hat q_n,m_o)-\mathrm{KL}_{\inf}(q,m_o)
\Big)
\xRightarrow{d}
\mathcal{N} \bigl(0,\sigma^2(q,m_o)\bigr).
\end{equation}

Let $b_\alpha=\log(1/\alpha)$. Since $\tau_\alpha/b_\alpha\to 1/\mathrm{KL}_{\inf}(q,m_o)$ almost surely by Lemma \ref{lemma_almost_sure}, and using Lemma \ref{lemm:anscombe}, we get that Anscombe's condition is satisfied and hence it follows that, using Anscombe's theorem, we have, 
\begin{equation}\label{eq:klinf_clt_tau_final}
\sqrt{\tau_\alpha}\Big(
\mathrm{KL}_{\inf}(\hat q_{\tau_\alpha},m_o)-\mathrm{KL}_{\inf}(q,m_o)
\Big)
\xRightarrow{d}
\mathcal{N}\bigl(0,\sigma^2(q,m_o)\bigr).
\end{equation}

Using a similar argument as above for $\tau_\alpha -1$ instead of $\tau_\alpha$, we get,

\begin{equation}\label{eq:klinf_clt_tau_final_2}
\sqrt{\tau_\alpha -1 } \Big(
\mathrm{KL}_{\inf}(\hat q_{\tau_\alpha-1},m_o)-\mathrm{KL}_{\inf}(q,m_o)
\Big)
\xRightarrow{d}
\mathcal{N}\bigl(0,\sigma^2(q,m_o)\bigr).
\end{equation}

It is worth noting that the CLT for a test statistic evaluated at a random stopping time does not, in general, imply the CLT for it evaluated at a finite number of steps before that stopping time. However, in our setting, this follows trivially by applying Anscombe's theorem to both $\tau_\alpha$ and $\tau_\alpha-1$.

Using Delta method with function $f(x)= 1/x$ with above two equations, we get
\[
\sqrt{\tau_\alpha}\left(
\frac{1}{\mathrm{KL}_{\inf}(\hat q_{\tau_\alpha},m_o)}
-\frac{1}{\mathrm{KL}_{\inf}(q,m_o)}
\right)
\xRightarrow{d}
\mathcal{N} \left(
0,\ \frac{\sigma^2(q,m_o)}{\mathrm{KL}_{\inf}(q,m_o)^4}
\right).
\]
\[
\sqrt{\tau_\alpha -1}\left(
\frac{1}{\mathrm{KL}_{\inf}(\hat q_{\tau_\alpha -1},m_o)}
-\frac{1}{\mathrm{KL}_{\inf}(q,m_o)}
\right)
\xRightarrow{d}
\mathcal{N}\!\left(
0,\ \frac{\sigma^2(q,m_o)}{\mathrm{KL}_{\inf}(q,m_o)^4}
\right).
\]
Also, by Lemma~\ref{lemma_almost_sure},
\[
\sqrt{\frac{b_\alpha}{\tau_\alpha}}
\xrightarrow{p}
\sqrt{\mathrm{KL}_{\inf}(q,m_o)}.
\]
Therefore, by Slutsky's theorem, we get the following:
\begin{equation}\label{eq:reciprocal_scaled_final}
\begin{gathered}
\sqrt{b_\alpha}\left(
\frac{1}{\mathrm{KL}_{\inf}(\hat q_{\tau_\alpha},m_o)}
-\frac{1}{\mathrm{KL}_{\inf}(q,m_o)}
\right)
\xRightarrow{d}
\mathcal{N} \left(
0,\ \frac{\sigma^2(q,m_o)}{\mathrm{KL}_{\inf}(q,m_o)^3}
\right), \\
\sqrt{b_\alpha}\left(
\frac{1}{\mathrm{KL}_{\inf}(\hat q_{\tau_\alpha-1},m_o)}
-\frac{1}{\mathrm{KL}_{\inf}(q,m_o)}
\right)
\xRightarrow{d}
\mathcal{N} \left(
0,\ \frac{\sigma^2(q,m_o)}{\mathrm{KL}_{\inf}(q,m_o)^3}
\right).
\end{gathered}
\end{equation}

 It follows that \eqref{eq:tau_sandwich_final} can be re-written as, 
\begin{equation}
    \sqrt{b_\alpha}\,
\left(
\frac{1}{\mathrm{KL}_{\inf}(\hat q_{\tau_\alpha},m_o)}  \frac{\beta(\tau_\alpha,\alpha)}{b_\alpha}
-
\frac{1}{\mathrm{KL}_{\inf}(q,m_o)}
\right)
\ \le\
\sqrt{b_\alpha}
\left(
\frac{\tau_\alpha}{b_\alpha}
-
\frac{1}{\mathrm{KL}_{\inf}(q,m_o)}
\right) \nonumber
\end{equation}
\begin{equation} \label{eq:extra}
<\
\frac{1}{\sqrt{b_\alpha}}
+
\sqrt{b_\alpha}\,
\left(
\frac{1}{\mathrm{KL}_{\inf}(\hat q_{\tau_\alpha-1},m_o)} \frac{\beta(\tau_\alpha-1,\alpha)}{b_\alpha}
-
\frac{1}{\mathrm{KL}_{\inf}(q,m_o)}
\right).
\end{equation}

We now show that the lower and upper bounds in \eqref{eq:extra} for
$
\sqrt{b_\alpha}\left(
\frac{\tau_\alpha}{b_\alpha}
-
\frac{1}{\mathrm{KL}_{\inf}(q,m_o)}
\right)$
both converge to the desired Gaussian limit. Using the definition of $\beta(n,\alpha)$, we get, 
\begin{equation}\label{eq:beta_ratio_tau}
\frac{\beta(\tau_\alpha,\alpha)}{b_\alpha}
=
1+
\frac{1+\log(2(1+\tau_\alpha))}{b_\alpha}.
\end{equation}
By Lemma~\ref{lemma_almost_sure}, we have $\frac{\tau_\alpha}{b_\alpha}
\xrightarrow{\rm a.s.}
\frac{1}{\mathrm{KL}_{\inf}(q,m_o)}.$
In particular, \(\tau_\alpha=O(b_\alpha)\), almost surely. Hence,
\[
\log(2(1+\tau_\alpha))=O(\log b_\alpha),
\qquad \text{almost surely}.
\]
Using this in \eqref{eq:beta_ratio_tau}, we get
\[
\frac{\beta(\tau_\alpha,\alpha)}{b_\alpha}-1
=
O\left(\frac{\log b_\alpha}{b_\alpha}\right),
\qquad \text{almost surely}.
\]
Thus,
\begin{equation}\label{eq:beta_negligible_tau}
\sqrt{b_\alpha}
\left(
\frac{\beta(\tau_\alpha,\alpha)}{b_\alpha}-1
\right)
=
O\left(\frac{\log b_\alpha}{\sqrt{b_\alpha}}\right)
\to 0,
\qquad \text{almost surely}.
\end{equation}
Similarly, we also have
\begin{equation}\label{eq:beta_negligible_tau_minus}
\sqrt{b_\alpha}
\left(
\frac{\beta(\tau_\alpha-1,\alpha)}{b_\alpha}-1
\right)
\to 0,
\qquad \text{almost surely}.
\end{equation}

Next, since \(\tau_\alpha\to\infty\) almost surely and using Lemma \ref{lem:mom_conv_LnL}, we have
\begin{equation}\label{eq:kl_tau_consistency}
\mathrm{KL}_{\inf}(\hat q_{\tau_\alpha},m_o)
\xrightarrow{\rm a.s.}
\mathrm{KL}_{\inf}(q,m_o), \textrm{ and } \mathrm{KL}_{\inf}(\hat q_{\tau_\alpha-1},m_o)
\xrightarrow{\rm a.s.}
\mathrm{KL}_{\inf}(q,m_o).
\end{equation}

Since \(q\in\mathcal Q^{\rm bd}\), we have
$
\mathrm{KL}_{\inf}(q,m_o)>0.$ Therefore, using \eqref{eq:kl_tau_consistency}, we get,
\begin{equation}\label{eq:reciprocal_op1}
\frac{1}{\mathrm{KL}_{\inf}(\hat q_{\tau_\alpha},m_o)}
=O_p(1) \qquad \text{and}
\qquad
\frac{1}{\mathrm{KL}_{\inf}(\hat q_{\tau_\alpha-1},m_o)}
=O_p(1).
\end{equation}

We now analyze the lower bound in \eqref{eq:extra}. It can be re-written as,
\begin{align}
&\sqrt{b_\alpha}
\left(
\frac{1}{\mathrm{KL}_{\inf}(\hat q_{\tau_\alpha},m_o)}
\frac{\beta(\tau_\alpha,\alpha)}{b_\alpha}
-
\frac{1}{\mathrm{KL}_{\inf}(q,m_o)}
\right)
\notag\\
&=
\sqrt{b_\alpha}
\left(
\frac{1}{\mathrm{KL}_{\inf}(\hat q_{\tau_\alpha},m_o)}
-
\frac{1}{\mathrm{KL}_{\inf}(q,m_o)}
\right)
 +
\sqrt{b_\alpha}
\frac{1}{\mathrm{KL}_{\inf}(\hat q_{\tau_\alpha},m_o)}
\left(
\frac{\beta(\tau_\alpha,\alpha)}{b_\alpha}-1
\right).
\label{eq:lower_decomp}
\end{align}
By \eqref{eq:beta_negligible_tau} and \eqref{eq:reciprocal_op1},
\[
\frac{\sqrt{b_\alpha}}{\mathrm{KL}_{\inf}(\hat q_{\tau_\alpha},m_o)}
\left(
\frac{\beta(\tau_\alpha,\alpha)}{b_\alpha}-1
\right)
\xrightarrow{p}0.
\]
Hence,
\begin{align}
\sqrt{b_\alpha}
\left(
\frac{1}{\mathrm{KL}_{\inf}(\hat q_{\tau_\alpha},m_o)}
\frac{\beta(\tau_\alpha,\alpha)}{b_\alpha}
-
\frac{1}{\mathrm{KL}_{\inf}(q,m_o)}
\right) & =
\sqrt{b_\alpha}
\left(
\frac{1}{\mathrm{KL}_{\inf}(\hat q_{\tau_\alpha},m_o)}
-
\frac{1}{\mathrm{KL}_{\inf}(q,m_o)}
\right) \nonumber \\
 & \qquad +o_p(1).
\label{eq:lower_same_as_recip}
\end{align}
Using \eqref{eq:reciprocal_scaled_final}, we obtain
\begin{align}
\sqrt{b_\alpha}
\left(
\frac{1}{\mathrm{KL}_{\inf}(\hat q_{\tau_\alpha},m_o)}
\frac{\beta(\tau_\alpha,\alpha)}{b_\alpha}
-
\frac{1}{\mathrm{KL}_{\inf}(q,m_o)}
\right)  \xRightarrow{d}
\mathcal{N}\left(
0,
\frac{\sigma^2(q,m_o)}
{\mathrm{KL}_{\inf}(q,m_o)^3}
\right).
\label{eq:lower_limit}
\end{align}

We now analyze the upper bound in \eqref{eq:extra}:
\begin{align}
& \frac{1}{\sqrt{b_\alpha}} + 
\sqrt{b_\alpha}
\left(
\frac{1}{\mathrm{KL}_{\inf}(\hat q_{\tau_\alpha-1},m_o)}
\frac{\beta(\tau_\alpha-1,\alpha)}{b_\alpha}
-
\frac{1}{\mathrm{KL}_{\inf}(q,m_o)}
\right)
\notag\\
&=
\frac{1}{\sqrt{b_\alpha}} + \sqrt{b_\alpha}
\left(
\frac{1}{\mathrm{KL}_{\inf}(\hat q_{\tau_\alpha-1},m_o)}
-
\frac{1}{\mathrm{KL}_{\inf}(q,m_o)}
\right) +
\frac{\sqrt{b_\alpha}}{\mathrm{KL}_{\inf}(\hat q_{\tau_\alpha-1},m_o)}
\left(
\frac{\beta(\tau_\alpha-1,\alpha)}{b_\alpha}-1
\right).
\label{eq:upper_decomp}
\end{align}
By \eqref{eq:beta_negligible_tau_minus} and \eqref{eq:reciprocal_op1},
\[
\frac{\sqrt{b_\alpha}}{\mathrm{KL}_{\inf}(\hat q_{\tau_\alpha-1},m_o)}
\left(
\frac{\beta(\tau_\alpha-1,\alpha)}{b_\alpha}-1
\right)
\xrightarrow{p}0.
\]
Therefore,
\begin{align}
\sqrt{b_\alpha}
\left(
\frac{1}{\mathrm{KL}_{\inf}(\hat q_{\tau_\alpha-1},m_o)}
\frac{\beta(\tau_\alpha-1,\alpha)}{b_\alpha}
-
\frac{1}{\mathrm{KL}_{\inf}(q,m_o)}
\right) =
\sqrt{b_\alpha}
\left(
\frac{1}{\mathrm{KL}_{\inf}(\hat q_{\tau_\alpha-1},m_o)}
-
\frac{1}{\mathrm{KL}_{\inf}(q,m_o)}
\right)
+o_p(1).
\label{eq:upper_same_as_recip}
\end{align}
Using \eqref{eq:reciprocal_scaled_final} again, we get
\begin{align}
\frac{1}{\sqrt{b_\alpha}} + \sqrt{b_\alpha}
\left(
\frac{1}{\mathrm{KL}_{\inf}(\hat q_{\tau_\alpha-1},m_o)}
\frac{\beta(\tau_\alpha-1,\alpha)}{b_\alpha}
-
\frac{1}{\mathrm{KL}_{\inf}(q,m_o)}
\right) \qquad \xRightarrow{d}
\mathcal{N}\left(
0,
\frac{\sigma^2(q,m_o)}
{\mathrm{KL}_{\inf}(q,m_o)^3}
\right).
\label{eq:upper_limit}
\end{align}

Therefore, the lower and upper bounds in \eqref{eq:extra} both
converge in distribution to the same Gaussian limit. Hence, we conclude that,
\[
\sqrt{b_\alpha}
\left(
\frac{\tau_\alpha}{b_\alpha}
-
\frac{1}{\mathrm{KL}_{\inf}(q,m_o)}
\right)
\xRightarrow{d}
\mathcal{N}\left(
0,
\frac{\sigma^2(q,m_o)}
{\mathrm{KL}_{\inf}(q,m_o)^3}
\right).
\]

This completes the proof.
\hfill$\Box$

\subsection{Proof of Proposition \ref{lem:ci-tau}}
We first prove the proposition for the case $m_o>m(q)$ (the case $m_o<m(q)$ uses the $\mathrm{KL}^-_{\inf}$ formula analogously).

Let
\[
L:=\mathrm{KL}_{\inf}(q,m_o)
=\sup_{\lambda\in[0,\bar\lambda]}\ \mathbb{E}_q\!\big[\ell(\lambda,X)\big],
\]

We use the identity
\begin{equation}\label{eq:var_id_LnL}
\hat\sigma_n^2
=\frac1n\sum_{i=1}^n \ell(\lambda^\star_n,X_i)^2
-\left(\frac1n\sum_{i=1}^n \ell(\lambda^\star_n,X_i)\right)^2.
\end{equation}

Using Lemma \ref{lem:mom_conv_LnL}, we get
\[
\hat\sigma_n^2
\xrightarrow{\rm a.s.}
\mathbb{E}_q[\ell(\lambda^\star,X)^2]-\Big(\mathbb{E}_q[\ell(\lambda^\star,X)]\Big)^2
=\textrm{Var}_q \big(\ell(\lambda^\star,X)\big),  
\textrm{ and }
\]

\[  \mathrm{KL}_{\inf}(\hat{q}_n,m_o) \xrightarrow[]{\rm a.s.}\mathrm{KL}_{\inf}(q,m_o).
\]

Hence, by the continuous mapping theorem,
\begin{equation} \label{sup_eq_vik_3}
 \frac{\hat\sigma_n^2}{ (\mathrm{KL}_{\inf}(\hat{q}_n,m_o))^{\,3}}
\xrightarrow{\rm a.s.}
\frac{\textrm{Var}_q(\ell(\lambda^\star,X))}{L^{3}}
=
\frac{\textrm{Var}_q(\ell(\lambda^\star,X))}{\mathrm{KL}_{\inf}(q,m_o)^3}.   
\end{equation}

Using Lemma \ref{lemma_almost_sure}, we also know that $ \lim_{\alpha \to 0}\tau_\alpha \to \infty$ almost surely. Combining this with \eqref{sup_eq_vik_3}, we get the desired result.
This completes the proof.

\hfill $\Box$

\subsection{Supporting Lemmas}
\begin{lemma}\label{lem:finiteMGF}
Let $q \in \mathcal{Q}^{\rm bd}$. For $k\in\mathbb{N}$, $\mathbb{E}_q[|\ell(\lambda^\star, X)|^k] < \infty$.
\end{lemma}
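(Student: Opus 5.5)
The plan is to bound $\ell(\lambda^\star,X)=\log(1-\lambda^\star(X-m_o))$ separately from above and from below, uniformly enough that every absolute moment is finite. As throughout the appendix, I will treat the case $m_o>m(q)$, so that $\lambda^\star\in[0,\bar\lambda]$ with $\bar\lambda=1/(1-m_o)$. The case $m_o<m(q)$ is symmetric: replace $1-X$ by $X$ and $\bar\lambda$ by $-1/m_o$.

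\emph{Upper bound.} For every $x\in[0,1]$ and $\lambda\in[0,\bar\lambda]$ we have $1-\lambda(x-m_o)\le 1+\lambda m_o\le 1+m_o/(1-m_o)=1/(1-m_o)$. Hence $\ell(\lambda^\star,X)\le\log(1/(1-m_o))$ deterministically, and only the behaviour of $\ell$ near $-\infty$ matters.

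\emph{Lower bound, split along the three cases of Appendix~\ref{appen_1}.}
\begin{itemize}
\item \textbf{Case 2} ($\lambda^\star<\bar\lambda$). Here $1-\lambda^\star(x-m_o)\ge 1-\lambda^\star(1-m_o)>0$ for all $x\in[0,1]$. So $\ell(\lambda^\star,X)$ is a bounded random variable and all its moments are trivially finite. By the remark, $\lambda^\star>0$, though this is not even needed here.
\item \textbf{Cases 1 and 3} ($\lambda^\star=\bar\lambda$). We have the explicit identity $1-\bar\lambda(X-m_o)=(1-X)/(1-m_o)$, so
\[
\ell(\bar\lambda,X)=\log(1-X)-\log(1-m_o).
\]
In both cases $\Phi(q,m_o)\le 1<\infty$. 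By the convention on $\Phi$, this forces $q(\{1\})=0$ and $\mathbb{E}_q[1/(1-X)]\le 1/(1-m_o)<\infty$. Now use the elementary inequality that for each $k$ there is $C_k<\infty$ with $|\log y|^k\le C_k/y$ for all $y\in(0,1]$; this holds since $y|\log y|^k$ is bounded on $(0,1]$. Applying it with $y=1-X$ gives $\mathbb{E}_q|\log(1-X)|^k\le C_k\,\mathbb{E}_q[1/(1-X)]<\infty$. Minkowski's inequality then handles the additive constant $\log(1-m_o)$.
\end{itemize}

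The only step that needs care is Cases 1 and 3. There $\ell$ is genuinely unbounded below, and one must extract integrability of $1/(1-X)$ from $\Phi(q,m_o)\le 1$, including checking that the convention $\Phi=+\infty$ when $q(\{1\})>0$ rules out an atom at $1$. Once that is in place, the domination $|\log y|^k\lesssim 1/y$ finishes the proof. Assumption~\ref{ass_1} is not needed for this lemma: the first moment of $1/(1-X)$ already controls every power of the logarithm.
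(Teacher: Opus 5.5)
Your proof is correct, and it takes a somewhat different route from the paper.

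The paper avoids any case split. It shows that the moment generating function of $\ell(\lambda^\star,X)$ is finite at $\theta=\pm1$. The case $\theta=1$ is trivial, since $\mathbb{E}_q[e^{\ell}]=1-\lambda^\star(m(q)-m_o)$. The case $\theta=-1$ rests on the inequality $\mathbb{E}_q[1/(1-\lambda^\star(X-m_o))]\le 1$, which the paper imports from the interpretation of this quantity as the mass the primal $\mathrm{KL}_{\inf}$ optimizer places on the support of $q$.

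You instead split along the three cases. In Case 2 you get outright boundedness of $\ell(\lambda^\star,X)$. In Cases 1 and 3, the explicit form $e^{-\ell(\bar\lambda,X)}=(1-m_o)/(1-X)$ gives $\mathbb{E}_q[e^{-\ell(\bar\lambda,X)}]=\Phi(q,m_o)\le 1$. This is exactly the paper's $\theta=-1$ bound specialized to the boundary case. Your estimate $|\log y|^k\le C_k/y$ then plays the role of ``a finite exponential moment implies all polynomial moments.''

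So the underlying mechanism is the same, but the two versions buy different things:
\begin{itemize}
\item Yours is self-contained, with no appeal to the primal optimizer. It also makes clear that the only source of unboundedness is the boundary case $\lambda^\star=\bar\lambda$.
\item The paper's is shorter and uniform across cases.
\end{itemize}

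The two are linked by the identity $\mathbb{E}_q[e^{-\ell(\lambda,X)}]=1+\lambda\Psi(\lambda,q)$. This equals $1$ at an interior maximizer, where $\Psi(\lambda^\star,q)=0$. It is at most $1$ at $\bar\lambda$, since there $\Psi(\bar\lambda,q)=(1-m_o)(\Phi(q,m_o)-1)\le 0$. This identity would also give a citation-free proof of the paper's key inequality.

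Your remark that Assumption~\ref{ass_1} is not needed is consistent with the paper, which does not invoke it for this lemma.
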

\begin{proof}
We prove the lemma for the case \(m_o > m(q)\); the proof for the case \(m_o < m(q)\) follows analogously.

To show this, we show that the moment-generating function for the random variable $\ell(\lambda^\star, X)$, when $X\sim q$, is finite in a neighborhood containing $0$, and hence, all its moments are finite. In particular, we show the following for both $\theta = 1 $ and $\theta = -1$,
$$\mathbb{E}_q[e^{\theta \ell(\lambda^\star, X)}] = \mathbb{E}_q[e^{\theta \log(1-\lambda^\star(X-m_o))}] < \infty.$$

For $\theta = 1$, we have
$$\mathbb{E}_q[e^{\theta \log(1-\lambda^\star(X-m_o))}] =(1- \lambda^\star(m(q)-m_o)). $$
Since $m_o > m(q)$ and $\lambda^\star\in [0, \frac{1}{1-m_o}]$, the term on the right hand side in the above equation is finite. 

Finally, for $\theta = -1$, we have
$$\mathbb{E}_q[e^{\theta \log(1-\lambda^\star(X-m_o))}] = \mathbb{E}_q \left[ \frac{1}{1-\lambda^\star(X-m_o)}\right] \le 1, $$
where the last inequality follows since the middle expression in the set of inequalities above corresponds to the mass that the primal optimizer in  $\mathrm{KL}_{\rm inf}(q,m_o)$ puts on the support of distribution $q$ (see \cite{agrawal2022bandits}). 
\end{proof}

\begin{lemma}
\label{lem:uniform_bounds_l_g_gprime}
Let \(\mathcal R\) be either of the following sets:
\[
\mathcal R
=
I_\epsilon\times[0,1],
\qquad
I_\epsilon:=[\lambda^\star-\epsilon,\lambda^\star+\epsilon]
\subset(0,\bar\lambda),
\qquad \lambda^\star\in(0,\bar\lambda),
\]
for some \(\epsilon>0\), or
\[
\mathcal R
=
[0,\bar\lambda]\times[0,1-\kappa],
\qquad
\kappa\in(0,1),\qquad 1-\kappa>m_o .
\]
Then \(\ell(\lambda,x)\), \(g(\lambda,x)\), and \(g'(\lambda,x)\) are
uniformly bounded on \(\mathcal R\). Moreover, they are uniformly
Lipschitz in \(\lambda\) on \(\mathcal R\).
\end{lemma}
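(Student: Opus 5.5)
We will prove boundedness and the Lipschitz property directly from the closed forms $\ell(\lambda,x)=\log(1-\lambda(x-m_o))$, $g(\lambda,x)=(x-m_o)/(1-\lambda(x-m_o))$ and $g'(\lambda,x)=(x-m_o)^2/(1-\lambda(x-m_o))^2$. The whole argument reduces to one fact: on $\mathcal R$ the denominator $h(\lambda,x):=1-\lambda(x-m_o)$ is bounded away from $0$ and bounded above. Once we have constants $0<h_{\min}\le h(\lambda,x)\le h_{\max}<\infty$ on $\mathcal R$, everything follows from $|x-m_o|\le 1$. Specifically, $|\ell|\le\max\{|\log h_{\min}|,|\log h_{\max}|\}$, $|g|\le 1/h_{\min}$ and $|g'|\le 1/h_{\min}^2$.

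\textbf{Step 1 (lower bound on $h$).} The two choices of $\mathcal R$ are handled separately. The upper bound is trivial in both: since $\lambda\in[-1/m_o,\bar\lambda]$ and $x\in[0,1]$, we have $h\le 1+\max\{\bar\lambda,1/m_o\}$.
\begin{itemize}
\item For $\mathcal R=I_\epsilon\times[0,1]$ with $I_\epsilon\subset(0,\bar\lambda)$, $h$ is affine in $x$, so its minimum over $x\in[0,1]$ is at $x=1$ (because $\lambda>0$). This gives $h\ge 1-(\lambda^\star+\epsilon)(1-m_o)>0$, since $\lambda^\star+\epsilon<\bar\lambda=1/(1-m_o)$. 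For $x<m_o$, $h\ge 1$ anyway.
\item For $\mathcal R=[0,\bar\lambda]\times[0,1-\kappa]$, the minimum is at $\lambda=\bar\lambda$, $x=1-\kappa$. Since $1-\kappa>m_o$, this gives $h\ge 1-(1-\kappa-m_o)/(1-m_o)=\kappa/(1-m_o)>0$. When $x\le m_o$ we again have $h\ge 1$.
\end{itemize}
This step is the only place where the hypotheses ($I_\epsilon$ strictly inside $(0,\bar\lambda)$, and $\kappa>0$) enter. I expect it to be the main, though mild, point: it is exactly what excludes the singularity at $(\bar\lambda,1)$.

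\textbf{Step 2 (Lipschitz in $\lambda$).} For fixed $x$, each of the three functions is $C^1$ in $\lambda$ on the relevant interval, which is convex in both cases. By the mean value theorem it therefore suffices to bound the $\lambda$-derivatives uniformly on $\mathcal R$:
\[
\partial_\lambda \ell=-g,\qquad \partial_\lambda g=g',\qquad \partial_\lambda g'=\frac{2(x-m_o)^3}{h(\lambda,x)^3}.
\]
These are bounded by $1/h_{\min}$, $1/h_{\min}^2$ and $2/h_{\min}^3$ respectively. This yields Lipschitz constants that are uniform in $x$, which completes the proof. There is no probabilistic content here; the only care needed is to keep the two regions separate in Step 1.
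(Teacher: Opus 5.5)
Your proof is correct and follows the paper's argument: bound the denominator $1-\lambda(x-m_o)$ away from zero on each region (the paper uses the same constants $m_\epsilon=1-(\lambda^\star+\epsilon)(1-m_o)$ and $\kappa/(1-m_o)$), deduce uniform bounds on $\ell$, $g$, $g'$ and on $\partial_\lambda g'=2(x-m_o)^3/(1-\lambda(x-m_o))^3$, and conclude uniform Lipschitz continuity by the mean value theorem. Your explicit upper bound on the denominator, which boundedness of $\ell$ requires, is a point the paper leaves implicit; your constants (via $|x-m_o|\le 1$) are cruder than the paper's case-wise ones but equally valid.
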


\begin{proof}
First suppose \( \mathcal R=I_\epsilon\times[0,1]\), where
\(\lambda^\star\in(0,\bar\lambda)\) and
\(I_\epsilon=[\lambda^\star-\epsilon,\lambda^\star+\epsilon]
\subset(0,\bar\lambda)\). Let,
\[
m_\epsilon:=1-(\lambda^\star+\epsilon)(1-m_o)>0 .
\]
For all \((\lambda,x)\in \mathcal R\),
\[
1-\lambda(x-m_o)\ge
\begin{cases}
1, & x\le m_o,\\[2pt]
m_\epsilon, & x\ge m_o.
\end{cases}
\]
Hence \(\ell\) is uniformly bounded on \(\mathcal R\). Moreover,
\[
|g(\lambda,x)|
\le
B_\epsilon
:=
\max\left\{m_o,\frac{1-m_o}{m_\epsilon}\right\}<\infty,
\]
and
\[
0\le g'(\lambda,x)
\le
L_\epsilon
:=
\max\left\{m_o^2,\frac{(1-m_o)^2}{m_\epsilon^2}\right\}<\infty.
\]
Thus \(g\) and \(g'\) are uniformly bounded on \(\mathcal R\). Finally,
\[
\frac{\partial \ell(\lambda,x)}{\partial \lambda}
=-g(\lambda,x),\qquad
\frac{\partial g(\lambda,x)}{\partial \lambda}
=g'(\lambda,x),
\]
and
\[
\left|\frac{\partial g'(\lambda,x)}{\partial \lambda}\right|
=
\frac{2|x-m_o|^3}{(1-\lambda(x-m_o))^3}
\le
\max\left\{2m_o^3,\frac{2(1-m_o)^3}{m_\epsilon^3}\right\}<\infty .
\]
The mean value theorem therefore gives the uniform Lipschitz property of
\(\ell\), \(g\), and \(g'\) in \(\lambda\) on \(\mathcal R\).

Next consider \( \mathcal R = [0,\bar\lambda]\times[0,1-\kappa]\). For all
\((\lambda,x)\in\mathcal R\),
\[
1-\lambda(x-m_o)\ge
\begin{cases}
1, & x\le m_o,\\[2pt]
\dfrac{\kappa}{1-m_o}, & x\in[m_o,1-\kappa].
\end{cases}
\]
Hence \(\ell\) is uniformly bounded on \(\mathcal R\). Also,
\[
|g(\lambda,x)|
\le
\max\left\{
m_o,\frac{(1-\kappa-m_o)(1-m_o)}{\kappa}
\right\}<\infty,
\]
and
\[
0\le g'(\lambda,x)
\le
K_\kappa
:=
\max\left\{
m_o^2,
\frac{(1-\kappa-m_o)^2}{\bigl(\kappa/(1-m_o)\bigr)^2}
\right\}<\infty .
\]
Thus \(g\) and \(g'\) are uniformly bounded on \(\mathcal R\). Finally,
\[
\frac{\partial \ell(\lambda,x)}{\partial \lambda}
=-g(\lambda,x),\qquad
\frac{\partial g(\lambda,x)}{\partial \lambda}
=g'(\lambda,x),
\]
and
\[
\left|\frac{\partial g'(\lambda,x)}{\partial \lambda}\right|
=
\frac{2|x-m_o|^3}{(1-\lambda(x-m_o))^3}
\le
\max\left\{
2m_o^3,
\frac{2(1-\kappa-m_o)^3}{\bigl(\kappa/(1-m_o)\bigr)^3}
\right\}<\infty .
\]
The mean value theorem gives the uniform Lipschitz property of
\(\ell\), \(g\), and \(g'\) in \(\lambda\) on \(\mathcal R\).
\end{proof}

\begin{lemma}\label{lem:mom_conv_LnL}
Let $q \in \mathcal{Q}^{\rm bd}$. For $k\in\{1,2\}$, under Assumption \ref{ass_1}, the following holds:
\begin{equation}\label{eq:mom_goal_LnL}
\frac1n\sum_{i=1}^n \ell(\lambda^\star_n,X_i)^k
\;\xrightarrow[]{\rm a.s.}\;
\mathbb{E}_q\!\big[\ell(\lambda^\star,X)^k\big] <\infty.
\end{equation}
In particular, taking $k=1$ gives
$\mathrm{KL}_{\inf}(\hat q_n,m_o)\xrightarrow[]{\rm a.s.}\mathrm{KL}_{\inf}(q,m_o)$.
\end{lemma}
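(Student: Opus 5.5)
The plan is to reduce the claim to a uniform strong law over a compact $\lambda$-set, together with the almost sure convergence $\lambda^\star_n \to \lambda^\star$. As elsewhere, treat only $m_o > m(q)$; eventually $\mathrm{KL}_{\inf}(\hat q_n,m_o)=\mathrm{KL}^+_{\inf}(\hat q_n,m_o)$ and $\lambda^\star_n\in[0,\bar\lambda]$. The Dirac case is trivial, as in Remark~\ref{rem:dirac_degenerate_case}. Finiteness of the right-hand side of \eqref{eq:mom_goal_LnL} is Lemma~\ref{lem:finiteMGF}. Throughout, write
\[
\frac1n\sum_{i=1}^n \ell(\lambda^\star_n,X_i)^k-\mathbb{E}_q[\ell(\lambda^\star,X)^k]
=\frac1n\sum_{i=1}^n\big(\ell(\lambda^\star_n,X_i)^k-\ell(\lambda^\star,X_i)^k\big)
+\Big(\frac1n\sum_{i=1}^n \ell(\lambda^\star,X_i)^k-\mathbb{E}_q[\ell(\lambda^\star,X)^k]\Big).
\]
The second bracket tends to $0$ almost surely by the SLLN, using Lemma~\ref{lem:finiteMGF}. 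The work is therefore in the first bracket, and we split by the three cases of Appendix~\ref{appen_1}.

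\textbf{Case 1.} Here $\Phi(q,m_o)<1$. Exactly as in the proof of Theorem~\ref{lem:kl_inf_clt}, $\Phi(\hat q_n,m_o)\to\Phi(q,m_o)$ almost surely, so $\lambda^\star_n=\bar\lambda=\lambda^\star$ eventually on each path. The first bracket is then eventually identically zero.

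\textbf{Case 2.} Here $\lambda^\star\in(0,\bar\lambda)$. Lemma~\ref{sup_lemma_case_2_combined} gives $\lambda^\star_n\to\lambda^\star$ almost surely, so $\lambda^\star_n\in I_\epsilon$ eventually. By Lemma~\ref{lem:uniform_bounds_l_g_gprime}, $\ell$ is bounded by some $B$ and uniformly Lipschitz (constant $L$) in $\lambda$ on $I_\epsilon\times[0,1]$. Hence
\[
|\ell(\lambda^\star_n,x)^k-\ell(\lambda^\star,x)^k|\le kB^{k-1}L|\lambda^\star_n-\lambda^\star|,
\]
uniformly in $x$, and the first bracket tends to $0$.

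\textbf{Case 3.} Here $\lambda^\star=\bar\lambda$, and $X$ can be arbitrarily close to $1$, so $\ell(\bar\lambda,\cdot)$ is unbounded. This is the main obstacle. The plan is a truncation argument. Fix $\kappa$ with $1-\kappa>m_o$ and split each summand according to whether $X_i\le 1-\kappa$ or $X_i>1-\kappa$.

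On $\{X_i\le1-\kappa\}$, the second part of Lemma~\ref{lem:uniform_bounds_l_g_gprime} gives uniform boundedness and Lipschitzness of $\ell$ on $[0,\bar\lambda]\times[0,1-\kappa]$. Together with $\lambda^\star_n\to\lambda^\star$ almost surely (Lemma~\ref{sup_lemma_case_2_combined}), this contribution vanishes as in Case 2.

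On $\{X_i>1-\kappa\}$, use monotonicity. For $x>m_o$, the map $\lambda\mapsto 1-\lambda(x-m_o)$ is decreasing, so for $\lambda\in[0,\bar\lambda]$,
\[
0\ge \ell(\lambda,x)\ge \ell(\bar\lambda,x)=\log\frac{1-x}{1-m_o}.
\]
Consequently $|\ell(\lambda,x)|^k\le|\ell(\bar\lambda,x)|^k$ on this set. The tail contribution is then bounded by
\[
\frac2n\sum_{i=1}^n|\ell(\bar\lambda,X_i)|^k\mathbf 1\{X_i>1-\kappa\}
\xrightarrow{\rm a.s.}2\,\mathbb{E}_q\big[|\ell(\bar\lambda,X)|^k\mathbf 1\{X>1-\kappa\}\big],
\]
which is finite by Lemma~\ref{lem:finiteMGF}. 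Here Assumption~\ref{ass_1}, which gives $\mathbb{E}_q[(1-X)^{-2}]<\infty$, could also be invoked, since it dominates $\log^2$. The limit tends to $0$ as $\kappa\downarrow0$ by dominated convergence. Taking $\limsup_n$ and then $\kappa\downarrow0$, along a countable sequence of $\kappa$ so that the null sets can be intersected, finishes Case 3.

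The statement about $\mathrm{KL}_{\inf}$ follows from $k=1$, since $\mathrm{KL}_{\inf}(\hat q_n,m_o)=\frac1n\sum_i\ell(\lambda^\star_n,X_i)$ and $\mathrm{KL}_{\inf}(q,m_o)=\mathbb{E}_q[\ell(\lambda^\star,X)]$. I expect the Case 3 tail domination to be the delicate step. In particular, one must check that $\lambda^\star_n\le\bar\lambda$ always holds, which it does by feasibility, so that the monotone bound is valid on every path.
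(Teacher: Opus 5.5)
Your proposal is correct and follows essentially the same route as the paper. The paper uses the same split into a perturbation term $\frac1n\sum_i\bigl(\ell(\lambda^\star_n,X_i)^k-\ell(\lambda^\star,X_i)^k\bigr)$ plus an SLLN term, and the same uniform Lipschitz control via Lemma~\ref{lem:uniform_bounds_l_g_gprime} when $\lambda^\star$ is interior. When $\lambda^\star=\bar\lambda$ it likewise uses the $\kappa$-truncation with the monotone domination $|\ell(\lambda,x)|\le|\ell(\bar\lambda,x)|$, dominated convergence, and a countable sequence $\kappa_m\downarrow 0$.

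The only difference is that you dispatch Case~1 directly via the eventual identity $\lambda^\star_n=\bar\lambda$. The paper instead folds Case~1 into the boundary truncation step and cites Lemma~\ref{sup_lemma_case_2_combined} for $\lambda^\star_n\to\bar\lambda$, although that lemma formally covers only $\Phi(q,m_o)\ge 1$; your treatment is therefore slightly tidier.
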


\begin{proof}
We prove the lemma only for the case $m_o>m(q)$ (the case $m_o<m(q)$ follows analogously). 

Fix $k\in\{1,2\}$ and using triangle inequality, we have
\[
\left|\frac1n\sum_{i=1}^n \ell(\lambda^\star_n,X_i)^k
      -\mathbb{E}_q\!\big[\ell(\lambda^\star,X)^k\big]\right|
\le A_{n,k}+B_{n,k},
\]
 where
\[
A_{n,k}\!:=\!\left|\frac1n\sum_{i=1}^n\!\Big(\ell(\lambda^\star_n,X_i)^k-\ell(\lambda^\star,X_i)^k\Big)\right|,
\quad
B_{n,k}\!:=\!\left|\frac1n\sum_{i=1}^n\!\ell(\lambda^\star,X_i)^k
   -\mathbb{E}_q\!\big[\ell(\lambda^\star,X)^k\big]\right|.
\]
It suffices to show $A_{n,k}\to 0$ and $B_{n,k}\to 0$ almost surely.

\noindent{\bf Step 1: $B_{n,k}\to 0$ a.s.}  
Lemma~\ref{lem:finiteMGF} gives that 
\[
\mathbb{E}_q\left[\lvert \ell(\lambda^\star,X) \rvert^k\right] < \infty,
\]
for $k =1,2$. Hence SLLN holds, implying \( B_{n,k} \to 0 \) almost surely. 

\noindent{\bf Step 2: $A_{n,k}\xrightarrow[]{\rm a.s.}0$ when  $\lambda^\star\in(0, \tfrac{1}{1-m_o})$.} Pick $\epsilon>0$ with $I_\epsilon =[\lambda^\star-\epsilon,\lambda^\star+\epsilon]\subset(0,\bar\lambda)$. First observe that, using Lemma \ref{lem:uniform_bounds_l_g_gprime}, $\ell$ is bounded on $I_\epsilon \times[0,1]$ by some $C_\epsilon<\infty$. By Lemma \ref{sup_lemma_case_2_combined},
$\lambda^\star_n\in I_\epsilon$ eventually almost surely. On that full-measure event, using Lemma \ref{sup_lemma_case_2_combined} and Lemma \ref{lem:uniform_bounds_l_g_gprime}, we get,
\[
A_{n,1} \le \frac1n\sum_{i=1}^n\big|\ell(\lambda^\star_n,X_i)-\ell(\lambda^\star,X_i)\big|
\le B_\epsilon |\lambda^\star_n-\lambda^\star| \xrightarrow[]{\rm a.s.} 0,
\]
and, using $|u^2-v^2|\le(|u|+|v|)|u-v|$ together with boundedness of $\ell$,
\[
A_{n,2}\;\le\;2C_\epsilon B_\epsilon\,|\lambda^\star_n-\lambda^\star|\;\xrightarrow[]{\rm a.s.}\;0.
\]

\noindent{\bf Step 3: $A_{n,k}\xrightarrow[]{\rm a.s.}0$ when $\lambda^\star=\bar\lambda$.} Fix $\kappa \in(0,1)$ such that $1-\kappa  > m_o$ (since $m_o \in (0,1)$) and split $A_{n,k}\le A_{n,k}^{(1)}(\kappa)+A_{n,k}^{(2)}(\kappa)$, where
\[
A_{n,k}^{(1)}(\kappa):=\frac1n\sum_{i=1}^n\big|\ell(\lambda^\star_n,X_i)^k-\ell(\bar\lambda,X_i)^k\big|
\mathbf 1_{\{X_i\le 1-\kappa\}},
\]
\[
A_{n,k}^{(2)}(\kappa):=\frac1n\sum_{i=1}^n\big|\ell(\lambda^\star_n,X_i)^k-\ell(\bar\lambda,X_i)^k\big|
\mathbf 1_{\{X_i>1-\kappa\}}.
\]

\emph{Bounding $A^{(1)}_{n,k}(\kappa)$.} For  $(\lambda,x)\in [0,\bar\lambda]\times[0,1-\kappa]$, using Lemma \ref{lem:uniform_bounds_l_g_gprime}, we get that $\ell$ is uniformly bounded and uniformly Lipschitz in $\lambda$ over $x \in [0, 1-\kappa]$, with constants
depending only on $\kappa$ and $m_o$. We also know that $\lambda^\star_n\to\bar\lambda$ a.s. (see Lemma \ref{sup_lemma_case_2_combined}). Hence, using arguments similar to Step 2, for each fixed $\kappa$ and for $k\in\{1,2\}$, we get
\[
A_{n,k}^{(1)}(\kappa)\;\xrightarrow[]{\rm a.s.}\;0.
\]

\emph{Bounding $A^{(2)}_{n,k}(\kappa)$.} 
For each $x\in[0,1)$, observe that, 
$\frac{\partial}{\partial \lambda}\ell(\lambda,x)
=
-\frac{x-m_o}{1-\lambda(x-m_o)}.$
Hence \(\lambda\mapsto\ell(\lambda,x)\) is increasing if \(x<m_o\), decreasing if \(x>m_o\), and identically zero if \(x=m_o\). Since \(\ell(0,x)=0\), it follows that \(\ell(\lambda,x)\) lies between \(0\) and \(\ell(\bar\lambda,x)\) for all \(\lambda\in[0,\bar\lambda]\). Therefore
\[
|\ell(\lambda,x)|\le|\ell(\bar\lambda,x)|,
\qquad \lambda\in[0,\bar\lambda].
\]

Hence, for $k\in \{1,2\}$, 
\[
\big|\ell(\lambda^\star_n,x)^k-\ell(\bar\lambda,x)^k\big|
 \le |\ell(\lambda^\star_n,x)|^k+|\ell(\bar\lambda,x)|^k
 \le 2|\ell(\bar\lambda,x)|^k,
\]
whence
\[
A_{n,k}^{(2)}(\kappa) \le \frac{2}{n}\sum_{i=1}^n|\ell(\bar\lambda,X_i)|^k \mathbf 1_{\{X_i>1-\kappa\}}.
\]
By the SLLN, for each fixed $\kappa$,
\[
\frac1n\sum_{i=1}^n|\ell(\bar\lambda,X_i)|^k \mathbf 1_{\{X_i>1-\kappa\}}
 \xrightarrow[]{\rm a.s.} 
g_k(\kappa):=\mathbb{E}_q\Big[|\ell(\bar\lambda,X)|^k \mathbf 1_{\{X>1-\kappa\}}\Big].
\]

\emph{Conclusion.} Fix any sequence $\kappa_m\downarrow 0$ as $m\to \infty$. Since the countable union of null sets is null, there exists a full-measure event $\Omega_0$ on which, simultaneously for every
$m\ge 1$,
\[
A_{n,k}^{(1)}(\kappa_m)\to 0
\quad\text{and}\quad
\frac1n\sum_{i=1}^n|\ell(\bar\lambda,X_i)|^k \mathbf 1_{\{X_i>1-\kappa_m\}}\to g_k(\kappa_m).
\]
On $\Omega_0$, for every $m$,
\[
\limsup_{n\to\infty} A_{n,k} \le\limsup_{n\to\infty} A_{n,k}^{(1)}(\kappa_m)+\limsup_{n\to\infty} A_{n,k}^{(2)}(\kappa_m) \le 2 g_k(\kappa_m).
\]
Since $\mathbf 1_{\{X>1-\kappa_m\}}\downarrow 0$ pointwise and $|\ell(\bar\lambda, X)|^k$ is
integrable (Lemma~\ref{lem:finiteMGF}), dominated convergence gives $g_k(\kappa_m)\downarrow 0$. Letting
$m\to\infty$ yields $\limsup_n A_{n,k}=0$ on $\Omega_0$, i.e.\ $A_{n,k}\xrightarrow[]{\rm a.s.}0$.

Combining Steps 1, 2, and 3, gives \eqref{eq:mom_goal_LnL}, completing the proof.

\end{proof}

\begin{lemma}\label{sup_lemma_case_2_combined}\label{sup_lemma_case_3_combined}
Assume one of the following holds:
\begin{enumerate}
    \item[(i)] $\Phi(q,m_o)>1$. 
    \item[(ii)] $\Phi(q,m_o)=1$ and Assumption \ref{ass_1}.
\end{enumerate}
Then $\lambda^\star_n \xrightarrow[]{\rm a.s.} \lambda^\star$. Moreover, in case (i),
\[
\sqrt{n}\big(\lambda^\star_n-\lambda^\star\big) \xRightarrow{d}\ \mathcal{N}\left(0, \frac{\operatorname{Var}_q\big(g(\lambda^\star,X)\big)}{\Psi'(\lambda^\star,q)^2}\right),
\]
while in case (ii),
\[
\sqrt{n} \big(\lambda^\star-\lambda^\star_n\big)  \xRightarrow{d}  \frac{Z_+}{\Psi'(\lambda^\star,q)},\]
where $Z\sim\mathcal{N}(0,\ \operatorname{Var}_q(g(\lambda^\star,X)))$ and $Z_+:=\max\{Z,0\}$. 
\end{lemma}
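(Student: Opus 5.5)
We treat $\lambda^\star_n$ as an M-estimator for the concave criterion $\lambda\mapsto \frac1n\sum_i \ell(\lambda,X_i)$ on $[0,\bar\lambda]$, and work with the empirical score $\Psi(\lambda,\hat q_n)=\frac1n\sum_i g(\lambda,X_i)$. Since $g'(\lambda,x)=(x-m_o)^2/(1-\lambda(x-m_o))^2\ge 0$, both $\Psi(\cdot,\hat q_n)$ and $\Psi(\cdot,q)$ are nondecreasing. Non-degeneracy of $q$ gives $\Psi'(\lambda^\star,q)>0$. The first-order conditions say that $\lambda^\star_n$ is either the root of $\Psi(\cdot,\hat q_n)$ in $(0,\bar\lambda)$, or $\lambda^\star_n=\bar\lambda$ when $\Psi(\bar\lambda,\hat q_n)\le 0$, i.e.\ $\Phi(\hat q_n,m_o)\le 1$.

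\emph{Consistency.} In case (i), fix $\eta>0$ with $[\lambda^\star-\eta,\lambda^\star+\eta]\subset(0,\bar\lambda)$. Strict monotonicity gives $\Psi(\lambda^\star-\eta,q)<0<\Psi(\lambda^\star+\eta,q)$. Lemma~\ref{lem:uniform_bounds_l_g_gprime} shows $g$ is bounded at these two points, so the SLLN transfers both inequalities to $\Psi(\cdot,\hat q_n)$ eventually a.s. Monotonicity then traps the empirical root, giving $\lambda^\star_n\in[\lambda^\star-\eta,\lambda^\star+\eta]$ eventually; taking $\eta\downarrow0$ along a countable sequence yields a.s.\ convergence. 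In case (ii), only the lower point $\lambda^\star-\eta=\bar\lambda-\eta$ is needed. There $\Psi(\bar\lambda-\eta,q)<0$ and $g$ is bounded, since $1-\lambda(x-m_o)\ge 1-(\bar\lambda-\eta)(1-m_o)>0$. Hence eventually $\Psi(\bar\lambda-\eta,\hat q_n)<0$, which forces $\lambda^\star_n\ge\bar\lambda-\eta$.

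\emph{CLT in case (i).} On the eventual event we have $0=\Psi(\lambda^\star_n,\hat q_n)$. A mean-value expansion around $\lambda^\star$ gives $0=\Psi(\lambda^\star,\hat q_n)+(\lambda^\star_n-\lambda^\star)\frac1n\sum_i g'(\tilde c_n,X_i)$. The term $\sqrt n\,\Psi(\lambda^\star,\hat q_n)$ is a normalized sum of i.i.d.\ bounded mean-zero variables, so it is asymptotically $\mathcal N(0,\operatorname{Var}_q g(\lambda^\star,X))$. Uniform Lipschitzness of $g'$ on $I_\epsilon\times[0,1]$ (Lemma~\ref{lem:uniform_bounds_l_g_gprime}) and the SLLN give $\frac1n\sum_i g'(\tilde c_n,X_i)\to\Psi'(\lambda^\star,q)$ a.s. Solving for $\lambda^\star_n-\lambda^\star$ and applying Slutsky gives the stated normal limit.

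\emph{Case (ii), the main obstacle.} The boundary creates a one-sided limit, and $g(\bar\lambda,x)=(x-m_o)/((1-m_o)\cdot\frac{1-x}{1-m_o})$ blows up as $x\to1$. Assumption~\ref{ass_1}, $\mathbb E_q[(1-X)^{-2}]<\infty$, is what gives $\operatorname{Var}_q g(\bar\lambda,X)<\infty$ and integrability of $g'(\bar\lambda,X)\asymp(1-X)^{-2}$. The key identity is
\[
\lambda^\star_n=\bar\lambda \iff \Psi(\bar\lambda,\hat q_n)\le0;
\]
otherwise $\Psi(\lambda^\star_n,\hat q_n)=0$. In the latter case expand $\Psi(\bar\lambda,\hat q_n)=\Psi(\lambda^\star_n,\hat q_n)+(\bar\lambda-\lambda^\star_n)\frac1n\sum_i g'(\tilde c_n,X_i)$. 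This yields
\[
\sqrt n(\bar\lambda-\lambda^\star_n)=\frac{\max\{\sqrt n\,\Psi(\bar\lambda,\hat q_n),0\}}{D_n},\qquad D_n:=\frac1n\sum_i g'(\tilde c_n,X_i),
\]
valid in both cases, with $D_n$ arbitrary when the numerator is zero. Since $\Psi(\bar\lambda,q)=0$, the classical CLT gives $\sqrt n\,\Psi(\bar\lambda,\hat q_n)\Rightarrow Z$, and the continuous map $z\mapsto z_+$ handles the positive part. It remains to show $D_n\to\Psi'(\bar\lambda,q)$ a.s. Because $g'(\cdot,x)$ is nondecreasing in $\lambda$, we get the upper bound $D_n\le\frac1n\sum_i g'(\bar\lambda,X_i)\to\Psi'(\bar\lambda,q)$. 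For the lower bound, for any $\eta$ we have eventually $D_n\ge\frac1n\sum_i g'(\bar\lambda-\eta,X_i)\to\Psi'(\bar\lambda-\eta,q)$, and monotone convergence as $\eta\downarrow0$ closes the gap. Slutsky then gives $Z_+/\Psi'(\lambda^\star,q)$.
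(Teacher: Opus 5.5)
Your overall architecture matches the paper's: consistency, a mean-value expansion of the empirical score, the positive-part representation $\sqrt n(\bar\lambda-\lambda^\star_n)=[\sqrt n\,\Psi(\bar\lambda,\hat q_n)]_+/D_n$ at the boundary, a CLT for the numerator, a.s.\ convergence of the denominator, and Slutsky. Your consistency arguments are a genuine simplification. Since $g(\cdot,x)$ is nondecreasing, you only need to evaluate the empirical score at the fixed points $\lambda^\star\pm\eta$ (or the single point $\bar\lambda-\eta$ in case (ii)) and apply the SLLN there. This replaces the paper's uniform laws of large numbers (ULLN-1/ULLN-2 in case (i), and a ULLN for $\ell$ on $[0,\lambda^\star-\epsilon]$ in case (ii)). Your Lipschitz-plus-SLLN treatment of $\frac1n\sum_i g'(\tilde c_n,X_i)$ in case (i) is likewise cleaner than ULLN-2.

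However, the denominator step in case (ii) rests on a false claim. You assert that $g'(\cdot,x)$ is nondecreasing in $\lambda$, but
\[
\frac{\partial g'(\lambda,x)}{\partial\lambda}=\frac{2(x-m_o)^3}{(1-\lambda(x-m_o))^3},
\]
which is negative for $x<m_o$. There $1-\lambda(x-m_o)=1+\lambda(m_o-x)$ grows with $\lambda$, so $g'(\cdot,x)$ decreases. Consequently both $D_n\le\frac1n\sum_i g'(\bar\lambda,X_i)$ and $D_n\ge\frac1n\sum_i g'(\bar\lambda-\eta,X_i)$ can fail once some $X_i<m_o$. Since $m(q)<m_o$ here, this happens almost surely for all large $n$.

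The repair is easy: split $D_n$ over $\{X_i\ge m_o\}$ and $\{X_i<m_o\}$.
\begin{itemize}
\item On $\{x\ge m_o\}$ your monotone sandwich works verbatim. The upper bound follows from the SLLN using $\mathbb E_q[(1-X)^{-2}]<\infty$, and the lower bound from monotone convergence as $\eta\downarrow 0$.
\item On $\{x<m_o\}$ one has $0\le g'(\lambda,x)\le m_o^2$ and $|\partial_\lambda g'(\lambda,x)|\le 2m_o^3$ for all $\lambda\in[0,\bar\lambda]$. That part therefore converges by the Lipschitz bound, $\tilde c_n\to\bar\lambda$, and the SLLN.
\end{itemize}

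With this fix your route is a valid alternative to the paper's. The paper instead splits at $1-\kappa$: a uniform LLN on $\{x\le 1-\kappa\}$, domination by $(1-m_o)^2(1-x)^{-2}$ on $\{x>1-\kappa\}$, and then $\kappa\downarrow 0$.

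You should also justify that $\mathbb E_q[g'(\bar\lambda,X)]$ equals the (left) derivative $\Psi'(\bar\lambda,q)$ appearing in the statement. The paper obtains this by dominated convergence under the same moment assumption. Separately, $g(\bar\lambda,x)=(1-m_o)(x-m_o)/(1-x)$; your expression drops the factor $1-m_o$, which is harmless.
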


\begin{proof}
First we prove the result for the case when $\Phi(q,m_o)>1$.

\noindent \textbf{Part 1: $\Phi(q,m_o)>1$}.
We proceed in three steps.

\medskip
\noindent \textbf{Step 1: Almost sure consistency of $\lambda_n^\star$.} Recall when $\Phi(q,m_o)>1$, then there exists a unique $\lambda^\star\in(0,\bar\lambda)$ such that, $\Psi(\lambda^\star,q)= 0.$ Pick $\epsilon>0$ with $I_\epsilon =[\lambda^\star-\epsilon,\lambda^\star+\epsilon]\subset(0,\bar\lambda)$.

Using Lemma \ref{lem:uniform_bounds_l_g_gprime}, we get, 
\[
|g(\lambda,x)-g(\lambda',x)|\le L_\epsilon\,|\lambda-\lambda'|\quad\text{for all }\lambda,\lambda'\in I_\epsilon.
\]

Now, fix $\eta>0$ and take a finite $\eta/(4L_\epsilon)$-net $\{\lambda_j\}_{j=1}^J$ of $I_\epsilon$.
Since $g(\lambda_j,\cdot)$ is bounded, the SLLN gives $\Psi(\lambda_j,\hat q_n)\to \Psi(\lambda_j,q)$ a.s.\ for each $j$; thus
\[
\max_{j\le J}|\Psi(\lambda_j,\hat q_n)-\Psi(\lambda_j, q)|\le \eta/2\quad\text{for all large $n$ a.s.}
\]
For any $\lambda\in I_\epsilon$, pick $j$ with $|\lambda-\lambda_j|\le \eta/(4L_\epsilon)$; then
\begin{align*}
\big|\Psi(\lambda,\hat q_n)-\Psi(\lambda,q)\big|
&\le \int |g(\lambda,x)-g(\lambda_j,x)|\, d\hat q_n(x)
   + \big|\Psi(\lambda_j,\hat q_n)-\Psi(\lambda_j,q)\big| \\
  & \qquad + \int |g(\lambda,x)-g(\lambda_j,x)|\, dq(x) \\
&\le L_\epsilon |\lambda-\lambda_j|
   + \frac{\eta}{2}
   + L_\epsilon |\lambda-\lambda_j| \\
&\le 2 L_\epsilon \, |\lambda-\lambda_j| + \frac{\eta}{2} \\
&\le 2 L_\epsilon \cdot \frac{\eta}{4L_\epsilon} + \frac{\eta}{2} \\
&= \eta.
\end{align*}
Hence,
\begin{equation} \label{ulln1}
    \sup_{\lambda\in I_\epsilon}
\big|\Psi(\lambda,\hat q_n)-\Psi(\lambda,q)\big|
\xrightarrow[]{\rm a.s.}0.
\tag{ULLN-1}
\end{equation}

Using Lemma \ref{lem:uniform_bounds_l_g_gprime}, we know that, $g(\lambda, x)$ and $g'(\lambda, x)$ are bounded on
$I_\epsilon\times[0,1]$, 
Hence, by dominated convergence theorem, we get, 
\[
\Psi'(\lambda,q)=E_q[g'(\lambda,X)],
\quad \lambda\in I_\epsilon.
\]
Using the similar arguments as above, we obtain the uniform LLN 

\[
\sup_{\lambda\in I_\epsilon}
\big|\Psi'(\lambda,\hat q_n)-\Psi'(\lambda,q)\big|
\xrightarrow[]{\rm a.s.}0.
\tag{ULLN-2}
\]

Since
\[
\Psi'(\lambda,q)
=\mathbb{E}_q\!\left[\frac{(X-m_o)^2}{(1-\lambda(X-m_o))^2}\right] > 0
\quad\text{on }I_\epsilon,
\]
the map $\Psi(\cdot,q)$ is strictly increasing on $I_\epsilon$ and we know that $\Psi(\lambda^\star,q) = 0$.  
Thus there exists $\upsilon_\epsilon \in(0,\epsilon)$ such that
\[
\Psi(\lambda^\star- \upsilon_\epsilon,q)<0<\Psi(\lambda^\star+\upsilon_\epsilon,q).
\]

By (ULLN-1), for a given sample path, for all large $n$,
\[
\Psi(\lambda^\star-\upsilon_\epsilon,\hat q_n)<0<\Psi(\lambda^\star+\upsilon_\epsilon,\hat q_n).
\]
By (ULLN-2), for a given sample path, for all large $n$, there exists $c_\epsilon>0$ such that
\[
\inf_{\lambda\in I_\epsilon}\Psi'(\lambda,\hat q_n)\ge c_\epsilon>0,
\]
so $\Psi(\cdot,\hat q_n)$ is eventually strictly increasing on $I_\epsilon$ on each sample path.

For the case $\Phi(q,m_0)>1$, the SLLN implies that, on each sample path, $\Phi(\widehat q_n,m_0)>1$ for all sufficiently large $n$, and hence $\Psi(\lambda_n^\star,\widehat q_n)=0$ for all sufficiently large $n$. When $\Phi(q,m_0)=\infty$ because $q$ has a point mass at $1$, the argument is even simpler: after the first $X_i$ equals $1$, we get $\Phi(\widehat q_n,m_0)=\infty$.

The intermediate value theorem then yields, on each sample path, for all large $n$,
\[
\lambda^\star_n\in(\lambda^\star-\upsilon_\epsilon,\lambda^\star+\upsilon_\epsilon)
\quad\text{with}\quad \Psi(\lambda^\star_n,\hat q_n)=0.
\]

 Letting $\epsilon \downarrow 0$ yields $\upsilon_\epsilon  \downarrow 0$. Hence, it follows that,
\[
\lambda^\star_n\xrightarrow[]{\rm a.s.}\lambda^\star.
\]

\noindent \textbf{Step 2: CLT for $\lambda^*_n$.}
We now prove the second part of the Lemma. Fix $\epsilon>0$ small. Since
$\lambda\mapsto\Psi(\lambda,\hat q_n)$ is continuously differentiable on $I_\epsilon$,
and $\lambda^\star_n\xrightarrow[]{\rm a.s.}\lambda^\star$, it follows that
$\lambda^\star_n\in I_\epsilon$ eventually almost surely. Applying the Mean Value Theorem to $\lambda\mapsto\Psi(\lambda,\hat q_n)$ along a fixed sample path,
for all sufficiently large $n$ there exists
\[
c_n \in \big( \min\{\lambda^{\star},\lambda^{\star}_{n}\},\ \max\{\lambda^{\star},\lambda^{\star}_{n}\} \big)  
\]
such that
\[
\Psi(\lambda^{\star}_{n},\hat q_n)
=
\Psi(\lambda^{\star},\hat q_n)
+
(\lambda^{\star}_{n}-\lambda^{\star})\Psi'(c_n,\hat q_n).
\]

Rearranging and multiplying by $\sqrt{n}$ gives
\begin{equation}\label{eq:one-step}
\sqrt{n}\big(\lambda^{\star}_{n}-\lambda^{\star}\big)
=
-\,\frac{\sqrt{n}\big(\Psi(\lambda^{\star},\hat q_n)-\Psi(\lambda^{\star},q)\big)}
      {\Psi'(c_n,\hat q_n)}
+
\frac{\sqrt{n}\,\Psi(\lambda^{\star}_{n},\hat q_n)}
     {\Psi'(c_n,\hat q_n)},
\end{equation}
since $\Psi(\lambda^{\star},q)=0$.

\medskip
\emph{First term in \eqref{eq:one-step}.}
Using Lemma \ref{lem:uniform_bounds_l_g_gprime}, for all $(\lambda,x)\in I_\epsilon\times[0,1]$, $|g(\lambda,x)| \leq B_\epsilon <\infty$. Since, we know that by definition $\lambda^\star \in I_\epsilon$, hence it follows from the classical CLT,
\begin{equation}\label{eq:one-step_4}
\sqrt{n}\big(\Psi(\lambda^{\star},\hat q_n)-\Psi(\lambda^{\star},q)\big)
=
\sqrt{n}\Big(\frac{1}{n}\sum_{i=1}^n g(\lambda^{\star},X_i)
             -\mathbb{E}_q[g(\lambda^{\star},X)]\Big)
\ \xRightarrow{d}\ 
\mathcal{N}(0,\operatorname{Var}_q(g(\lambda^{\star},X))),
\end{equation}
where $\operatorname{Var}_q(g(\lambda^{\star},X)) < \infty$.

\medskip
\noindent \emph{Denominator in \eqref{eq:one-step}.}
We show that
$
\Psi'(c_n,\hat q_n)
\ \xrightarrow[]{\rm a.s.}\ 
\Psi'(\lambda^{\star},q)\in(0,\infty).$
To this end, decompose:
\[
\Psi'(c_n,\hat q_n)-\Psi'(\lambda^\star,q)
=
\big(\Psi'(c_n,\hat q_n)-\Psi'(c_n,q)\big)
+
\big(\Psi'(c_n,q)-\Psi'(\lambda^\star,q)\big).
\]

For the first term, the uniform LLN (ULLN-2) implies
\[
\sup_{c\in I_\epsilon}
\left|
\Psi'(c,\hat q_n)-\Psi'(c,q)
\right|
\ \xrightarrow[]{\rm a.s.}\ 0.
\]
Since $c_n\to\lambda^{\star}$ a.s., this term converges to $0$ a.s.

For the second term, note that
\[
g'(\lambda,x)=\frac{(x-m_o)^2}{(1-\lambda(x-m_o))^2}
\]
is continuous in $\lambda$ and uniformly bounded by $L_\epsilon$ on
$I_\epsilon\times[0,1]$. Thus, by dominated convergence,
\[
\Psi'(c_n,q)\to \Psi'(\lambda^{\star},q)
\quad\text{a.s.}
\]
Therefore,
\begin{equation}\label{eq:one-step_3}
\Psi'(c_n,\hat q_n)
\ \xrightarrow[]{\rm a.s.}\ 
\Psi'(\lambda^{\star},q).
\end{equation}
Using the definition of $\Psi(\lambda^{\star},q)$, we get that $\Psi'(\lambda^{\star},q)\in(0,\infty)$.

\medskip
\emph{Second term in \eqref{eq:one-step}.}
Recall from step 1 of this proof, on each sample path, for large $n$, $\Phi(\hat q_n,m_o)>1$, hence
\[
\Psi(\lambda^{\star}_{n},\hat q_n)=0.
\]
Therefore, by \eqref{eq:one-step_3},
\begin{equation}\label{eq:one-step_2}
\frac{\sqrt{n}\,\Psi(\lambda^{\star}_{n},\hat q_n)}
     {\Psi'(c_n,\hat q_n)}
\ \longrightarrow\ 0
\quad\text{a.s.}
\end{equation}

\medskip
\noindent \textbf{Step 3: Conclusion.}
Combining \eqref{eq:one-step_4}, \eqref{eq:one-step_3}, and \eqref{eq:one-step_2} and applying Slutsky’s theorem yields
\[
\sqrt{n}\big(\lambda^{\star}_{n}-\lambda^{\star}\big)
\ \xRightarrow{d}\
\mathcal{N}\left(0,\frac{\operatorname{Var}_q(g(\lambda^{\star},X))}{\Psi'(\lambda^{\star},q)^2}\right).
\] This completes the proof under the case when $\Phi(q,m_o)>1$.

\bigskip

 Now, we prove the results for the case when $\Phi(q,m_o)=1$.

 \noindent \textbf{Part 2: $\Phi(q,m_o)=1$}.
 We proceed in five steps.

\medskip
\noindent \textbf{Step 1: Almost sure consistency of $\lambda_n^\star$.} Fix any $\epsilon\in(0,\lambda^\star)$. Define
\[
L(\lambda):=\mathbb{E}_q[\ell(\lambda,X)].
\]
Observe that $L$ is continuous and increasing on $[0,\bar\lambda]$ and has a unique maximizer at $\lambda^\star$ (hence $L(\lambda^\star)>L(\lambda^\star-\epsilon)$), there exists $\eta>0$ such that
\begin{equation} \label{eq:suppp}
    L(\lambda^\star)-L(\lambda^\star-\epsilon)\ge 4\eta>0.
\end{equation}

Moreover, on the compact interval $[0,\lambda^\star-\epsilon]$, the function
$\ell(\lambda,x)$ is bounded and continuously differentiable in $\lambda$ for
every $x\in[0,1]$. Hence, by the same arguments used to establish \ref{ulln1},
we get,

\[
\sup_{\lambda\in[0,\lambda^\star-\epsilon]}
\Big|\mathbb{E}_{\hat q_n}[\ell(\lambda,X)]-L(\lambda)\Big|
\xrightarrow{\mathrm{a.s.}}0.
\]
In addition, using SLLN we have,
\[
\mathbb{E}_{\hat q_n}[\ell(\lambda^\star,X)]\xrightarrow{\mathrm{a.s.}} L(\lambda^\star).
\]
Hence, for all sufficiently large $n$, almost surely,
\begin{align*}
\mathbb{E}_{\hat q_n}[\ell(\lambda^\star,X)]
&\ge L(\lambda^\star)-\eta,\\
\sup_{\lambda\in[0,\lambda^\star-\epsilon]}\mathbb{E}_{\hat q_n}[\ell(\lambda,X)]
&\le \sup_{\lambda\in[0,\lambda^\star-\epsilon]}L(\lambda)+\eta
= L(\lambda^\star -\epsilon)+\eta,
\end{align*}
where the last equality uses that $L$ is increasing on $[0,\lambda^\star]$.
Combining the last two displays with \eqref{eq:suppp} yields, for all sufficiently large $n$, almost surely,
\[
\mathbb{E}_{\hat q_n}[\ell(\lambda^\star,X)]
\;\ge\;
\sup_{\lambda\in[0,\lambda^\star-\epsilon]}\mathbb{E}_{\hat q_n}[\ell(\lambda,X)]
\;\ge\;
\mathbb{E}_{\hat q_n}[\ell(\lambda,X)]
\qquad \forall \lambda\in[0,\lambda^\star-\epsilon].
\]
Hence, it follows that any maximizer $\lambda_n^\star$ must satisfy $\lambda_n^\star>\lambda^\star-\epsilon$.
As $\epsilon\downarrow 0$ and always $\lambda_n^\star\le \lambda^\star$, this gives
$\lambda_n^\star\xrightarrow[]{\rm a.s.}\lambda^\star$.

\medskip
\noindent\textbf{Step 2: Taylor series expansion.} Consider the following function:
\[
N(X_1,X_2,\ldots X_n, \lambda^\star_n )
= \sum_{i=1}^n \big(g(\lambda^\star_n,X_i)-g(\lambda^\star,X_i)\big).
\]

Since $\Phi(q,m_o)=1$, it follows that $q\{1\}=0$, and hence we may restrict
the analysis to the measure-one event $\{X_i\in[0,1)\ \forall\, i=1,2,3,\ldots\}$.
Using the definition of $g(\lambda,X)$, $N(X_1,X_2,\ldots X_n, \lambda^\star_n )$ is continuously differentiable in $\lambda^\star_n$ for any values of $X_1,X_2,\ldots X_n \in [0,1)$ and any value of $\lambda^\star_n \in \left[0, \frac{1}{1-m_o}\right]$. Hence, using Taylor expansion of $N(X_1,X_2,\ldots X_n, \lambda^\star_n )$ in $\lambda^\star_n$ around $\lambda^\star$ yields
\[
N(X_1,X_2,\ldots X_n, \lambda^\star_n )
= (\lambda^\star_n-\lambda^\star) \frac{\partial N(X_1,X_2,\ldots X_n, \lambda ) }{\partial \lambda}\bigg|_{\lambda= c(X_1,X_2,\ldots X_n, \lambda^\star_n)}.
\]
For ease of notation, let $c_n = c(X_1,X_2,\ldots X_n, \lambda^\star_n)$. Observe that
\[
\Psi(\lambda^\star_n,\hat q_n) - \Psi(\lambda^\star,\hat q_n)
= \frac{N(X_1,X_2,\ldots X_n, \lambda^\star_n ) }{n}.
\]
Hence, it follows that following holds:
\[
\Psi(\lambda^\star_n,\hat q_n) - \Psi(\lambda^\star,\hat q_n)
=
\Psi'(c_n,\hat q_n)\,(\lambda^\star_n- \lambda^\star),
\]
for some $c_n$ between $\lambda^\star_n$ and $\lambda^\star$.

Observe that, for a given $X_1,X_2,\ldots X_n \in [0,1)$, $\mathbb{E}_{\hat q_n}[\ell(\lambda,X)]$ is continuously differentiable in $\lambda$ for $\lambda \in \left[0, \frac{1}{1-m_o}\right]$. Since $\lambda^\star_n$ maximizes $\mathbb{E}_{\hat q_n}[\ell(\lambda,X)]$ on $(0,\lambda^\star]$ and this objective is concave in $\lambda$, we have the usual first-order optimality conditions:
\begin{itemize}
\item If $\lambda^\star_n\in(0,\lambda^\star)$, then
\[
\frac{\partial}{\partial\lambda}\mathbb{E}_{\hat q_n}[\ell(\lambda,X)]\Big|_{\lambda=\lambda^\star_n}
=-\Psi(\lambda^\star_n,\hat q_n)=0,
\]
and monotonicity in $\lambda$ gives $\Psi(\lambda^\star,\hat q_n)\ge 0$.
\item If $\lambda^\star_n=\lambda^\star$, we have,
\[
\frac{\partial}{\partial\lambda}\mathbb{E}_{\hat q_n}[\ell(\lambda,X)]\Big|_{\lambda=\lambda^\star}
\ge 0,
\quad\text{i.e. }-\Psi(\lambda^\star,\hat q_n)\ge 0
\ \implies  \Psi(\lambda^\star,\hat q_n)\le 0.
\]
\end{itemize}

Thus, when $\lambda^\star_n\in(0,\lambda^\star)$,
\[
\lambda^\star-\lambda^\star_n
= \frac{\Psi(\lambda^\star,\hat q_n)}{\Psi'(c_n,\hat q_n)},
\qquad
\Psi(\lambda^\star,\hat q_n)\ge 0,
\]
whereas when $\lambda^\star_n=\lambda^\star$, we trivially have
\[
\lambda^\star-\lambda^\star_n=0
\qquad\text{and}\qquad
\Psi(\lambda^\star,\hat q_n)\le 0 .
\]

Hence, following holds:
\begin{equation}\label{eq:bdry-onesided-final-psi}
\lambda^\star-\lambda^\star_n
= \frac{\big[\Psi(\lambda^\star,\hat q_n)\big]_+}{\Psi'(c_n,\hat q_n)},
\qquad [x]_+:=\max\{x,0\}.
\end{equation}

\medskip
\noindent\textbf{Step 3: Convergence of the denominator in \eqref{eq:bdry-onesided-final-psi}.} Fix $\kappa\in(0,1)$ and decompose
\begin{align*}
\Psi'(c_n,\hat q_n)- \mathbb{E}_q[g'(\lambda^\star,X)]
&=\Big(\mathbb{E}_{\hat q_n}[g'(c_n,X)]-\mathbb{E}_q[g'(c_n,X)]\Big)
   +\Big(\mathbb{E}_q[g'(c_n,X)]-\mathbb{E}_q[g'(\lambda^\star,X)]\Big)\\
&=:A_n+B_n.
\end{align*}

\smallskip
\noindent\textbf{Step 3.1: $B_n\to 0$ a.s.}
Since $c_n\to\lambda^\star$ a.s. and for each $x<1$ the map $\lambda\mapsto g'(\lambda,x)$ is continuous,
we have $g'(c_n,x)\to g'(\lambda^\star,x)$ pointwise for all $x<1$.
Moreover, for any $\lambda \in [0, \frac{1}{1-m_o}]$ and $x\in[0,1)$,
\[
0\le g'(\lambda,x)=\frac{(x-m_o)^2}{(1-\lambda(x-m_o))^2}
\le (1-m_o)^2\frac{1}{(1-x)^2}.
\]
The function $(1-m_o)^2(1-x)^{-2}$ is integrable under the assumption $\mathbb{E}_q[(1-X)^{-2}]<\infty$.
Hence, by dominated convergence,
\[
B_n=\mathbb{E}_q[g'(c_n,X)-g'(\lambda^\star,X)]\xrightarrow[]{\rm a.s.}0.
\]

\smallskip
\noindent\textbf{Step 3.2: $A_n\to 0$ a.s.}
Write $A_n=A_{n,\kappa}^{(1)}+A_{n,\kappa}^{(2)}$, where
\[
A_{n,\kappa}^{(1)}
:=\mathbb{E}_{\hat q_n}[g'(c_n,X)\mathbf 1_{\{X\le 1-\kappa\}}]
 -\mathbb{E}_q[g'(c_n,X)\mathbf 1_{\{X\le 1-\kappa\}}],
\]
\[
A_{n,\kappa}^{(2)}
:=\mathbb{E}_{\hat q_n}[g'(c_n,X)\mathbf 1_{\{X> 1-\kappa\}}]
 -\mathbb{E}_q[g'(c_n,X)\mathbf 1_{\{X> 1-\kappa\}}].
\]

\emph{(i) Handling $A_{n,\kappa}^{(1)}$.}
Using Lemma \ref{lem:uniform_bounds_l_g_gprime}, we have,
\(
\{g'(\lambda,\cdot)\mathbf 1_{\{\,\cdot\le 1-\kappa\}}:\lambda\in[0,\lambda^\star]\}
\)
is uniformly bounded and Lipschitz in $\lambda$.
Hence, using arguments similar to show \ref{ulln1}, we get,
\[
\sup_{\lambda\in[0,\lambda^\star]}
\Big|
\mathbb{E}_{\hat q_n}[g'(\lambda,X)\mathbf 1_{\{X\le 1-\kappa\}}]
-\mathbb{E}_q[g'(\lambda,X)\mathbf 1_{\{X\le 1-\kappa\}}]
\Big|
\xrightarrow[]{\rm a.s.}0,
\]
hence $A_{n,\kappa}^{(1)}\to 0$ a.s.

\emph{(ii) Handling $A_{n,\kappa}^{(2)}$.}
For all $\lambda\in [0,\lambda^\star]$ and $x\in[0,1)$,
\[
0\le g'(\lambda,x)\mathbf 1_{\{x>1-\kappa\}}
\le (1-m_o)^2\frac{1}{(1-x)^2}\mathbf 1_{\{x>1-\kappa\}}
=:(1-m_o)^2\,H_\kappa(x).
\]
By the SLLN,
\[
\mathbb{E}_{\hat q_n}[H_\kappa(X)]
\xrightarrow[]{\rm a.s.}\mathbb{E}_q[H_\kappa(X)].
\]
Therefore,
\[
\limsup_{n\to\infty}|A_{n,\kappa}^{(2)}|
\le 2(1-m_o)^2\,\mathbb{E}_q[H_\kappa(X)]
\quad\text{a.s.}
\]
Since $H_\kappa(X)\downarrow 0$ almost surely as $\kappa\downarrow 0$, and
$$0\le H_\kappa(X)\le (1-X)^{-2}$$
with $\mathbb{E}_q[(1-X)^{-2}]<\infty$, dominated convergence yields $\mathbb{E}_q[H_\kappa(X)]\to 0$.
Letting $\kappa\downarrow 0$ gives $A_n\to 0$ a.s.

\smallskip
Combining Steps 3.1 and 3.2 yields
\[
\Psi'(c_n,\hat q_n)
=\mathbb{E}_{\hat q_n}[g'(c_n,X)]
\;\xrightarrow[]{\rm a.s.}\;
\mathbb{E}_q[g'(\lambda^\star,X)].
\]
Applying the same arguments as in Step 3.1 of this proof,  via the dominated convergence theorem, we get
$\mathbb{E}_q[g'(\lambda^\star,X)] = \Psi'(\lambda^\star,q)$.

\medskip
\noindent\textbf{Step 4: CLT for the numerator in \eqref{eq:bdry-onesided-final-psi}.} Define
\[
\alpha_n:=\Psi(\lambda^\star,\hat q_n)-\Psi(\lambda^\star,q).
\]
Note that $\Psi(\lambda^\star,q)=0$, since $\Phi(q,m_o)=1$. Using,
\[
g(\lambda^\star,x)=(1-m_o)\frac{x-m_o}{1-x},
\qquad \text{and} \qquad 
g(\lambda^\star,x)^2\le (1-m_o)^2(1-x)^{-2},
\]
and the assumption $\mathbb{E}_q[(1-X)^{-2}]<\infty$ implies $\mathbb{E}_q[g(\lambda^\star,X)^2]<\infty$. Thus, by the classical CLT,
\[
\sqrt{n}\,\alpha_n
=\sqrt{n}\Big(\frac1n\sum_{i=1}^n g(\lambda^\star,X_i)-\mathbb{E}_q[g(\lambda^\star,X)]\Big)
\ \xRightarrow{d}\ \mathcal{N} \big(0,\ \operatorname{Var}_q\!\big(g(\lambda^\star,X)\big)\big),
\]
with
$
\operatorname{Var}_q\!\big(g(\lambda^\star,X)\big) < \infty.$

\medskip

\noindent\textbf{Step 5: Conclusion.} From \eqref{eq:bdry-onesided-final-psi},
\[
\sqrt{n}(\lambda^\star-\lambda^\star_n)
=\frac{\big[\sqrt{n} \Psi(\lambda^\star,\hat q_n)\big]_+}{\Psi'(c_n,\hat q_n)}
=\frac{\big[\sqrt{n} \alpha_n\big]_+}{\Psi'(c_n,\hat q_n)}.
\]
By Step~3, $\Psi'(c_n,\hat q_n)\to  \Psi'( \lambda^\star,q)\in(0,\infty)$ a.s., and by Step~4,
$$\sqrt{n} \alpha_n\xRightarrow{d} \mathcal{N}\big(0, \operatorname{Var}_q \big(g(\lambda^\star,X)\big)\big).$$
By Slutsky’s theorem and continuity of $z\mapsto z_+$,
\[
\sqrt{n}(\lambda^\star-\lambda^\star_n)
\ \xRightarrow{d} \frac{Z_+}{ \Psi'( \lambda^\star,q)},
\]

where $Z \sim \mathcal{N} \big(0, \operatorname{Var}_q \big(g(\lambda^\star,X)\big)\big)$. This completes the proof.

\end{proof}

\begin{lemma} \label{sup_lemma_A_n}
When $\Phi(q,m_o) \geq 1$,
for $c_n\in [0,{1}/({1-m_o})]$ with $c_n\to\lambda^\star$ almost surely, then under Assumption \ref{ass_1}, the following holds:
\[
A_n := \frac{1}{n}\sum_{i=1}^n g(c_n,X_i) \xrightarrow[]{\rm a.s.} 0.
\]
\end{lemma}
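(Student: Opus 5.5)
The plan is to decompose
\[
A_n=\Big(\tfrac1n\textstyle\sum_{i=1}^n g(\lambda^\star,X_i)\Big)+\tfrac1n\textstyle\sum_{i=1}^n\big(g(c_n,X_i)-g(\lambda^\star,X_i)\big)=:A_n^{(a)}+A_n^{(b)},
\]
and to treat the two regimes $\Phi(q,m_o)>1$ (Case~2, $\lambda^\star\in(0,\bar\lambda)$) and $\Phi(q,m_o)=1$ (Case~3, $\lambda^\star=\bar\lambda$) separately. In both cases $\Psi(\lambda^\star,q)=\mathbb{E}_q[g(\lambda^\star,X)]=0$: in Case~2 by the first-order condition at an interior maximizer, and in Case~3 by direct computation, since $g(\bar\lambda,x)=(1-m_o)(x-m_o)/(1-x)$ gives $\mathbb{E}_q[g(\bar\lambda,X)]=\Phi(q,m_o)-1=0$. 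So $A_n^{(a)}\to0$ a.s.\ by the SLLN once $g(\lambda^\star,X)$ is integrable. This is automatic in Case~2, where $g(\lambda^\star,\cdot)$ is bounded by Lemma~\ref{lem:uniform_bounds_l_g_gprime}. In Case~3 it follows from Assumption~\ref{ass_1}, since $|g(\bar\lambda,x)|\le (1-m_o)/(1-x)$ and $\mathbb{E}_q[(1-X)^{-2}]<\infty$.

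For $A_n^{(b)}$ in Case~2, fix $\epsilon$ with $I_\epsilon\subset(0,\bar\lambda)$. Since $c_n\to\lambda^\star$ a.s., we have $c_n\in I_\epsilon$ eventually. Lemma~\ref{lem:uniform_bounds_l_g_gprime} gives a uniform Lipschitz constant $L_\epsilon$ for $g$ in $\lambda$ on $I_\epsilon\times[0,1]$, so $|A_n^{(b)}|\le L_\epsilon|c_n-\lambda^\star|\to0$ a.s.

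Case~3 is the main obstacle, because $g$ is not uniformly Lipschitz near $x=1$ when $\lambda=\bar\lambda$. I would reuse the $\kappa$-truncation used in Lemma~\ref{lem:mom_conv_LnL} and in Step~3.2 of Lemma~\ref{sup_lemma_case_2_combined}. Fix $\kappa$ with $1-\kappa>m_o$ and split $A_n^{(b)}$ according to $\{X_i\le 1-\kappa\}$ and $\{X_i>1-\kappa\}$.

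On $\{X_i\le1-\kappa\}$, Lemma~\ref{lem:uniform_bounds_l_g_gprime} (second choice of $\mathcal R$) gives a Lipschitz bound, so that part is at most $K_\kappa|c_n-\bar\lambda|\to0$ a.s.

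On $\{X_i>1-\kappa\}$, I use that for $x>m_o$ the map $\lambda\mapsto g(\lambda,x)$ is increasing and positive, because $g'\ge0$. Hence $0\le g(\lambda,x)\le g(\bar\lambda,x)\le (1-m_o)/(1-x)$ for all $\lambda\in[0,\bar\lambda]$. The tail part is therefore bounded by $\frac{2}{n}\sum_i \frac{1-m_o}{1-X_i}\mathbf 1_{\{X_i>1-\kappa\}}$, which converges a.s.\ by the SLLN to $2(1-m_o)\mathbb{E}_q[(1-X)^{-1}\mathbf 1_{\{X>1-\kappa\}}]$. This limit is finite because $\Phi(q,m_o)=1$.

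Finally, take a countable sequence $\kappa_m\downarrow0$ and intersect the corresponding full-measure events. This gives $\limsup_n|A_n^{(b)}|\le 2(1-m_o)\mathbb{E}_q[(1-X)^{-1}\mathbf 1_{\{X>1-\kappa_m\}}]$ for every $m$. By dominated convergence the right side tends to $0$ as $m\to\infty$, so $A_n^{(b)}\to0$ a.s. Combining this with $A_n^{(a)}\to0$ gives $A_n\to0$ almost surely. The case $m_o<m(q)$ is symmetric, with $1/X$ replacing $1/(1-X)$.
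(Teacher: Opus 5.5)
Your proposal is correct and follows essentially the same route as the paper. Both split off $\frac1n\sum_i g(\lambda^\star,X_i)$, which the SLLN kills since $\Psi(\lambda^\star,q)=0$, and use the uniform Lipschitz bound on $I_\epsilon$ when $\Phi(q,m_o)>1$; when $\Phi(q,m_o)=1$, both truncate at $1-\kappa$, with the Lipschitz constant $K_\kappa$ on $\{X_i\le 1-\kappa\}$ and the envelope $(1-m_o)/(1-x)$ on the tail, before sending $\kappa\downarrow0$. Your countable $\kappa_m$ argument is a slightly more careful rendering of the paper's iterated limit, and, as you partly note, $\Phi(q,m_o)=1$ alone already makes $(1-X)^{-1}$ integrable, so the second-moment condition in Assumption~\ref{ass_1} is only a convenience in this lemma.
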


\begin{proof}
Observe that,
\[
A_n
=
\underbrace{\frac{1}{n}\sum_{i=1}^n g(\lambda^\star,X_i)}_{\text{Term 1}} + 
\underbrace{\frac{1}{n}\sum_{i=1}^n\Bigl(g(c_n,X_i)-g(\lambda^\star,X_i)\Bigr)}_{\text{Term 2}}.
\]

Below, we show that each of the two terms above converge almost surely to $0$, giving the desired convergence result.

\noindent{\bf Analyzing Term 1.} By the SLLN, we have
\begin{equation}\label{eq:Term1Bound}
\frac{1}{n}\sum_{i=1}^n g(\lambda^\star,X_i)\xrightarrow[]{\rm a.s.} \Psi(\lambda^\star, q) = 0, 
\end{equation}
where the last equality follows from the definition of $\lambda^*$ when $\Phi(q,m_o) \ge 1$ holds.

\noindent{\bf Analyzing Term 2.} We treat the two cases, $\Phi(q,m_o) > 1$ and $\Phi(q,m_o) = 1$, separately. Consider the probability $1$ set  
\[\Omega_0:=\left\{\lim_{n\to\infty} c_n=\lambda^\star\right\}.\]

\noindent\emph{Case (A): $\Phi(q,m_o) > 1$. } In this case, we have, $\lambda^\star< \bar\lambda $. Fix any $\epsilon>0$ with $I_\epsilon=[\lambda^\star-\epsilon,\ \lambda^\star+\epsilon]\subset(0,\bar\lambda)$.
Next, fix $\omega\in\Omega_0$. There exists $N(\omega)$ such that $c_n(\omega)\in I_\epsilon$ for all $n\ge N(\omega)$. For such $n$, using Lemma \ref{lem:uniform_bounds_l_g_gprime},
\[
\big|g(c_n,X_i)-g(\lambda^\star,X_i)\big|
\le L_\epsilon |c_n-\lambda^\star|\quad\text{for each }i.
\]
Hence,
\[
\left|\frac{1}{n}\sum_{i=1}^n\Big(g(c_n,X_i)-g(\lambda^\star,X_i)\Big)\right|
\le L_\epsilon |c_n-\lambda^\star| \xrightarrow[n\to\infty]{} 0
\quad\text{a.s.}
\]

\medskip
\emph{Case (B): $\Phi(q,m_o) = 1$.} Recall that in this case, $\lambda^\star={1}/({1-m_o})$ and $q(\{1\}) = 0$. Fix $\kappa \in(0,1)$ such that $1-\kappa  > m_o$ (since $m_o \in (0,1)$), and split
\begin{align*}
\frac{1}{n}\sum_{i=1}^n &\Bigl(g(c_n,X_i)-g(\lambda^\star,X_i)\Bigr)\\
&=
\underbrace{\frac{1}{n}\sum_{i=1}^n 
\mathbf{1}_{\{X_i\le 1-\kappa\}}
\Bigl(g(c_n,X_i)-g(\lambda^\star,X_i)\Bigr)}_{\text{Term A}}
 + 
\underbrace{\frac{1}{n}\sum_{i=1}^n 
\mathbf{1}_{\{X_i>1-\kappa\}}
\Bigl(g(c_n,X_i)-g(\lambda^\star,X_i)\Bigr)}_{\text{Term B}}.
\end{align*}

Using Lemma \ref{lem:uniform_bounds_l_g_gprime}, we get,
\[
|g(\lambda,x)-g(\lambda',x)|
\le K_\kappa\,|\lambda-\lambda'|
\quad\text{for all }\lambda,\lambda'\in\Bigl[0,\frac{1}{1-m_o}\Bigr], \text{ and } x\in[0,1-\kappa].
\]

Hence
\begin{equation}\label{eq:TermABound}
\left|\frac{1}{n}\sum_{i=1}^n \mathbf{1}_{\{X_i\le 1-\kappa\}}\Bigl(g(c_n,X_i)-g(\lambda^\star,X_i)\Bigr)\right|
\le K_\kappa\,|c_n-\lambda^\star| \xrightarrow[n\to\infty]{\text{a.s.}} 0.
\end{equation}

Next, since $q(\{1\}) = 0$, we have
$\{1>X>1-\kappa\} = \{1\geq X>1-\kappa\}$ almost surely, where $X \sim q$. Now, for any $c\in [0,{1}/({1-m_o})]$ and $ m_o < 1-\kappa < x<1$,
\[
|g(c,x)|=\frac{|x-m_o|}{1-c(x-m_o)}
\le \frac{|x-m_o|}{1-\frac{1}{1-m_o}(x-m_o)}
= \frac{(1-m_o)\,|x-m_o|}{1-x}.
\]

Hence, we have
\begin{align}
\left|\frac{1}{n}\sum_{i=1}^n 
\mathbf{1}_{\{X_i>1-\kappa\}}
\Bigl(g(c_n,X_i)-g(\lambda^\star,X_i)\Bigr)\right|
&\le \frac{1}{n}\sum_{i=1}^n 
\mathbf{1}_{\{1>X_i>1-\kappa\}}
\Bigl(|g(c_n,X_i)|+|g(\lambda^\star,X_i)|\Bigr) \notag\\
&\le \frac{2(1-m_o)}{n}\sum_{i=1}^n 
\mathbf{1}_{\{1>X_i>1-\kappa\}}
\frac{|X_i-m_o|}{1-X_i}.
\end{align}

Define
\[
Y_\kappa(X)
:=\frac{|X-m_o|}{1-X}\mathbf 1_{\{1>X>1-\kappa\}} .
\]

From the bound above,
\[
\left|\frac{1}{n}\sum_{i=1}^n 
\mathbf{1}_{\{X_i>1-\kappa\}}
\Bigl(g(c_n,X_i)-g(\lambda^\star,X_i)\Bigr)\right|
\le
\frac{2(1-m_o)}{n}\sum_{i=1}^n Y_\kappa(X_i).
\]
Observe that,
\[
0\le Y_\kappa(X)\le \frac{(1-m_o)}{(1-X)}.
\]
Further, ${(1-m_o)}/{(1-X)}$ is integrable since by assumption, $\mathbb E_q[{1}/{(1-X)^2}] < \infty$. Hence, by SLLN,
\[
\frac{1}{n}\sum_{i=1}^n Y_\kappa(X_i)
\xrightarrow[n\to\infty]{a.s.}\mathbb E_q[Y_\kappa(X)].
\]

Moreover, $Y_\kappa(X)\downarrow 0$ a.s. as $\kappa\downarrow 0$ hence by dominated convergence,
\[
\mathbb E_q[Y_\kappa(X)]\to 0.
\]
Therefore,
\begin{equation}\label{eq:TermBBound}
\lim_{\kappa\downarrow 0}\lim_{n\to\infty}
\left|\frac{1}{n}\sum_{i=1}^n 
\mathbf{1}_{\{X_i>1-\kappa\}}
\Bigl(g(c_n,X_i)-g(\lambda^\star,X_i)\Bigr)\right|
=0
\quad\text{a.s.}
\end{equation}

Combining the bounds from~\eqref{eq:TermABound} and~\eqref{eq:TermBBound}, we get
\begin{equation}\label{eq:Term2Bound}
\frac{1}{n}\sum_{i=1}^n\Big(g(c_n,X_i)-g(\lambda^\star,X_i)\Big)\to 0
\quad\text{a.s.}
\end{equation}

Combining~\eqref{eq:Term1Bound} and~\eqref{eq:Term2Bound}, we get the desired result, completing the proof.

\end{proof}

\begin{lemma}\label{lemm:anscombe}
Let $L_n:=\mathrm{KL}_{\inf}(\hat q_n,m_o)$ and $
L:=\mathrm{KL}_{\inf}(q,m_o)$. Define 
$$Y_n:=\sqrt{n}(L_n-L)=\sqrt{n}(\mathrm{KL}_{\inf}(\hat q_n,m_o)-\mathrm{KL}_{\inf}(q,m_o)).$$ 
Then, under Assumption \ref{ass_1}, $\{Y_n\}$ satisfies the Anscombe condition, that is,  for every $\epsilon>0$ and $\eta>0$, there exist $\delta\in(0,1)$ and $n_0 \ge 1$ such that for all $n\ge n_0$,
\[ \PP\left(\max_{\substack{k\in\mathbb N:\\ |k-n|\le n\delta}}|Y_k- Y_n|>\epsilon\right)<\eta. \]
\end{lemma}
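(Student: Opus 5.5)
We will use the decomposition $Y_n = T_{1,n}+T_{2,n}$ from the proof of Theorem~\ref{lem:kl_inf_clt}. It suffices to verify the Anscombe condition separately for $\{T_{2,n}\}$ and for $\{T_{1,n}\}$, since
\[
\max_{|k-n|\le n\delta}|Y_k-Y_n|\le \max_{|k-n|\le n\delta}|T_{2,k}-T_{2,n}| + 2\max_{|k-n|\le n\delta}|T_{1,k}|,
\]
and each piece can be given probability at most $\eta/2$ of exceeding $\epsilon/2$. In Case~1, $Y_n=T_{2,n}$ for all $n\ge N(\omega)$ because $\lambda^\star_n=\bar\lambda$ eventually. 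The contribution of paths with $N(\omega)>n(1-\delta)$ has vanishing probability, so only the partial-sum part matters there.

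\emph{Partial-sum part.} Write $S_n=\sum_{i\le n}(\ell(\lambda^\star,X_i)-L)$, so $T_{2,n}=S_n/\sqrt n$. Since the increments are i.i.d.\ with finite variance $\sigma^2$ (Lemma~\ref{lem:finiteMGF}), this is the classical Anscombe verification. We have
\[
|T_{2,k}-T_{2,n}|\le \frac{|S_k-S_n|}{\sqrt{k}}+|S_n|\Big|\tfrac1{\sqrt k}-\tfrac1{\sqrt n}\Big|.
\]
The second term is at most $C\delta\,|S_n|/\sqrt n=O_p(\delta)$. For the first term, we apply Kolmogorov's maximal inequality to the increments on the windows $[n(1-\delta),n]$ and $[n,n(1+\delta)]$. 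This gives a probability bound of order $4\delta\sigma^2/((1-\delta)\epsilon^2)$, which is less than $\eta/4$ for $\delta$ small.

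\emph{Remainder part (Cases 2 and 3).} Here we show $\max_{|k-n|\le n\delta}|T_{1,k}|\to0$ in probability. From the proof of Theorem~\ref{lem:kl_inf_clt}, $T_{1,k}=-\sqrt k(\lambda^\star_k-\lambda^\star)A_k$, and by Lemma~\ref{sup_lemma_A_n} (applied with $c_k\to\lambda^\star$ a.s.) $A_k\to0$ a.s., so $\sup_{k\ge n(1-\delta)}|A_k|\to0$ a.s. It therefore suffices to show that $\max_{|k-n|\le n\delta}\sqrt k|\lambda^\star_k-\lambda^\star|=O_p(1)$ uniformly in $n$. For this we use the representations obtained in the proof of Lemma~\ref{sup_lemma_case_2_combined}. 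In Case~2, for $k$ large,
\[
\sqrt k(\lambda^\star_k-\lambda^\star)=-\frac{\sqrt k\,\Psi(\lambda^\star,\hat q_k)}{\Psi'(c_k,\hat q_k)}.
\]
In Case~3, \eqref{eq:bdry-onesided-final-psi} gives $\sqrt k(\lambda^\star-\lambda^\star_k)=[\sqrt k\Psi(\lambda^\star,\hat q_k)]_+/\Psi'(c_k,\hat q_k)$. The denominators converge a.s.\ to $\Psi'(\lambda^\star,q)>0$, so they are eventually bounded below uniformly in $k$. The numerators are $U_k/\sqrt k$ with $U_k=\sum_{i\le k}g(\lambda^\star,X_i)$, a mean-zero i.i.d.\ sum with finite variance. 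In Case~3 the finite variance is exactly where Assumption~\ref{ass_1} enters. Hence, by Kolmogorov's maximal inequality, $\max_{k\le 2n}|U_k|/\sqrt{n}=O_p(1)$, and therefore $\max_{|k-n|\le n\delta}|U_k|/\sqrt k=O_p(1)$.

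The main obstacle is making the "eventually" statements uniform. The Taylor representation and the lower bound on the denominator hold only for $k\ge N(\omega)$ with random $N$. We handle this by intersecting with the event $\{N(\omega)\le n(1-\delta)\}$, whose probability tends to one, and by noting that the a.s.\ convergences give sup-over-tail versions. Combining both parts yields the Anscombe condition for $Y_n$.
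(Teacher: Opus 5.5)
Your proposal is correct and follows essentially the same route as the paper. You isolate the i.i.d.\ partial sum of $\ell(\lambda^\star,X_i)$ and control it with Kolmogorov's maximal inequality, then dispose of the remainder through the representation $T_{1,k}=-\sqrt{k}(\lambda_k^\star-\lambda^\star)A_k$, using $A_k\to0$ a.s., a second Kolmogorov bound giving $\max_{|k-n|\le n\delta}\sqrt{k}|\lambda_k^\star-\lambda^\star|=O_p(1)$, and the event $\{N(\omega)\le n(1-\delta)\}$ to handle the random ``eventually'' index. The only real difference is that you split $Y_k-Y_n$ directly as $(T_{2,k}-T_{2,n})+T_{1,k}-T_{1,n}$ rather than as $\sqrt{n}(L_k-L_n)+\sqrt{n}(\sqrt{k/n}-1)(L_k-L)$, which spares you the paper's separate term (B) and makes your version somewhat shorter.
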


\begin{proof}
As earlier, we only prove the result for the case $m_o>m(q)$. The other case ($m_o<m(q)$) follows analogously. Fix $\epsilon>0$ and $\eta>0$. 
The proof proceeds in 7 steps, detailed below.

\noindent\textbf{Step 1.} For any $\delta\in (0,1)$ and  integer $k$ with $|k-n|\le n\delta$, 
\[
Y_k-Y_n
=\sqrt{n}\left((L_k-L_n) + \left(\sqrt{k/n}-1 \right) (L_k-L)\right).
\]
Therefore,
\begin{align}
\max_{|k-n|\le n\delta}|Y_k-Y_n|
&\le
\sqrt{n}\max_{|k-n|\le n\delta}|L_k-L_n| +\sqrt{n}\left(\max_{|k-n|\le n\delta}|\sqrt{k/n}-1|\right)\left(\max_{|k-n|\le n\delta}|L_k-L|\right)\notag\\
&\le \sqrt{n}\max_{|k-n|\le n\delta}|L_k-L_n| +\sqrt{n}\delta \left(\max_{|k-n|\le n\delta}|L_k-L|\right), \label{eq:absorb}
\end{align}
where the last inequality follows since $|k-n|\le n\delta$ implies $k/n\in[1-\delta,1+\delta]$, and we have
\begin{equation}
\max_{|k-n|\le n\delta}\big|\sqrt{k/n}-1\big|
=
\max\{1-\sqrt{1-\delta},\sqrt{1+\delta}-1\}
=
1-\sqrt{1-\delta}
\le \delta.\notag
\end{equation}
Define
\[
M_{n,\delta}:=\max_{|k-n|\le n\delta}\sqrt{n}\,|L_k-L|.
\]
Using \eqref{eq:absorb} along with a union bound, we get that for any $\delta\in (0,1)$, the following inequality holds: 
\begin{equation}\label{eq:absorb2}
\PP\left(\max_{|k-n|\le n\delta}|Y_k-Y_n|>\epsilon\right) \le \PP\left(\sqrt{n}\max_{|k-n|\le n\delta}|L_k-L_n|>\epsilon/2\right) + \PP\left(M_{n,\delta}>\frac{\epsilon}{2\delta}\right).
\end{equation}
Thus it suffices to prove that there exists $\delta\in (0,1)$ and $n_0$ such that for all $n\ge n_0$ the following hold:
\[\label{A}
\PP\left(\sqrt{n}\max_{|k-n|\le n\delta}|L_k-L_n|>\frac{\epsilon}{2} \right) < \eta/2, \tag{A}
\]
and 
\[\label{B}\PP\left(M_{n,\delta}>\frac{\epsilon}{2\delta}\right) \le \eta/2.\tag{B} \]

We prove existence of $\delta$ and $n_0$ such that (\ref{A}) and (\ref{B}) hold, below.

\noindent\textbf{Step 2: three-term decomposition of $L_k-L_n$.}
Add and subtract $\ell(\lambda^\star,\cdot)$.
\begin{align}
L_k-L_n
&=
\underbrace{\frac1k\sum_{i=1}^k\big(\ell(\lambda_k^\star,X_i)-\ell(\lambda^\star,X_i)\big)}_{=:U_k}
-
\underbrace{\frac1n\sum_{i=1}^n\big(\ell(\lambda_n^\star,X_i)-\ell(\lambda^\star,X_i)\big)}_{=:U_n}
\notag\\
&\quad+
\underbrace{\left(\frac1k\sum_{i=1}^k \ell(\lambda^\star,X_i)-\frac1n\sum_{i=1}^n \ell(\lambda^\star,X_i)\right)}_{=:V_{k,n}}.
\label{eq:3term}
\end{align}
Hence,
\begin{equation}\label{eq:3term_max}
\sqrt{n}\max_{|k-n|\le n\delta}|L_k-L_n|
\le
\underbrace{\sqrt{n}\max_{|k-n|\le n\delta}|U_k|}_{\text{Term 1}}
+
\underbrace{\sqrt{n}|U_n|}_{\text{Term 2}}
+
\underbrace{\sqrt{n}\max_{|k-n|\le n\delta}|V_{k,n}|}_{\text{Term 3}}.
\end{equation}
\medskip
\noindent\textbf{Step 3: control Term 3}.

Let
\[
Z_i:=\ell(\lambda^\star,X_i)-\E[\ell(\lambda^\star,X)],\qquad S_t:=\sum_{i=1}^t Z_i.
\]
Then $\E[Z_i]=0$. In addition, $\Var(Z_1)<\infty$ (Lemma~\ref{lem:finiteMGF}).

Since
\[
\frac1t\sum_{i=1}^t\ell(\lambda^\star,X_i)-\E[\ell(\lambda^\star,X)]=\frac{S_t}{t},
\]
we have
\[
V_{k,n}=\frac{S_k}{k}-\frac{S_n}{n}
=\frac{S_k-S_n}{k}+S_n\Big(\frac1k-\frac1n\Big).
\]
For $|k-n|\le n\delta$, we have $k\ge (1-\delta)n$ and $|n-k|\le n\delta$, hence
\[
\sqrt{n}|V_{k,n}|
\le
\frac{\sqrt{n}}{k}|S_k-S_n|+\sqrt{n}|S_n|\frac{|n-k|}{kn}
\le
\frac{1}{(1-\delta)\sqrt{n}}|S_k-S_n|+\frac{\delta}{(1-\delta)\sqrt{n}}|S_n|.
\]
Therefore, with $a:=\epsilon/6$,
\begin{align}
\PP\!\left(\sqrt{n}\max_{|k-n|\le n\delta}|V_{k,n}|>a\right)
&\le
\PP\!\left(\max_{|k-n|\le n\delta}|S_k-S_n|>\frac{a}{2}(1-\delta)\sqrt{n}\right)
\notag\\
&\quad+
\PP\!\left(|S_n|>\frac{a}{2}\frac{1-\delta}{\delta}\sqrt{n}\right).
\label{eq:Vsplit}
\end{align}
Let $\sigma_Z^2:=\Var(Z_1)$. 

For the increment term, set $N:=\lfloor n\delta\rfloor$ and note that
\[
\max_{|k-n|\le N}|S_k-S_n|
=
\max\left\{
\max_{1\le j\le N}\Big|\sum_{i=n+1}^{n+j}Z_i\Big|,
\ \max_{1\le j\le N}\Big|\sum_{i=n-j+1}^{n}Z_i\Big|
\right\}.
\]
Hence, for any $u>0$,
\[
\PP\!\left(\max_{|k-n|\le N}|S_k-S_n|>u\right)
\le
\PP\!\left(\max_{1\le j\le N}\Big|\sum_{i=n+1}^{n+j}Z_i\Big|>u\right)
+
\PP\!\left(\max_{1\le j\le N}\Big|\sum_{i=n-j+1}^{n}Z_i\Big|>u\right).
\]
By stationarity, each term has the same bound as
$\PP\!\left(\max_{1\le j\le N}\big|\sum_{i=1}^{j}Z_i\big|>u\right)$.
By Kolmogorov's maximal inequality  \citep[Theorem 22.4]{billingsley2017probability}, for any $u>0$,
\[ 
\PP\!\left(\max_{1\le j\le N}\Big|\sum_{i=1}^{j}Z_i\Big|\ge u\right)\le \frac{N\sigma_Z^2}{u^2}.
\]
Taking $u=\frac{a}{2}(1-\delta)\sqrt{n}$ yields
\begin{equation}\label{eq:Vinc}
\PP\!\left(\max_{|k-n|\le n\delta}|S_k-S_n|>\frac{a}{2}(1-\delta)\sqrt{n}\right)
\le
2\cdot \frac{(n\delta)\sigma_Z^2}{(\frac{a}{2}(1-\delta)\sqrt{n})^2}
=
\frac{8\sigma_Z^2}{a^2(1-\delta)^2}\,\delta.
\end{equation}

For the second term in \eqref{eq:Vsplit}, using Chebyshev inequality,
\begin{equation}\label{eq:Vsn}
\PP\!\left(|S_n|>\frac{a}{2}\frac{1-\delta}{\delta}\sqrt{n}\right)
\le
\frac{n\sigma_Z^2}{(\frac{a}{2}\frac{1-\delta}{\delta}\sqrt{n})^2}
=
\frac{4\sigma_Z^2}{a^2(1-\delta)^2}\,\delta^2.
\end{equation}
Combining \eqref{eq:Vsplit}--\eqref{eq:Vinc}--\eqref{eq:Vsn} yields the explicit bound
\begin{equation}\label{eq:Vbound_final}
\PP\!\left(\sqrt{n}\max_{|k-n|\le n\delta}|V_{k,n}|>\epsilon/6\right)
\le
C_V\,\delta+C_V'\,\delta^2,
\end{equation}
for constants $C_V,C_V'$ depending only on $\sigma_Z^2$ and $\epsilon$ (and not on $n$).

\medskip
\noindent\textbf{Step 4: case analysis for Terms 1 and 2 .}

\smallskip
\noindent \textbf{Case 1: $\Phi(q,m_o)<1$.}
In this case, $\lambda^\star=\bar\lambda$.
By the SLLN,
\[
\Phi(\hat q_n,m_o)=\frac1n\sum_{i=1}^n \frac{1-m_o}{1-X_i}\ \xrightarrow{\mathrm{a.s.}}\ \Phi(q,m_o)<1.
\]
Hence, on each sample path there exists $N_0(\omega)$ such that for all $t\ge N_0(\omega)$,
$\Phi(\hat q_t,m_o)<1$, which implies $\lambda_t^\star=\bar\lambda=\lambda^\star$ for all $t\ge N_0(\omega)$.
Therefore, it follows that
$
\sqrt{n}\max_{|k-n|\le n\delta}|U_k| \xrightarrow{p} 0
\qquad\text{and}\qquad
\sqrt{n}|U_n| \xrightarrow{p} 0.
$

\medskip

\noindent\textbf{Case 2 and 3: $\Phi(q,m_o)\geq 1$.} Observe that this $U_n$ is same as $T_{1,n}$ defined in the proof of Theorem \ref{lem:kl_inf_clt}. Hence using Taylor series it follows that, we have
\[
\sqrt{n}|U_n|=\sqrt{n}\,|\lambda_n^\star-\lambda^\star|\,|A_n|,
\]
and
\[
\sqrt{n}\max_{|k-n|\le n\delta}|U_k|
\le
\frac{1}{\sqrt{1-\delta}}\Big(\max_{|k-n|\le n\delta}\sqrt{k}\,|\lambda_k^\star-\lambda^\star|\Big)\Big(\max_{|k-n|\le n\delta}|A_k|\Big),
\]
because $k\ge (1-\delta)n$ implies $\sqrt{n}\le \sqrt{k}/\sqrt{1-\delta}$. Here,
\[ A_n:=\frac1n\sum_{i=1}^n g(c_n,X_i),
\]
for some $c_n$ between $\lambda^\star_n$ and $\lambda^\star$.
Thus, to control Terms 1--2 we will show (in each case):
\begin{equation}\label{eq:Ato0_goal}
\max_{|k-n|\le n\delta}|A_k|\to 0\ \text{(a.s.)}\quad\text{and}\quad \max_{|k-n|\le n\delta}\sqrt{k}\,|\lambda_k^\star-\lambda^\star|=O_p(1).
\end{equation}

\smallskip
\noindent \textbf{Case 2: $\Phi(q,m_o)>1$.}
In this case, $\lambda^\star<\bar\lambda$ and it is the unique solution of $\Psi(\lambda^\star,q)=0$.

For Term 2, we have
\[
\sqrt{n}|U_n|=\sqrt{n}\,|\lambda_n^\star-\lambda^\star|\,|A_n|.
\]
By Lemma \ref{sup_lemma_case_2_combined} and Lemma \ref{sup_lemma_A_n}, $\sqrt{n}(\lambda_n^\star-\lambda^\star)=O_p(1)$ and, $A_n\to 0$ a.s.,
hence $\sqrt{n}|U_n|\to 0$ in probability.

For Term 1, it is enough to show:
(i) $\max_{|k-n|\le n\delta}|A_k|\to 0$ a.s., and (ii) $\max_{|k-n|\le n\delta}\sqrt{k}|\lambda_k^\star-\lambda^\star|=O_p(1)$.
Property (i) follows from $A_k\to 0$ a.s.\ and the fact that $\sup_{k\ge (1-\delta)n}|A_k|\to0$ a.s.

For (ii), using \eqref{eq:one-step}, we have

\begin{equation}
\sqrt{k}\big(\lambda^{\star}_{k}-\lambda^{\star}\big)
=
-\,\frac{\sqrt{k}\big(\Psi(\lambda^{\star},\hat q_k)-\Psi(\lambda^{\star},q)\big)}
      {\Psi'(c_k,\hat q_k)}
+
\frac{\sqrt{k}\,\Psi(\lambda^{\star}_{k},\hat q_k)}
     {\Psi'(c_k,\hat q_k)},
\qquad
\end{equation}
since $\Psi(\lambda^{\star},q)=0$, for some $c_k \in (\lambda^\star_k, \lambda^\star)$.

Recall that under this case, for all sufficiently large $k$, 
$\Psi(\lambda^{\star}_{k},\hat q_k)=0$, and that
$\Psi'(c_k,\hat q_k) \to \Psi'(\lambda^\star,q) \in (0,\infty)$ almost surely (see Proof of Lemma \ref{sup_lemma_case_2_combined}).
First, this implies that $\Psi'(c_k,\hat q_k)$ is bounded below by a positive constant $1/C$
for all sufficiently large $k$ on each sample path. Consequently, we obtain
\[
\sqrt{k}\,|\lambda_k^\star-\lambda^\star|
\le C\,\left|\frac{1}{\sqrt{k}}\sum_{i=1}^k\Big(g(\lambda^\star,X_i)-\E[g(\lambda^\star,X)]\Big)\right|.
\]

Let $\widetilde Z_i:=g(\lambda^\star,X_i)-\E[g(\lambda^\star,X)]$ and $\widetilde S_t:=\sum_{i=1}^t \widetilde Z_i$.
Then $\E[\widetilde Z_i]=0$ and $\Var(\widetilde Z_1)<\infty$ (see Proof of Lemma \ref{sup_lemma_case_2_combined}).
By Kolmogorov's maximal inequality, for any $M>0$ and any $n\ge1$,
\[
\PP\!\left(\max_{t\le 2n}|\widetilde S_t|>M\sqrt{n}\right)
\le \frac{\Var(\widetilde S_{2n})}{M^2 n}
= \frac{2n\,\Var(\widetilde Z_1)}{M^2 n}
= \frac{2\,\Var(\widetilde Z_1)}{M^2}.
\]
 For all $k$ with $|k-n|\le n\delta$ we have $k\ge (1-\delta)n$ and $k\le 2n$, hence
\[
\max_{|k-n|\le n\delta}\frac{|\widetilde S_k|}{\sqrt{k}}
\le \frac{1}{\sqrt{(1-\delta)n}}\max_{t\le 2n}|\widetilde S_t|.
\]
Therefore, for any $M>0$ and any $n\ge1$,
\[
\PP\!\left(
\max_{|k-n|\le n\delta}\frac{|\widetilde S_k|}{\sqrt{k}} > M
\right)
\le
\PP\!\left(
\max_{t\le 2n}|\widetilde S_t| > M\sqrt{(1-\delta)n}
\right)
\le
\frac{2\,\Var(\widetilde Z_1)}{(1-\delta)M^2}.
\]
Combining this with the display above yields, for any $M>0$,
\[
\PP\!\left(
\max_{|k-n|\le n\delta}\sqrt{k}\,|\lambda_k^\star-\lambda^\star| > C M
\right)
\le
\PP\!\left(
\max_{|k-n|\le n\delta}\frac{|\widetilde S_k|}{\sqrt{k}} > M
\right)
\le
\frac{2\,\Var(\widetilde Z_1)}{(1-\delta)M^2}.
\]
In particular, for any $\epsilon>0$, choosing $M=\sqrt{\frac{2\,\Var(\widetilde Z_1)}{(1-\delta)\epsilon}}$ gives for large $n$,
\[
\PP\!\left(
\max_{|k-n|\le n\delta}\sqrt{k}\,|\lambda_k^\star-\lambda^\star| > C M
\right)\le \epsilon,
\]
which implies $\max_{|k-n|\le n\delta}\sqrt{k}|\lambda_k^\star-\lambda^\star|=O_p(1)$, proving $\sqrt{n}\max_{|k-n|\le n\delta}|U_k|$ $\to0$ in probability.

\smallskip
\noindent \textbf{Case 3: $\Phi(q,m_o)=1$.}
In this case, $\lambda^\star=\bar\lambda$ and $\Psi(\lambda^\star,q)=0$. Using Lemma \ref{sup_lemma_case_3_combined}, we also have, 
\begin{equation}\label{eq:case3_onesided}
\lambda^\star-\lambda_k^\star=\frac{[\Psi(\lambda^\star,\hat q_k)- \Psi(\lambda^\star,q)]_+}{\Psi'(c_k,\hat q_k)},
\qquad [x]_+:=\max\{x,0\},
\end{equation}

and $\Psi'(c_k,\hat q_k)\to \E[g'(\lambda^\star,X)] \in (0,\infty)$ a.s.

Now, recall Term 2,
\[
\sqrt{n}|U_n|=\sqrt{n}\,|\lambda_n^\star-\lambda^\star|\,|A_n|.
\]
By Lemma \ref{sup_lemma_case_3_combined} and Lemma \ref{sup_lemma_A_n}, we have  $\sqrt{n}|\lambda_n^\star-\lambda^\star|=O_p(1)$, and $A_n\to0$ a.s.,
hence $\sqrt{n}|U_n|\to0$ in probability.

Term 1: We already know that $A_k\to0$ a.s., hence $\max_{|k-n|\le n\delta}|A_k|\to0$ a.s.
It remains to show $\max_{|k-n|\le n\delta}\sqrt{k}|\lambda_k^\star-\lambda^\star|=O_p(1)$.
From \eqref{eq:case3_onesided}  for all large $k$, using proofs  similar to case 2, we have
\[
\sqrt{k}|\lambda_k^\star-\lambda^\star|
\le C\, \sqrt{k}\big|\Psi(\lambda^\star,\hat q_k) - \Psi(\lambda^\star,q)\big|
=
C\,\left|\frac{1}{\sqrt{k}}\sum_{i=1}^k g(\lambda^\star,X_i) - \E[g(\lambda^\star,X)]\right|.
\]

Similar to the proof of case 2, using Kolmogorov inequality, we get that Term 1, i.e.,  $ \sqrt{n}\max_{|k-n|\le n\delta}|U_k| \to0$ in probability.

\medskip
\noindent\textbf{Step 5: Handling term (A).}
By \eqref{eq:3term_max} and the union bound,
\begin{align*}
\PP\!\left(\sqrt{n}\max_{|k-n|\le n\delta}|L_k-L_n|>\epsilon/2\right)
&\le
\PP\!\left(\sqrt{n}\max_{|k-n|\le n\delta}|U_k|>\epsilon/6\right)
+\PP\!\left(\sqrt{n}|U_n|>\epsilon/6\right)\\
&\quad+
\PP\!\left(\sqrt{n}\max_{|k-n|\le n\delta}|V_{k,n}|>\epsilon/6\right).
\end{align*}
Recall that from step 4, we have $\sqrt{n}\,|U_n|\to 0$ in probability and
$\sqrt{n}\max_{|k-n|\le n\delta}|U_k|\to 0$ in probability. Therefore,
for each fixed $\delta\in(0,1)$ and each $\epsilon>0$,
\[
\lim_{n\to\infty}\PP\!\left(\sqrt{n}|U_n|>\epsilon/6\right)=0,
\qquad \text{and} \qquad
\lim_{n\to\infty}\PP\!\left(\sqrt{n}\max_{|k-n|\le n\delta}|U_k|>\epsilon/6\right)=0.
\]
For the third probability, \eqref{eq:Vbound_final} yields, for every $\delta\in(0,1)$ and every $n\ge1$,
\[
\PP \left(\sqrt{n}\max_{|k-n|\le n\delta}|V_{k,n}|>\epsilon/6\right)\le C_V\delta+C_V'\delta^2.
\]
Now fix $\epsilon>0$ and $\eta>0$, and choose $\delta\in(0,1)$ such that
\[
C_V\delta+C_V'\delta^2<\eta/4.
\]
With this choice of $\delta$, by the limits above there exists $n_0=n_0(\delta,\epsilon,\eta)$ such that for all $n\ge n_0$,
\[
\PP \left(\sqrt{n}\max_{|k-n|\le n\delta}|U_k|>\epsilon/6\right)<\eta/8,
\qquad \text{and} \qquad
\PP \left(\sqrt{n}|U_n|>\epsilon/6\right)<\eta/8.
\]
Combining these bounds, we get that  for all $n\ge n_0$,
\[
\PP\!\left(\sqrt{n}\max_{|k-n|\le n\delta}|L_k-L_n|>\epsilon/2\right)
<
\eta/8+\eta/8+\eta/4
=\eta/2,
\]
which proves (A).

\medskip
\noindent\textbf{Step 6: Handling term (B).}
Fix $\epsilon>0$ and $\eta>0$. Let $\delta\in(0,1)$ be arbitrary (to be chosen later). 
Recall
\[
M_{n,\delta}:=\max_{|k-n|\le n\delta}\sqrt{n}\,|L_k-L|.
\]
For $|k-n|\le n\delta$ we have $k\in[(1-\delta)n,(1+\delta)n]\subset[1,2n]$ for all large $n$.
Moreover,
\[
\sqrt{n}|L_k-L|
\le
\sqrt{n}|U_k|
+
\sqrt{n}\left|\frac1k\sum_{i=1}^k \ell(\lambda^\star,X_i)-\E[\ell(\lambda^\star,X)]\right|
=
\sqrt{n}|U_k|+\sqrt{n}\left|\frac{S_k}{k}\right|,
\]
where $S_k:=\sum_{i=1}^k Z_i$ and $Z_i:=\ell(\lambda^\star,X_i)-\E[\ell(\lambda^\star,X)]$ with
$\E[Z_i]=0$ and $\Var(Z_1)=:\sigma_Z^2<\infty$. Hence,
\[
M_{n,\delta}
\le
\underbrace{\max_{|k-n|\le n\delta}\sqrt{n}|U_k|}_{=:M^{(U)}_{n,\delta}}
+
\underbrace{\max_{|k-n|\le n\delta}\sqrt{n}\left|\frac{S_k}{k}\right|}_{=:M^{(S)}_{n,\delta}}.
\]

Therefore,
\begin{equation}\label{eq:B_split_prob}
\PP\!\left(M_{n,\delta}>\frac{\epsilon}{2\delta}\right)
\le
\PP\!\left(M^{(U)}_{n,\delta}>\frac{\epsilon}{4\delta}\right)
+
\PP\!\left(M^{(S)}_{n,\delta}>\frac{\epsilon}{4\delta}\right).
\end{equation}

For $k \in \mathbb{Z}^{+}$ such that $|k-n|\le n\delta$ we have $k\ge(1-\delta)n$, so for all large $n$,
\[
\sqrt{n}\left|\frac{S_k}{k}\right|
=
\frac{n}{k}\left|\frac{S_k}{\sqrt{n}}\right|
\le
\frac{1}{1-\delta}\left|\frac{S_k}{\sqrt{n}}\right|
\le
\frac{1}{1-\delta}\max_{t\le 2n}\left|\frac{S_t}{\sqrt{n}}\right|.
\]
Hence, for any $a>0$ and all large $n$,
\[
\PP\!\left(M^{(S)}_{n,\delta}>a\right)
\le
\PP\!\left(
\max_{t\le 2n}|S_t|>a(1-\delta)\sqrt{n}
\right).
\]
By Kolmogorov's maximal inequality, for any $x>0$ and any $n\ge1$,
\[
\PP\!\left(\max_{t\le 2n}|S_t|>x\right)\le \frac{\Var(S_{2n})}{x^2}
=
\frac{2n\sigma_Z^2}{x^2}.
\]
Applying this with $x=a(1-\delta)\sqrt{n}$ yields, for all large $n$,
\[
\PP\!\left(M^{(S)}_{n,\delta}>a\right)
\le
\frac{2\sigma_Z^2}{a^2(1-\delta)^2}.
\]
Now take $a=\epsilon/(4\delta)$. Then for all large $n$,
\begin{equation}\label{eq:B_S_bound}
\PP\!\left(M^{(S)}_{n,\delta}>\frac{\epsilon}{4\delta}\right)
\le
\frac{2\sigma_Z^2}{(1-\delta)^2}\cdot \frac{16\delta^2}{\epsilon^2}
=
\frac{32\sigma_Z^2}{\epsilon^2}\cdot \frac{\delta^2}{(1-\delta)^2}.
\end{equation}
Choose $\delta_0\in(0,1/2)$ such that for all $\delta\in(0,\delta_0)$,
\[
\frac{32\sigma_Z^2}{\epsilon^2}\cdot \frac{\delta^2}{(1-\delta)^2}
\le \frac{\eta}{4}.
\]
Then, for any such $\delta$, \eqref{eq:B_S_bound} gives for all large $n$,
\begin{equation}\label{eq:B_S_eta}
\PP\!\left(M^{(S)}_{n,\delta}>\frac{\epsilon}{4\delta}\right)\le \frac{\eta}{4}.
\end{equation}

\medskip

In Case 1, $M^{(U)}_{n,\delta}=0$ for all large $n$ a.s., so
\[
\PP\!\left(M^{(U)}_{n,\delta}>\frac{\epsilon}{4\delta}\right)=0
\quad\text{for all large $n$}.
\]
In Cases 2 and 3, we have $M^{(U)}_{n,\delta}\to0$ in probability (Step~4). Hence, for the fixed
$\delta\in(0,\delta_0)$ chosen above, there exists $N(\delta,\epsilon,\eta)$ such that for all $n\ge N(\delta,\epsilon,\eta)$,
\begin{equation}\label{eq:B_U_eta}
\PP\!\left(M^{(U)}_{n,\delta}>\frac{\epsilon}{4\delta}\right)\le \frac{\eta}{4}.
\end{equation}

\medskip
\noindent\emph{Combine.}
Combining \eqref{eq:B_split_prob}, \eqref{eq:B_S_eta}, and \eqref{eq:B_U_eta}, for any $\delta\in(0,\delta_0)$ and all $n\ge N(\delta,\epsilon,\eta)$,
\[
\PP\!\left(M_{n,\delta}>\frac{\epsilon}{2\delta}\right)
\le
\frac{\eta}{4}+\frac{\eta}{4}
=
\frac{\eta}{2}.
\]
This proves (B).

\medskip
\noindent\textbf{Step 7: Conclusion.}
Fix $\epsilon>0$ and $\eta>0$.

First, by (A) (proved in Step~5), choose $\delta_A\in(0,1)$ and $n_A$ such that for all $n\ge n_A$,
\[
\PP\!\left(\sqrt{n}\max_{|k-n|\le n\delta_A}|L_k-L_n|>\epsilon/2\right)<\eta/2.
\]

Second, by (B) (proved in Step~6), choose $\delta_0\in(0,1/2)$ sufficiently small so that for some $n_B$ and all $n\ge n_B$,
\[
\PP\!\left(M_{n,\delta_0}>\frac{\epsilon}{2\delta_0}\right)<\eta/2.
\]

Now set
\[
\delta:=\min\{\delta_A,\delta_0\}.
\]
Since shrinking $\delta$ can only decrease the maxima over the set $\{|k-n|\le n\delta\}$, the two bounds above remain valid (possibly with the same or smaller probabilities) when $\delta_A$ and $\delta_0$ are replaced by $\delta$. Hence, letting
\[
n_1:=\max\{n_A,n_B\},
\]
we have for all $n\ge n_1$,
\[
\PP\!\left(\sqrt{n}\max_{|k-n|\le n\delta}|L_k-L_n|>\epsilon/2\right)<\eta/2,
\qquad
\PP\!\left(M_{n,\delta}>\frac{\epsilon}{2\delta}\right)<\eta/2.
\]
Therefore, using \eqref{eq:absorb2} and the union bound, for all $n\ge n_1$,
\[
\PP\!\left(\max_{|k-n|\le n\delta}|Y_k-Y_n|>\epsilon\right)
\le
\PP\!\left(\sqrt{n}\max_{|k-n|\le n\delta}|L_k-L_n|>\epsilon/2\right)
+
\PP\!\left(M_{n,\delta}>\frac{\epsilon}{2\delta}\right)
<\eta.
\]
This completes the proof.

\end{proof}

\section{Discussion on Assumption \ref{ass_1}} \label{sec_ass}

We now show that Assumption~\ref{ass_1} is mild, by demonstrating that it holds across several
commonly studied distribution families. In the discussion below, we focus on the case \(m_o>m(q)\); the case \(m_o<m(q)\) follows by the symmetric argument under the transformation \(X\mapsto 1-X\).

\medskip
\noindent\textbf{Distributions with an atom at $1$ (e.g., Bernoulli).}
If $q(\{1\}) > 0$, then $\mathbb{E}_q\!\left[\frac{1-m_o}{1-X}\right] =\infty$ for every $m_o\in(0,1)$. Hence, Assumption~\ref{ass_1} is vacuously satisfied for such distributions, and in particular, for the entire Bernoulli family.

\medskip
\noindent\textbf{Uniform distribution.}
Now, we consider $q = \mathrm{Uniform}[0,1]$. For this,
\[
\mathbb{E}_q\!\left[\frac{1-m_o}{1-X}\right]
  \;=\;
  (1-m_o)\int_0^1 \frac{\mathrm{d}x}{1-x}
  \;=\; \infty.
\]
Hence, Assumption~\ref{ass_1} is vacuously satisfied for $\mathrm{Uniform}[0,1]$.

The same argument applies for the \textbf{Truncated Normal} distribution on $[0,1]$.

\medskip
\noindent\textbf{Beta family.}
For $q=\mathrm{Beta}(a,b)$ with $a,b>0$, we have $q(\{1\})=0$ and
\[
\mathbb{E}_q\!\left[\frac{1-m_o}{1-X}\right]
  \;=\;
  (1-m_o)\,\frac{a+b-1}{b-1}, \qquad b>1,
\]
and $\infty$ for $b\le 1$, so Assumption~\ref{ass_1} holds vacuously for $b\le 1$.
For $b>2$, $\mathbb{E}_q[1/(1-X)^2]<\infty$, so Assumption~\ref{ass_1} holds trivially as well.
It remains to consider $b\in(1,2]$, where
$\mathbb{E}_q\!\left[\frac{1-m_o}{1-X}\right]=1$ if and only if
\[
  m_o = m_o^\star(a,b) := \frac{a}{a+b-1},
\]
and at this $m_o$ one has $\mathbb{E}_q[1/(1-X)^2]=\infty$.
Hence Assumption~\ref{ass_1} is \textbf{violated for exactly one value of $m_o$},
namely $m_o^\star(a,b)$.

In summary, for a given $\mathrm{Beta}(a,b)$ distribution, Assumption~\ref{ass_1} always holds except for a unique critical value of $m_o$ in the case \(m_o>m(q)\). The case \(m_o<m(q)\) follows symmetrically by applying the same reasoning to \(1-X\). Thus, this assumption holds for several commonly studied families of distributions with bounded support.

Hence, we believe Assumption~\ref{ass_1} is a mild technical condition.

\section{Discussion on practical alternative of $\beta(n,\alpha)$} \label{appen_practical_beta}
The choice $\beta(n, \alpha) = \log(1/\alpha)$ has both empirical and some theoretical support. 
The theoretical justification for the leading term comes from the null asymptotics of the
$\mathrm{KL}_{\inf}$ statistic. This statistic can be viewed as a (nonparametric) log-likelihood ratio, and a version of Wilks' theorem can be proven for it. While this has been observed empirically earlier, this asymptotic was very recently established theoretically by \citet[Theorem~5.1]{wang2026almost}. Specifically, the mentioned reference establishes that, under
the null, $2n\,\mathrm{KL}_{\inf}(\hat{q}_n, m_o)$ converges weakly to a $\chi^2(1)$
distribution as $n \to \infty$. For small $\alpha$, the stopping time $\tau_\alpha$ is
large, so this asymptotic regime is precisely the relevant one. By the $\chi^2(1)$ limit,
controlling $P_{\mathrm{null}}(n\,\mathrm{KL}_{\inf}(\hat{q}_n, m_o) \geq \beta(n,\alpha)) \leq
\alpha$ requires $2\beta(n,\alpha)$ to approximate the $(1-\alpha)$ quantile of $\chi^2(1)$.
Since $P(\chi^2(1) > x) \approx 2\Phi(-\sqrt{x})$ for large $x$, this quantile is
approximately $2\log(1/\alpha)$ for small $\alpha$, yielding $\beta(n,\alpha) \approx \log(1/\alpha)$
as the natural leading term.

However, this $\chi^2$-based argument is inherently asymptotic in $n$ and therefore does not, by itself, guarantee type-I error control at any finite sample size or finite $\alpha$. The adjusted threshold $\beta(n, \alpha) = 1 + \log(\frac{2(1+n)}{\alpha})$ precisely compensates for this limitation. First, the additional terms correct for the asymptotic approximation and ensure that the above bound holds non-asymptotically, i.e., for all finite $\alpha$ and all sample sizes. Second, this correction also yields a \emph{time-uniform} (in $n$) guarantee, meaning that the type-I error control holds simultaneously over all (possibly random) stopping times.

Empirically, we observe that the $\chi^2(1)$ approximation becomes accurate already at practically relevant levels such as $\alpha = 0.05$ and $\alpha = 0.01$, which further supports $\log(1/\alpha)$ as the dominant term governing the choice of $\beta(n,\alpha)$.

\end{document}